\documentclass[english]{article}
\usepackage[T1]{fontenc}
\usepackage[utf8]{inputenc}
\usepackage{colortbl}
\usepackage{array}
\usepackage{float}
\usepackage{booktabs}
\usepackage{multirow}
\usepackage{amsmath}
\usepackage{amsthm}
\usepackage{amssymb}
\usepackage{rotating}
\usepackage[authoryear]{natbib}

\makeatletter

\providecommand{\tabularnewline}{\\}
\floatstyle{ruled}
\newfloat{algorithm}{tbp}{loa}
\providecommand{\algorithmname}{Algorithm}
\floatname{algorithm}{\protect\algorithmname}

\usepackage{fullpage}
\usepackage[utf8]{inputenc} 
\usepackage[T1]{fontenc}    
\usepackage{hyperref}       
\usepackage{url}            
\usepackage{booktabs}       
\usepackage{xcolor}         
\usepackage{algorithmic} 

\makeatother

\theoremstyle{plain}
\newtheorem{thm}{\protect\theoremname}
\newtheorem{lem}[thm]{\protect\lemmaname}
\theoremstyle{definition}
\newtheorem{defn}[thm]{\protect\definitionname}
\theoremstyle{remark}
\newtheorem{claim}[thm]{\protect\claimname}
\theoremstyle{plain}
\newtheorem{cor}[thm]{\protect\corollaryname}
\usepackage{babel}
\providecommand{\claimname}{Claim}
\providecommand{\corollaryname}{Corollary}
\providecommand{\definitionname}{Definition}
\providecommand{\lemmaname}{Lemma}
\providecommand{\theoremname}{Theorem}

\begin{document}
\global\long\def\threshold{\mathrm{Threshold}}%
\global\long\def\N{{\cal N}}%
\global\long\def\R{\mathbb{R}}%

\global\long\def\diag{\mathrm{diag}}%
\global\long\def\rank{\mathrm{rank}}%
\global\long\def\range{\mathrm{range}}%

\global\long\def\cov{\mathrm{Cov}}%
\global\long\def\E{\mathbb{E}}%
\global\long\def\poly{\mathrm{poly}}%
\global\long\def\sign{\mathrm{sign}}%
\global\long\def\Head{\mathrm{Head}}%
\global\long\def\Tail{\mathrm{Tail}}%

\global\long\def\Lap{\operatorname{Lap}}%

\global\long\def\sym{\operatorname{sym}}%

\global\long\def\Clip{\operatorname{Clip}}%

\global\long\def\tr{\operatorname{tr}}%

\global\long\def\gclip{\mathcal{G}_{\mathrm{clip}}}%

\global\long\def\gmart{\mathcal{G}_{\mathrm{mart}}}%

\global\long\def\ginit{\mathcal{G}_{\mathrm{init}}}%

\global\long\def\calT{\mathcal{T}}%

\global\long\def\calF{\mathcal{F}}%

\global\long\def\calN{\mathcal{N}}%

\allowdisplaybreaks
\title{Differentially Private Preconditioned Power Iteration Method for Principal Component Analysis that Adapts to the Coherence
}
\author{
  Duy V. Le\thanks{Department of Computer Science, Boston University, \texttt{duyle@bu.edu}.} \and
  Ta Duy Nguyen\thanks{\texttt{taduy@bu.edu}.} \and
  Alina Ene\thanks{Department of Computer Science, Boston University, \texttt{aene@bu.edu}.} \and
  Huy L. Nguyen\thanks{Khoury College of Computer and Information Science, Northeastern University, \texttt{hu.nguyen@northeastern.edu}.}
}
\date{}
\maketitle
\begin{abstract}
In Principal Component Analysis (PCA), we are given as input a matrix
$A\in\mathbb{R}^{n\times d}$ and an integer $k\leq d$, and the goal
is to find the top $k$ right-singular vectors of $A$. We consider
the setting where each row of $A$ is the private data of an individual,
and we design an $\left(\epsilon,\delta\right)$-differentially private
algorithm whose utility guarantee improves with the coherence of the
input matrix. Prior works were either restricted to the case $k=1$
or ensured significantly weaker privacy protections. Our algorithm
is based on a preconditioned variant of the power iteration method
that is designed to significantly reduce the noise needed to ensure
privacy.
\end{abstract}

\section{Introduction}

Principal component analysis (PCA) is a fundamental tool for dimensionality
reduction, data visualization, and low-rank approximation \citep{blum2020foundations}.
Given a data matrix $A\in\mathbb{R}^{n\times d}$, PCA finds a low-dimensional
subspace that captures as much of the variation in the data as possible.
If $A=U\Sigma V^{\top}$ is the singular value decomposition of $A$,
the optimal rank-$k$ subspace is spanned by the top $k$ right singular
vectors $V_{k}=[v_{1}\ \cdots\ v_{k}]$. It is also the rank $k$
subspace that maximizes the total squared lengths of the projections
of the data.

When the rows of $A$ contain information about individuals, releasing
a subspace computed from $A$ can reveal information about a particular
row. This motivates differentially private algorithms for PCA. We
consider row-level privacy, where each row represents the data of
one individual. We assume that every row has $\ell_{2}$ norm at most
$1$, and two matrices are adjacent if one row is replaced. Normalizing
the lengths of the rows is natural because we would like to measure
the correlation with the entire dataset and no single data point should
have an outsize contribution. The goal is to estimate $V_{k}$ while
protecting the privacy of every individual row. In this model, \citet{DBLP:conf/stoc/DworkTT014}
give a private rank-$k$ PCA algorithm whose subspace error scales
with $\widetilde{O}\left(\sqrt{d}\right)$. Thus, the error can be
large when the number of dimensions $d$ is large. This motivates
utility guarantees that exploit additional structure of the input
matrix. One such structural parameter is coherence, which is related
to the largest absolute entries of the left singular vectors $U$.
Intuitively, coherence captures the difficulty in privatizing the
singular vectors because it measures the maximum correlation between
the singular vectors and individual data rows. The more the singular
vectors reveal about one row, the more perturbation we need to protect
privacy. On the other hand, for many natural datasets, such as when
the data is close to i.i.d. data generated from a generative model,
the coherence can be shown to be small. Previous work has shown that
coherence-dependent guarantees can substantially improve on worst-case
bounds on low-coherence data \citep{DBLP:conf/stoc/HardtR12,DBLP:conf/stoc/HardtR13,DBLP:conf/nips/HardtP14,DBLP:conf/colt/BalcanDWY16,NicolasSMMC25,dOrsiN26,NguyenEN26}.

However, most earlier coherence-dependent guarantees were obtained
under privacy models different from the row-level $\ell_{2}$ model
considered in our paper. For example, \citet{DBLP:conf/stoc/HardtR13}
consider entry-level adjacency, under which neighboring matrices differ
in a single entry by at most $1$. Subsequent rank-$k$ results use
structured adjacency models that, when restricted to a one-row change,
impose a $\ell_{1}$-norm restriction on the changed row \citep{NicolasSMMC25,dOrsiN26}.
These models do not cover all neighboring datasets in our setting,
since a row $a$ satisfying $\|a\|_{2}\leq1$ can have $\|a\|_{1}$
as large as $\sqrt{d}$. \citet{NguyenEN26} recently gave the first
coherence-adaptive guarantee under row-level $\ell_{2}$ privacy.
However, their algorithm and analysis are specific to estimating a
single leading singular vector, leaving the analogous rank-$k$ problem
open. 

In this work, we give a coherence-adaptive algorithm for estimating
the top-$k$ right singular subspace under row-level $\ell_{2}$ differential
privacy. Our result extends the coherence-adaptive utility guarantee
of \citet{NguyenEN26} from the leading singular vector to the top-$k$
singular subspace while retaining the same privacy model. Our algorithm
is based on a preconditioned and clipped variant of block power iteration.
It does not require the coherence or the ratio $\sigma^{2}_{k+1}/\sigma^{2}_{k}$
as input, and its utility improves when the input has low coherence.
The prior work is also suboptimal when the ratio $\sigma^{2}_{k+1}/\sigma^{2}_{k}$
is very close to 1 whereas our algorithm remains near optimal for
all regimes of the ratio. 

\subsection{Our contributions}

\begin{table*}
\caption{Comparison of deterministic rank-$k$ PCA guarantees. The error is
$\|(I-\widehat{X}\widehat{X}^{\top})V_{k}\|^{2}_{2}=\sin^{2}\left(\theta_{\max}\right)$
where $\theta_{\max}$ is the largest principal angle between the
column spaces of $\widehat{X}$ and $V_{k}$. Poly-logarithmic factors
in $n$, $d$, $\frac{1}{\epsilon}$, $\frac{1}{\delta}$ are suppressed.
We let $\Delta_{k}:=\sigma^{2}_{k}-\sigma^{2}_{k+1}.$}\label{tab:result}

\begin{centering}
\begin{tabular*}{1\textwidth}{@{\extracolsep{\fill}}>{\raggedright}m{0.007\textwidth}>{\raggedright}m{0.08\textwidth}>{\raggedright}m{0.075\textwidth}>{\raggedright}m{0.36\textwidth}>{\raggedright}m{0.3\textwidth}}
\toprule 
 & Algorithm & Privacy model & $\sin^{2}$-Error & Condition\tabularnewline
\midrule
\midrule 
\multirow{4}{0.007\textwidth}[-0.8cm]{\begin{turn}{90}
\raggedright{}Rank $k$
\end{turn}} & \citet{DBLP:conf/nips/HardtP14} & Entry & $\widetilde{O}\left(\frac{k^{3}\sigma^{4}_{k}\max\left\{ n\|U\|^{2}_{\infty},d\|V\|^{2}_{\infty}\right\} }{\epsilon^{2}\Delta^{3}_{k}}\right)$\footnotemark[2] & None\tabularnewline
\cmidrule{2-5}
 & \citet{dOrsiN26} & See note\footnotemark[1] & $\widetilde{O}\left(\frac{\Delta^{2}_{dn}\sigma^{2}_{k}k\max\left\{ n\|U\|^{2}_{\infty},d\|V\|^{2}_{\infty}\right\} }{\epsilon^{2}\Delta^{2}_{k}}\right)$\footnotemark[2] & None\tabularnewline
\cmidrule{2-5}
 & \citet{DBLP:conf/stoc/DworkTT014} & Row-$\ell_{2}$ & $\widetilde{O}\left(\frac{d}{\epsilon^{2}\Delta^{2}_{k}}\right)$ & $\Delta_{k}=\widetilde{\Omega}\left(\frac{\sqrt{d}}{\epsilon}\right)$\tabularnewline
\cmidrule{2-5}
 & Our algorithm & Row-$\ell_{2}$ & $\widetilde{O}\left(\frac{dk\sigma^{2}_{k}\|U\|^{2}_{\infty}}{\epsilon^{2}\Delta^{2}_{k}}\right)$ & $\Delta_{k}=\widetilde{\Omega}\Big(\frac{\sqrt{d}}{\epsilon}+\left(\frac{\sigma_{k}\sqrt{k}}{\epsilon}\right)^{2/3}+\left(\frac{k\sigma^{2}_{k}\left\Vert U\right\Vert _{\infty}}{\epsilon}\right)^{2/3}\Big)$\tabularnewline
\midrule 
\multirow{2}{0.007\textwidth}[0.3cm]{\begin{turn}{90}
Rank 1
\end{turn}} & \citet{NguyenEN26} & Row-$\ell_{2}$ & $\widetilde{O}\left(\frac{\left(\min\{4n,d\sigma^{2}_{1}\}+\frac{\sigma^{4}_{1}}{\Delta_{1}}+d\Delta_{1}\right)\|U\|^{2}_{\infty}}{\epsilon^{2}\Delta^{2}_{1}}\right)$\footnotemark[3] & $\Delta_{1}=\widetilde{\Omega}\Big(\left(\frac{\sigma^{2}_{1}\left\Vert U\right\Vert _{\infty}}{\epsilon}\right)^{2/3}+\left(\frac{\sigma_{1}}{\epsilon}\right)^{2/3}\Big)$\footnotemark[3]\tabularnewline
\cmidrule{2-5}
 & Our algorithm & Row-$\ell_{2}$ & $\widetilde{O}\left(\frac{d\sigma^{2}_{1}\|U\|^{2}_{\infty}}{\epsilon^{2}\Delta^{2}_{1}}\right)$ & $\Delta_{1}=\widetilde{\Omega}\Big(\frac{\sqrt{d}}{\epsilon}+\left(\frac{\sigma^{2}_{1}\left\Vert U\right\Vert _{\infty}}{\epsilon}\right)^{2/3}+\left(\frac{\sigma_{1}}{\epsilon}\right)^{2/3}\Big)$\tabularnewline
\bottomrule
\end{tabular*}
\par\end{centering}
\end{table*}

\footnotetext[1]{Let $E:= \begin{pmatrix} 0 & A\\ A^{\top} & 0 \end{pmatrix} - \begin{pmatrix} 0 & A'\\ A'^{\top} & 0 \end{pmatrix}.$
The matrices $A$ and $A'$ are considered $\Delta_{\mathrm{dN}}$-adjacent when $ \sqrt{\sum_{i,j}\left|\left(EE^{\top}\right)_{ij}\right|} \leq \Delta_{\mathrm{dN}}. $
For a replacement of row $a$ by row $a'$, 
\( \sqrt{\|EE^{\top}\|_{1}} = \sqrt{\|a-a'\|_{2}^{2}+\|a-a'\|_{1}^{2}} \)
}

\footnotetext[2]{We express the relevant coherence parameters using $\max\{n\|U\|_{\infty}^{2},d\|V\|_{\infty}^{2}\}$. In particular, the rank-$k$ coherence of \citet{dOrsiN26} is 
\( \mu_{k}(A) = \max\left\{ \frac{n}{k}\|U_{k}U_{k}^{\top}\|_{\max}, \frac{d}{k}\|V_{k}V_{k}^{\top}\|_{\max} \right\}, \)
where $\|M\|_{\max}:=\max_{i,j}|M_{ij}|$. Since $\|U_{k}U_{k}^{\top}\|_{\max}\leq k\|U\|_{\infty}^{2}$ and $\|V_{k}V_{k}^{\top}\|_{\max}\leq k\|V\|_{\infty}^{2}$, we have 
\( \mu_{k}(A) \leq \max\left\{ n\|U\|_{\infty}^{2}, d\|V\|_{\infty}^{2} \right\}. \)
}

\footnotetext[3]{The bounds depend only on the maximum entry of the first singular vector in $U$, not $\|U\|_{\infty}$}

Let $\Delta_{k}:=\sigma^{2}_{k}-\sigma^{2}_{k+1}$ and let $\left\Vert (I-\widehat{X}\widehat{X}^{\top})V_{k}\right\Vert _{2}$
denote the error of an output $\widehat{X}\in\mathbb{R}^{d\times k}$
with orthonormal columns. Let $a^{\top}_{i}$ denote the $i$th row
of $A$, and recall that $\|a_{i}\|_{2}\leq1$. We write $U_{k}=U_{:,1:k}$,
$\Sigma_{k}=\diag\left(\sigma_{1},\ldots,\sigma_{k}\right)$, $\left\Vert U\right\Vert _{\infty}=\max_{i,j}\left|U_{ij}\right|$.
For the $i$th standard basis vector $e_{i}$, we have $\left\Vert e^{\top}_{i}U_{k}\right\Vert _{2}\leq\sqrt{k}\left\Vert U\right\Vert _{\infty}$.
Our main result is the following:
\begin{thm}
\label{thm:main-informal} Let $\epsilon,\delta,\beta\in(0,1)$, and
suppose that every row of $A$ has $\ell_{2}$ norm at most $1$.
There is an $(\epsilon,\delta)$-differentially private algorithm
that, under the condition 
\[
\Delta_{k}\geq\widetilde{\Omega}\left(\frac{\sqrt{d}}{\epsilon}+\left(\frac{\sigma_{k}\sqrt{k}}{\epsilon}\right)^{2/3}+\left(\frac{k\sigma^{2}_{k}\left\Vert U\right\Vert _{\infty}}{\epsilon}\right)^{2/3}\right),
\]
outputs $\widehat{X}\in\mathbb{R}^{d\times k}$ satisfying, with probability
at least $1-\beta$, 
\[
\left\Vert (I-\widehat{X}\widehat{X}^{\top})V_{k}\right\Vert ^{2}_{2}\leq\left(\frac{\beta\delta\epsilon}{n}\right)^{2}+\widetilde{O}\left(\frac{dk\sigma^{2}_{k}\left\Vert U\right\Vert ^{2}_{\infty}+1}{\epsilon^{2}\Delta^{2}_{k}}\right).
\]
The algorithm does not require $\left\Vert U\right\Vert _{\infty}$,
$\Delta_{k}$, or $\sigma^{2}_{k+1}/\sigma^{2}_{k}$ as input.
\end{thm}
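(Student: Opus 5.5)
We construct the algorithm explicitly and analyze privacy and utility separately.

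\textbf{Algorithm.} We run block power iteration on $A^{\top}A$, starting from a random Gaussian $X_0\in\R^{d\times k}$ and orthonormalizing after each step. The key modification is how each step is computed. Write $Y_t=AX_t\in\R^{n\times k}$. Row $i$ of $Y_t$ is $a_i^{\top}X_t$, and its influence on the product $A^{\top}Y_t$ is the rank-one term $a_i(a_i^{\top}X_t)$. We use a preconditioner $R_t$ (essentially $(X_t^{\top}A^{\top}AX_t)^{-1/2}$, computed privately) so that $AX_tR_t$ is close to $U_k$ in row space. We then clip each row of $AX_tR_t$ to $\ell_2$ norm $\tau$ and form $A^{\top}\Clip_\tau(AX_tR_t)$. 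Replacing one row changes this matrix by at most $2\tau$ in Frobenius norm, so adding Gaussian noise $\N(0,\sigma_{\mathrm{noise}}^2)$ entrywise with $\sigma_{\mathrm{noise}}\asymp\tau\sqrt{T\log(1/\delta)}/\epsilon$ and composing over $T=\widetilde{O}(\sigma_k^2/\Delta_k)$ rounds gives $(\epsilon,\delta)$-DP.

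The threshold $\tau$ is not given, since $\|U\|_\infty$ is unknown. We choose it privately each round, for example by a private quantile or a binary search over a log grid of the row norms of $AX_tR_t$. This makes the algorithm adaptive. The preconditioner $R_t$ is also privatized, by releasing the $k\times k$ Gram matrix with noise; its sensitivity is controlled after clipping.

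\textbf{Privacy.} Each round is a composition of a private threshold selection, a private Gram release, and a clipped Gaussian mechanism. Privacy follows from advanced composition (or zCDP) over $T$ rounds. Failure events are pushed into $\delta$, and their probability contributes the $(\beta\delta\epsilon/n)^2$ term through a fallback output. Privacy holds unconditionally because of the clipping.

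\textbf{Utility.} We follow the noisy power method analysis (Hardt--Price style). Let $G_t$ collect the per-round perturbations: the Gaussian noise and the clipping bias. If for all $t$ we have
\[
\|G_t\|\le c\,\Delta_k\cos\theta_k(V_k,X_t)\quad\text{and}\quad\|V_k^{\top}G_t\|\le c\,\Delta_k\cos\theta_k(V_k,X_t),
\]
then after $T=\widetilde{O}(\sigma_k^2/\Delta_k)$ iterations, $\tan\theta_k\lesssim\max_t\|V_k^{\top}G_t\|/\Delta_k$. Random initialization gives $\cos\theta_k\ge 1/\poly(d)$, so we need either a warm phase or a gap condition to absorb this. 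The $\sqrt d/\epsilon$ term in the condition controls the full-dimensional noise norm $\sigma_{\mathrm{noise}}\sqrt d$.

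\textbf{Main obstacle.} The hard step is showing that, near convergence, the rows of $AX_tR_t$ have norm about $\sqrt{k}\|U\|_\infty$. Then $\tau$ is selected at about that level, clipping introduces negligible bias, and the noise scales as $\sqrt{k}\|U\|_\infty$. Plugging in, the noise in the relevant directions has size $\sqrt{dk}\,\sigma_k\|U\|_\infty/\epsilon$ up to logs, which yields the stated $\sin^2$ error.

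To control the row norms, we decompose $AX_tR_t=U_k\Sigma_kV_k^{\top}X_tR_t+U_{-k}\Sigma_{-k}V_{-k}^{\top}X_tR_t$. The first term has rows bounded by $\sqrt{k}\|U\|_\infty\|\Sigma_kV_k^{\top}X_tR_t\|=O(\sqrt k\|U\|_\infty)$. The second term is small in Frobenius norm, so clipping it affects only few rows. We bound the resulting clipping bias with a counting argument in terms of $\tan\theta$. This is where the $(k\sigma_k^2\|U\|_\infty/\epsilon)^{2/3}$ and $(\sigma_k\sqrt k/\epsilon)^{2/3}$ conditions arise: they come from balancing the clipping bias against the noise across the iterations. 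The argument is inductive over $t$, with a high-probability union bound over the $T$ rounds.
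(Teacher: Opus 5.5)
\textbf{Where the proposal breaks.} Several pieces match the paper: a privately released $k\times k$ Gram matrix used as a $(\cdot)^{-1/2}$ preconditioner, row clipping of $AX_tB_t$, and the split $AX_tB_t=(U_k+U_{>k}S^{(t)})P^{(t)}$ with $\|P^{(t)}\|_2=O(1)$. But the step you flag as the main obstacle rests on an argument that cannot work.

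Counting clipped rows with only $\|U_{>k}S^{(t)}\|_F\le\sqrt{k}\,m_t$ gives, in the worst case consistent with that information, a clipping bias of order $km_t^2/\tau$. Through $\|(K^{(t)})^{-1}\|_2\le 3/(2\sigma_k)$ this enters the recurrence for $E^{(t)}$ as roughly $km_t^2/(\tau\sigma_k)$. Suppose $m_0$ is of constant order, which is all the condition $\Delta_k\gtrsim\sqrt{d}/\epsilon$ guarantees. For the contraction $\kappa_k m_t$ to absorb this bias you need $\Delta_k\gtrsim k\sigma_k/\tau=\sqrt{k}\,\sigma_k/\|U\|_\infty$. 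In the low-coherence regime $\|U\|_\infty\approx 1/\sqrt{n}$, this forces $\sigma_k\gtrsim\sqrt{kn}$. But $\sigma_k^2\le\|A\|_F^2/k\le n/k$, so this is impossible. Spectral or Frobenius smallness of $S^{(t)}$ never gives row norms at scale $\sqrt{k}\|U\|_\infty$.

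\textbf{How the paper handles clipping.} The paper proves clipping is \emph{never} active. It tracks the row-wise potentials $\Phi_{r,t}=\max_i\|e_i^\top U_{>k}\Sigma_{>k}^{2r}S^{(t)}\|_2$. These satisfy $\Phi_{r,t+1}\lesssim(\Phi_{r+1,t}+\Delta R\,q_{\mathrm{row}}\sigma_{k+1}^{2r})/\sigma_k^2$, because the rows of $U_{>k}\Sigma_{>k}^{2r+1}V_{>k}^\top\widetilde{G}^{(t)}$ obey Gaussian bounds at scale $\sqrt{k}$ rather than $\sqrt{d}$. The paper unrolls this recurrence down to $\Phi_{t,0}$.

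The base case $\Phi_{r,0}\le\sigma_{k+1}^{2r}\,\widetilde{O}(\sqrt{k}\|U\|_\infty)$ is specific to the Analyze-Gauss warm start. After projecting the noise onto $\{\|M\|_2\le\Delta_k/8\}$, $E^{(0)}$ is a $6\Delta_{\mathrm{init}}/\Delta_k$-Lipschitz function of an isotropic Gaussian matrix. It has mean zero by the reflection $J=V\,\mathrm{diag}(-I_k,I_{d-k})V^\top$, so concentration plus a net bounds every row. A random Gaussian $X_0$ does not give this: its $s_0$ is close to $1$, so the whitening of $P^{(t)}$ fails. An arbitrary warm start with $s_0\le 1/4$ does not give it in general either.

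\textbf{Error accounting loses a factor $\sigma_k^2/\Delta_k$.} Composition forces per-round noise $\sigma_{\mathrm{noise}}\asymp\tau\sqrt{T}/\epsilon$ with $T\approx\sigma_k^2/\Delta_k$. A Hardt--Price bound driven by $\max_t\|G_t\|$ then gives $\tan\theta\approx\sqrt{dk}\,\|U\|_\infty\sigma_k\sqrt{T}/(\epsilon\Delta_k)$. (It is $V_{>k}^\top G_t$, not $V_k^\top G_t$, that sets the final error.) That is a $\sin^2$ error of order $dk\sigma_k^4\|U\|_\infty^2/(\epsilon^2\Delta_k^3)$, a factor $\sigma_k^2/\Delta_k$ worse than claimed. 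Your ``plugging in'' step silently drops the $\sqrt{T}$.

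The paper recovers this factor in three steps:
\begin{itemize}
\item It writes $E^{(t+1)}=\mathcal{T}(E^{(t)})+L^{(t)}+R^{(t)}$, with $L^{(t)}$ linear in the fresh noise.
\item It bounds $\sum_j\mathcal{T}^{t-1-j}(L^{(j)})$ by a stopped martingale (Azuma plus a net), so variances rather than norms accumulate over the $1/\kappa_k$ effective iterations.
\item It bounds only the quadratic remainder $R^{(t)}$ term by term.
\end{itemize}

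\textbf{Choosing $T$.} You fix $T=\widetilde{O}(\sigma_k^2/\Delta_k)$, but the statement requires not knowing $\sigma_{k+1}^2/\sigma_k^2$. Since the noise scales with $\sqrt{T}$, you cannot simply overshoot. The paper selects $(R,T)$ jointly over public power-of-two grids with Papernot--Steinke. It scores candidates by $\|(I-XX^\top)f_{a,b}(A^\top A)\|_2$, whose sensitivity is at most $2/h$ on every input.

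\textbf{Two smaller points.} The $(\beta\delta\epsilon/n)^2$ term is the residual $(1-\kappa_k/2)^T m_0$ for $T=\lceil\frac{2}{\kappa_k}\log\frac{n}{\beta\delta\epsilon}\rceil$, not a fallback for failure events. The $2/3$-power gap conditions come from $\gamma\le\sigma_k^2/4$, $\Delta q_{\mathrm{row}}\le\Delta_k/8$ and $p_R\le\kappa_k/16$, not from balancing clipping bias against noise.
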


\subsection{Comparison with prior works}

We additionally write $\|V\|_{\infty}:=\max_{i,j}|V_{ij}|$. We summarize
our results in comparison with prior work in Table \ref{tab:result}.

\textbf{Comparison with the row-$\ell_{2}$ baseline.} The most direct
general rank-$k$ comparison is the Analyze Gauss algorithm of \citet{DBLP:conf/stoc/DworkTT014},
since it provides privacy under the same row-$\ell_{2}$ model. Our
result replaces the worst-case dependence on $d$ by a dependence
on the coherence of the input. Subject to the $\Delta_{k}$ condition
in Theorem $\ref{thm:main-informal}$, our bound improves on Analyze
Gauss when $k\sigma^{2}_{k}\|U\|^{2}_{\infty}$ is small. Analyze
Gauss does not require low coherence and can give the stronger guarantee
outside this regime. 

\textbf{Comparison under other privacy models.} The results of \citet{DBLP:conf/nips/HardtP14,dOrsiN26}
provide coherence-dependent guarantees for general rank $k$, but
they use privacy models different from ours. \citet{DBLP:conf/nips/HardtP14}
consider entry-level adjacency. Any replacement between two rows of
$\ell_{2}$ norm at most $1$ can be implemented using at most $2d$
entry changes by passing through the zero row. Group privacy therefore
converts their guarantee to our row-level model by setting $\epsilon_{\mathrm{ent}}=\Theta\left(\epsilon/d\right)$
and adjusting $\delta_{\mathrm{ent}}$ accordingly. So \citet{DBLP:conf/nips/HardtP14}
have a converted $\sin^{2}$ bound of:
\[
\widetilde{O}\left(\frac{k^{3}\sigma^{4}_{k}\max\left\{ n\|U\|^{2}_{\infty},d\|V\|^{2}_{\infty}\right\} }{\left(\epsilon/d\right)^{2}\Delta^{3}_{k}}\right)=\widetilde{O}\left(\frac{d^{2}k^{3}\sigma^{4}_{k}\max\left\{ n\|U\|^{2}_{\infty},d\|V\|^{2}_{\infty}\right\} }{\epsilon^{2}\Delta^{3}_{k}}\right)
\]
For \citet{dOrsiN26} model, the replacement of row $a$ by $a'$
will result in: $\sqrt{\|EE^{\top}\|_{1}}=\sqrt{\|a-a'\|^{2}_{2}+\|a-a'\|^{2}_{1}}=O\left(\sqrt{d\|a-a'\|^{2}_{2}}\right)=O\left(\sqrt{d}\right)$.
Thus it suffices to set $\Delta_{dn}=O\left(\sqrt{d}\right)$, leading
to $\sin^{2}$ error of:
\[
\widetilde{O}\left(\frac{dk\sigma^{2}_{k}\max\left\{ n\|U\|^{2}_{\infty},d\|V\|^{2}_{\infty}\right\} }{\epsilon^{2}\Delta^{2}_{k}}\right)
\]

The above bound is a factor $n$ larger than ours. Note that while
this bound does not require any condition, it is only meaningful if
the fraction inside $\widetilde{O}$ is less than 1, implying that
the gap $\Delta_{k}\ge\Omega\left(\sqrt{dk\sigma^{2}_{k}\max\left\{ n\|U\|^{2}_{\infty},d\|V\|^{2}_{\infty}\right\} }/\epsilon\right)$.
This can be significant e.g. when $k=1$, the incoherence $n\left\Vert U\right\Vert ^{2}_{\infty}=\widetilde{O}(1)$
(the minimum possible value and what we obtain with i.i.d. data),
we still need $\Delta_{k}\ge\sigma_{1}\sqrt{d}/\epsilon$. In contrast,
the constraint for ours in the same setting is $\Delta_{k}=\widetilde{\Omega}\Big(\frac{\sqrt{d}}{\epsilon}+\left(\frac{\sigma_{1}}{\epsilon}\right)^{2/3}+\frac{\sigma^{4/3}_{1}}{n^{1/3}\epsilon^{2/3}}\Big)$,
which is smaller.

\textbf{Comparison for $k=1$.} The result of \citet{NguyenEN26}
uses the same row-$\ell_{2}$ privacy model as ours and is the most
relevant comparison when $k=1$. Specializing our result to $k=1$
gives a comparable coherence-dependent guarantee. Our requirement
is more stringent as we need $\Delta_{1}=\widetilde{\Omega}\left(\frac{\sqrt{d}}{\epsilon}\right)$
but our bound is better whenever the gap is small $\Delta_{1}<\frac{\sigma^{2}_{1}}{d}$.

\subsection{Our approach and techniques}

Let $a^{\top}_{i}$ denote the $i$th row of $A$, and recall that
$\|a_{i}\|_{2}\leq1$. We write $U_{k}=U_{:,1:k}$, $\Sigma_{k}=\diag\left(\sigma_{1},\ldots,\sigma_{k}\right)$,
$\left\Vert U\right\Vert _{\infty}=\max_{i,j}\left|U_{ij}\right|$.
For the $i$th standard basis vector $e_{i}$, we have $\left\Vert e^{\top}_{i}U_{k}\right\Vert _{2}\leq\sqrt{k}\left\Vert U\right\Vert _{\infty}$.
Thus, when $\left\Vert U\right\Vert _{\infty}$ is small, every row
of $U_{k}$ has small norm.

A standard noisy power method \citep{DBLP:conf/nips/HardtP14} maintains
a matrix $X^{(t)}\in\mathbb{R}^{d\times k}$ with orthonormal columns
and performs an update of the form 
\begin{align*}
Z^{(t+1)} & =A^{\top}AX^{(t)}+\text{Gaussian noise},\\
X^{(t+1)}R^{(t+1)} & \text{be QR-decomposition of }Z^{(t+1)}
\end{align*}
Since $A^{\top}AX^{(t)}=\sum^{n}_{i=1}a_{i}\left(a^{\top}_{i}X^{(t)}\right)$,
the contribution of row $a_{i}$ to the update has Frobenius norm
$\left\Vert a_{i}\left(a^{\top}_{i}X^{(t)}\right)\right\Vert _{F}=\left\Vert a_{i}\right\Vert _{2}\left\Vert a^{\top}_{i}X^{(t)}\right\Vert _{2}$. 

A direct implementation under row-level privacy would therefore calibrate
the noise using bounds on these row contributions. The difficulty
is that these contributions are not controlled by coherence alone.
To see the difficulty, suppose that $X^{(t)}$ has already reached
the target $V_{k}$, so that $X^{(t)}=V_{k}$. The contribution of
row $a_{i}$ then has norm 
\[
\left\Vert a_{i}\left(a^{\top}_{i}V_{k}\right)\right\Vert _{F}=\left\Vert a_{i}\left(e^{\top}_{i}AV_{k}\right)\right\Vert _{F}=\left\Vert a_{i}e^{\top}_{i}U_{k}\Sigma_{k}\right\Vert _{F}=\left\Vert a_{i}\right\Vert _{2}\left\Vert e^{\top}_{i}U_{k}\Sigma_{k}\right\Vert _{2}
\]
Low coherence controls $\left\Vert e^{\top}_{i}U_{k}\right\Vert _{2}\leq\sqrt{k}\left\Vert U\right\Vert _{\infty}$,
but the direct update contains 
\[
\left\Vert e^{\top}_{i}U_{k}\Sigma_{k}\right\Vert _{2}=\left(\sum^{k}_{j=1}\sigma^{2}_{j}U^{2}_{ij}\right)^{1/2}\leq\sigma_{1}\left\Vert e^{\top}_{i}U_{k}\right\Vert _{2}\leq\sigma_{1}\sqrt{k}\left\Vert U\right\Vert _{\infty}
\]
This gives $\left\Vert a_{i}\left(a^{\top}_{i}V_{k}\right)\right\Vert _{F}\leq\sigma_{1}\sqrt{k}\|U\|_{\infty}$.
Thus, the Gaussian noise adjusted to the above upper bound inherits
a dependence on $\sigma_{1}$. On the other hand, in the normal power
method, the subspace error contracts at rate $\sigma^{2}_{k+1}/\sigma^{2}_{k}$.
This creates the main difficulty in analyzing the above power method:
the bound on Gaussian noise can depend on $\sigma_{1}$, whereas the
recurrence for the subspace error depends on $\sigma_{k}$ and $\sigma_{k+1}$.
This is not an issue for the prior work for $k=1$ but it is a major
issue for our setting.

\textbf{Preconditioner.} Our main idea is to remove the $\Sigma_{k}$
term on the contribution of row $a_{i}$, using the preconditioner
method. Our new update has this form:
\begin{align*}
H^{(t)} & =X^{(t)\top}A^{\top}AX^{(t)}+\text{symmetric Gaussian noise},\\
\widetilde{H}^{(t)} & =H^{(t)}\text{ with every eigenvalue below \ensuremath{\gamma} replaced by \ensuremath{\gamma}},\\
B^{(t)} & =\left(\widetilde{H}^{(t)}\right)^{-1/2},\\
Z^{(t+1)} & =A^{\top}AX^{(t)}B^{(t)}+\text{Gaussian noise},\\
X^{(t+1)}R^{(t+1)} & \text{be QR-decomposition of }Z^{(t+1)}
\end{align*}
where the second step is to ensure that the inverse of $\widetilde{H}^{(t)}$
exists and $B^{(t)}$ is defined. The parameter $\ensuremath{\gamma}$
is set roughly the same as the magnitude of the symmetric Gaussian
noise so that we remove the artifact from the noise and the components
smaller than the noise.

To explain the role of the preconditioner $B^{(t)}$, first suppose
that the Gaussian noise is absent and $\widetilde{H}^{(t)}=H^{(t)}=X^{(t)\top}A^{\top}AX^{(t)}$.
Provided that $X^{(t)\top}A^{\top}AX^{(t)}$ is positive definite,
we have $B^{(t)}=\left(X^{(t)\top}A^{\top}AX^{(t)}\right)^{-1/2}$,
and the columns of $AX^{(t)}B^{(t)}$ are orthonormal. Moreover, since
$B^{(t)}$ is invertible, $\mathrm{range}\left(A^{\top}AX^{(t)}B^{(t)}\right)=\mathrm{range}\left(A^{\top}AX^{(t)}\right)$.
Thus, in this idealized iteration, the preconditioner does not change
the next subspace produced by the block power method. Instead, it
changes the basis in which $a^{\top}_{i}X^{(t)}B^{(t)}$ is measured.

At $X^{(t)}=V_{k}$, we have $B^{(t)}=\left(V^{\top}_{k}A^{\top}AV_{k}\right)^{-1/2}=\Sigma^{-1}_{k}$
and therefore $a^{\top}_{i}V_{k}B^{(t)}=e^{\top}_{i}U_{k}$. The contribution
of row $a_{i}$ consequently has norm 
\[
\left\Vert a_{i}\left(a^{\top}_{i}V_{k}B^{(t)}\right)\right\Vert _{F}=\left\Vert a_{i}\right\Vert _{2}\left\Vert e^{\top}_{i}U_{k}\right\Vert _{2}
\]
which is controlled by the coherence and no longer contains $\sigma_{1},\ldots,\sigma_{k}$.
When the iterate is close to $V_{k}$ and $B^{(t)}$ is computed from
the noisy $k\times k$ matrix, this identity does not hold exactly.
Our analysis instead shows that the part of $a^{\top}_{i}X^{(t)}B^{(t)}$
from $V_{k}$ remains bounded by a constant multiple of a coherence-related
quantity. The remaining part of $a^{\top}_{i}X^{(t)}B^{(t)}$ from
$V_{>k}$ is controlled through a separate row-wise analysis. This
allows the final error bound to depend on $\sigma_{k}$, rather than
on all of the leading singular values.

\textbf{Clipping.} The preceding bounds hold with high probability
when the iterates remain close to $V_{k}$. Differential privacy,
however, requires a sensitivity bound that holds for every input.
For a row vector $y\in\mathbb{R}^{1\times k}$ and a radius $R>0$,
define $\mathrm{Clip}_{R}\left(y\right)=y\min\left\{ 1,\frac{R}{\|y\|_{2}}\right\} ,$with
$\mathrm{Clip}_{R}\left(0\right)=0$. Our update has the form: 
\begin{align*}
Z^{(t+1)} & =\sum^{n}_{i=1}a_{i}\mathrm{Clip}_{R}\left(a^{\top}_{i}X^{(t)}B^{(t)}\right)+\text{Gaussian noise}.
\end{align*}
Since $\left\Vert a_{i}\right\Vert _{2}\leq1$, $\left\Vert a_{i}\mathrm{Clip}_{R}\left(a^{\top}_{i}X^{(t)}B^{(t)}\right)\right\Vert _{F}\leq R$.
Under replacement adjacency, the sensitivity of the sum is therefore
at most $2R$, and the Gaussian noise can be adjust to this sensitivity.

Clipping supplies a worst-case sensitivity bound but is never active
with high probability. In particular, under the assumptions of our
main result, our analysis gives a radius $R=\widetilde{O}\left(\sqrt{k}\left\Vert U\right\Vert _{\infty}\right)$
for which $\left\Vert a^{\top}_{i}X^{(t)}B^{(t)}\right\Vert _{2}\leq R$
for every row and every iteration, with high probability.

\textbf{Private initialization.} The preconditioner and row-wise bounds
described above apply only when the current iterate has sufficient
overlap with the target subspace. We obtain such an initialization
using the Gaussian mechanism of \citet{DBLP:conf/stoc/DworkTT014}.
In particular, we add symmetric Gaussian noise to $A^{\top}A$ and
let $X^{(0)}$ contain the top $k$ eigenvectors of the resulting
matrix. Under an initial condition, this initialization satisfies
$\left\Vert \left(I-X^{(0)}X^{(0)\top}\right)V_{k}\right\Vert _{2}\leq1/4$.

\textbf{Private selection of $R$ and $T$.} The appropriate choices
of $R$ and $T$ depend on private properties of the input: $R$ depends
on the coherence, while $T$ depends on $\sigma^{2}_{k+1}/\sigma^{2}_{k}$.
We consider public powers-of-two grids for $R$ and $T$ and use the
private selection method of \citet{PapernotSteinke2022} to choose
among the resulting outputs. The construction and analysis of this
procedure are given in the algorithm section. Thus, our algorithm
does not require the coherence or $\sigma^{2}_{k+1}/\sigma^{2}_{k}$
as input.

\textbf{Analysis technique.} The quality of the solution $X^{(t)}$
is evaluated via the ratio between its projection on the top $k$
subspace $C=V^{\top}_{k}X^{(t)}$ and its projection on the bottom
$d-k$ subspace $D=V^{\top}_{>k}X^{(t)}$. Let $E^{(t)}=D^{(t)}\left(C^{(t)}\right)^{-1}$.
We track $\left\Vert E^{(t)}\right\Vert _{2}$ over the iterations,
ideally showing that it decreases exponentially. If no clipping happens,
we prove the recurrence $E^{(t+1)}=\left(\Sigma^{2}_{>k}E^{(t)}\Sigma^{-2}_{k}+noise\right)\left(I+noise\right)^{-1}$
where the noise terms come from the Gaussian noise in the update for
$Z^{(t+1)}$. Without noise, we can see that $\left\Vert \Sigma^{2}_{>k}E^{(t)}\Sigma^{-2}_{k}\right\Vert _{2}\le\frac{\sigma^{2}_{k+1}}{\sigma^{2}_{k}}\left\Vert E^{(t)}\right\Vert _{2}=(1-\kappa_{k})\left\Vert E^{(t)}\right\Vert _{2}$,
indeed decreasing exponentially with iterations. Naively bounding
the noises per iteration would create an error term linear in the
number of iterations, which is too large. In different iterations,
we add independent noises so intuitively, these noises are nearly
orthogonal to each others and have a lot of cancellation. To exploit
the cancellation, we need to perform a global analysis involving all
iterations together. However, the dependence on the noise in each
iteration is complex with a term in the numerator and a term in the
denominator. Using $(I+F)^{-1}=I-F(I+F)^{-1}$ for any noise matrix
$F$, the noise contribution is split into two parts: one part being
added up linearly across all iterations, which is amenable to a martingale
analysis, and one complicated part, which is bounded separately per
iteration. The second part is small so even though the bound might
be suboptimal, it is sufficiently small to not matter. This analysis
is possible only if we never clip. First we show that clipping does
not happen for the initial solution using Gaussian concentration.
Then we use induction to show that clipping does not happen for the
rest of the iterations. 

\section{Preliminaries}

Let $A\in\R^{n\times d}$ be the input data matrix whose rows $a_{i}$
are datapoints in $\R^{d}$. We assume that $n\geq d$ and each row
$a_{i}$ of $A$ has length $\left\Vert a\right\Vert _{2}\leq1$.
As noted by \citet{DBLP:conf/stoc/DworkTT014}, the latter assumption
is necessary for achieving non-trivial utility guarantees and it is
a natural restriction for PCA, since otherwise the solution could
be dominated by a single datapoint with very large norm and it could
fail to capture the underlying structure shared by most of the datapoints.

\textbf{Notation}. Let $A=U\Sigma V^{\top}$ be the thin singular
value decomposition (SVD) of $A$, where $U\in\R^{n\times d}$ is
the matrix whose columns are the left-singular vectors $u_{1},\dots,u_{d}$,
$V\in\R^{d\times d}$ is the matrix whose columns are the right-singular
vectors $v_{1},\dots,v_{d}$, and $\Sigma=\mathrm{diag}\left(\sigma_{1},\dots,\sigma_{d}\right)\in\R^{d\times d}$
where $\sigma_{1}\geq\sigma_{2}\geq\dots\geq\sigma_{d}$ are the singular
values.  Let $U_{k}\in\R^{n\times k}$ and $V_{k}\in\R^{d\times k}$
be the matrices whose columns are the top-$k$ left-singular and right-singular
vectors, respectively. Let $\Sigma_{k}=\diag\left(\sigma_{1},\dots,\sigma_{k}\right)$.
We assume that there is a positive eigengap $\Delta_{k}:=\sigma^{2}_{k}-\sigma^{2}_{k+1}>0$. 

We let $\left\Vert U\right\Vert _{\infty}:=\max_{i,j}\left|U_{ij}\right|$,
and we refer to $\left\Vert U\right\Vert _{\infty}$ as the \textbf{coherence}
of the input matrix $A$. Note that $U_{ij}=\left\langle a_{i},v_{j}\right\rangle /\sigma_{j}$
and thus $\left\Vert U\right\Vert _{\infty}$ measures the maximum
correlation between the datapoints (the rows of $A$) and the right-singular
vectors. The coherence value is not known to the algorithm. 

We use the $\widetilde{O}/\widetilde{\Omega}$ notation to suppress
factors that are poly-logarithmic in $n$, $d$, $1/\epsilon$, $1/\delta$,
and $1/\beta$, where $n$ and $d$ are the input dimensions, $\epsilon$
and $\delta$ are the privacy parameters, and $\beta$ is the failure
probability.

\textbf{Clipping. }For a row vector $y\in\R^{1\times k}$ and a radius
$R>0$, we define the clipping operation:
\[
\Clip_{R}(y):=y\cdot\min\left\{ 1,\frac{R}{\left\Vert y\right\Vert _{2}}\right\} 
\]
with the convention that $\Clip_{R}(0)=0$.

\textbf{Differential privacy. }We consider the privacy model where
each row of the input matrix $A$ is the private data of an user.
More precisely, two input matrices $A$ and $A'$ are neighboring
inputs if $A$ and $A'$ differ in exactly one row. A randomized algorithm
${\cal M}$ is $(\epsilon,\delta)$-differentially private ($(\epsilon,\delta)$-DP)
if, for every pair of neighboring inputs $A,A'$ and every measurable
subset of possible outcomes ${\cal O}$, we have
\begin{align*}
\Pr\left[{\cal M}(A)\in{\cal O}\right] & \le e^{\epsilon}\Pr\left[{\cal M}(A')\in{\cal O}\right]+\delta.
\end{align*}
Our algorithm and analysis use the Gaussian mechanism and advanced
composition, and we give the definitions and theorems in Section \ref{sec:utility-theorems}.

\section{Our Algorithm}

\begin{algorithm}
\caption{Algorithm for constructing an initial solution}

\label{alg:Algorithm-initialize}

\begin{algorithmic}[1]

\STATE \textbf{Input}: Matrix $A\in\R^{n\times d}$, privacy parameters
$\epsilon_{\mathrm{init}},\delta_{\mathrm{init}}$

\STATE Let $\Delta_{\mathrm{init}}=2\sqrt{2\log\left(2/\delta_{\mathrm{init}}\right)}/\epsilon_{\mathrm{init}}$.

\STATE {\small{} Let $\widehat{M}=A^{\top}A+\Delta_{\mathrm{init}}\cdot\sym\left(G^{(\mathrm{init})}\right)$
where $G^{(\mathrm{init})}\sim\N(0,1)^{d\times d}$ and $\sym\left(G\right)=\frac{1}{2}\left(G+G^{\top}\right)$}{\small\par}

\STATE {\small Let $X^{(0)}$ be a $d\times k$ matrix whose columns
are the top $k$ eigenvectors of $\widehat{M}$}{\small\par}

\STATE Return $X^{(0)}$

\end{algorithmic}
\end{algorithm}

\begin{algorithm}
\caption{Private Preconditioned Power Method}

\label{alg:Algorithm-power-method}

\begin{algorithmic}[1]

\STATE \textbf{Input}: Matrix $A\in\R^{n\times d}$ with rows $a_{i}$,
initial solution $X^{(0)},$number of iterations $T$, $k\leq d$,
per-iteration privacy parameters $\epsilon_{\mathrm{iter}},\delta_{\mathrm{iter}}$,
failure probability $\beta$, clipping radius $R$

\STATE {\small Let $\Delta=2\sqrt{2\log\left(2/\delta_{\mathrm{iter}}\right)}/\epsilon_{\mathrm{iter}}$,
$\gamma=4\Delta\left(2\sqrt{k}+\sqrt{2\log(8T/\beta)}\right)$}{\small\par}

\STATE \textbf{for} $t=0\dots T-1$

\STATE $\qquad${\small Let $H^{(t)}=X^{(t)\top}A^{\top}AX^{(t)}+\Delta\cdot\sym\left(G^{(t)}\right)$
where $G^{(t)}\sim\N\left(0,1\right)^{k\times k}$ and $\sym\left(G\right)=\frac{1}{2}\left(G+G^{\top}\right)$}{\small\par}

\STATE $\qquad${\small Let $H^{(t)}=Q^{(t)}\Lambda^{(t)}Q^{(t)\top}$,
$\widetilde{H}^{(t)}=Q^{(t)}\diag\left(\max\left\{ \Lambda^{(t)}_{11},\gamma\right\} ,\ldots,\max\left\{ \Lambda^{(t)}_{kk},\gamma\right\} \right)Q^{(t)\top}$,
$B^{(t)}=\left(\widetilde{H}^{(t)}\right)^{-1/2}$}{\small\par}

\STATE $\qquad${\small Let $Z^{(t+1)}=\sum^{n}_{i=1}a_{i}\Clip_{R}\left(a^{\top}_{i}X^{(t)}B^{(t)}\right)+\Delta R\cdot\widetilde{G}^{(t)}$
where $\widetilde{G}^{(t)}\sim\N(0,1)^{d\times k}$}{\small\par}

\STATE $\qquad$Let $X^{(t+1)}R^{(t+1)}=Z^{(t+1)}$ be the QR-decomposition
of $Z^{(t+1)}$

\STATE \textbf{end for}

\STATE \textbf{return} $X^{(T)}$

\end{algorithmic}
\end{algorithm}

\begin{algorithm}
\caption{Main Algorithm}

\label{alg:Algorithm-main}

\begin{algorithmic}[1]

\STATE \textbf{Input}: Matrix $A\in\R^{n\times d}$, privacy parameters
$\epsilon,\delta$, failure probability $\beta$

\STATE Set $\epsilon_{\mathrm{init}}=\frac{\epsilon}{2}$ and $\delta_{\mathrm{init}}=\frac{\delta}{2}$

\STATE Let $X^{(0)}$ be the solution returned by Algorithm \ref{alg:Algorithm-initialize}
when run with $\epsilon_{\mathrm{init}}=\frac{\epsilon}{2}$ and $\delta_{\mathrm{init}}=\frac{\delta}{2}$

\STATE Set $\epsilon_{\mathrm{iter}}$ and $\delta_{\mathrm{iter}}$
so that Algorithm \ref{alg:Algorithm-power-method} is $\left(\frac{\epsilon}{2},\frac{\delta}{2}\right)$-DP

\STATE Let $X^{(T)}$ be the solution returned by Algorithm \ref{alg:Algorithm-power-method}
when run with $\epsilon_{\mathrm{iter}}$ and $\delta_{\mathrm{iter}}$,
and failure probability $\frac{\beta}{2}$

\STATE Return $X^{(T)}$

\end{algorithmic}
\end{algorithm}

In this section, we give an overview of our algorithm and analysis.
Our main algorithm is shown in Algorithm \ref{alg:Algorithm-main}.
The algorithm uses the algorithm of \citet{DBLP:conf/stoc/DworkTT014}
to construct an initial solution (Algorithm \ref{alg:Algorithm-initialize}).
Our main optimization algorithm is shown in Algorithm \ref{alg:Algorithm-power-method}.
This algorithm takes as input the initial solution constructed by
Algorithm \ref{alg:Algorithm-initialize} together with a clipping
radius $R$ and an iteration count $T$, and it iteratively constructs
a solution using a private variant of a preconditioned power method
as discussed in the introduction. In Section \ref{sec:private-radius-selection}
in the Appendix, we show how to privately select the hyperparameters
$R$ and $T$ using the private hyperparameter-selection method of
Papernot and Steinke \citet{PapernotSteinke2022}.

We now give an overview of our analysis. We give the complete analysis
to Section \ref{sec:analysis} in the Appendix.

\textbf{Privacy analysis. }Our algorithm uses the Gaussian mechanism
to privatize the updates, and we show that the algorithm is $\left(\epsilon,\delta\right)$-DP
via standard applications of the Gaussian mechanism and the advanced
composition theorem.

\textbf{Utility analysis. }We measure the utility of our algorithm
via the error $\left\Vert \left(I-XX^{\top}\right)V_{k}\right\Vert _{2}$,
where $X$ is a solution with orthonormal columns and $V_{k}$ is
the matrix whose columns are the top-$k$ right-singular vectors.
We recall that this error is the sine of the maximum principal angle
between the column spaces of $X$ and $V_{k}$. We let $s_{t}:=\left\Vert \left(I-X^{(t)}X^{(t)\top}\right)V_{k}\right\Vert _{2}$
be the error of solution $X^{(t)}$ constructed in iteration $t-1$
of our algorithm. 

We analyze the error $s_{t}$ via the following quantity. Let $C^{(t)}:=V^{\top}_{k}X^{(t)}$,
$D^{(t)}:=V^{\top}_{>k}X^{(t)}$, and $E^{(t)}:=D^{(t)}\left(C^{(t)}\right)^{-1}$
provided that $C^{(t)}$ is invertible. The initial solution $X^{(0)}$
constructed by Algorithm \ref{alg:Algorithm-initialize} satisfies
$s_{0}\leq1/4$, which guarantees that $C^{(0)}$ is invertible. Given
this property, we show by induction that the algorithm maintains the
invariant that $s_{t}<1$, which ensures that $C^{(t)}$ is invertible
(Lemma \ref{lem:s-to-C-invertible}). We analyze the error $s_{t}$
via the spectral norm $m_{t}:=\left\Vert E^{(t)}\right\Vert _{2}$
of $E^{(t)}$, which upper bounds the error (i.e., we have $s_{t}\leq m_{t}$). 

In an ideal update with no clipping or privacy noise, we have $E^{(t+1)}=\Sigma^{2}_{>k}E^{(t)}\Sigma^{-2}_{k}$
and thus
\[
\left\Vert E^{(t+1)}\right\Vert _{2}\leq\left\Vert \Sigma^{2}_{>k}\right\Vert _{2}\left\Vert E^{(t)}\right\Vert _{2}\left\Vert \Sigma^{-2}_{k}\right\Vert _{2}=\frac{\sigma^{2}_{k+1}}{\sigma^{2}_{k}}\left\Vert E^{(t)}\right\Vert _{2}=\left(1-\kappa_{k}\right)\left\Vert E^{(t)}\right\Vert _{2},
\]
where $\kappa_{k}=\frac{\sigma^{2}_{k}-\sigma^{2}_{k+1}}{\sigma^{2}_{k}}$.
Thus, in this ideal case, the error decreases exponentially with the
eigengap. For our private update, we show that, if the clipping is
inactive in iteration $t$ (i.e., $\left\Vert a^{\top}_{i}X^{(t)}B^{(t)}\right\Vert _{2}\leq R$
for all $i\in[n]$), then the following modified recurrence holds:
\begin{equation}
E^{(t+1)}=\left(\Sigma^{2}_{>k}E^{(t)}\Sigma^{-2}_{k}+L^{(t)}\right)\left(I+F^{(t)}\right)^{-1},\label{eq:main-recurrence-body}
\end{equation}
where $L^{(t)}$ and $F^{(t)}$ are matrices arising from the Gaussian
noise added, defined as follows: $F^{(t)}:=\Delta RV^{\top}_{k}\widetilde{G}^{(t)}\left(K^{(t)}\right)^{-1}$
and $L^{(t)}:=\Delta RV^{\top}_{>k}\widetilde{G}^{(t)}\left(K^{(t)}\right)^{-1}$,
where $K^{(t)}:=\Sigma^{2}_{k}C^{(t)}B^{(t)}$. The main progress
term $\Sigma^{2}_{>k}E^{(t)}\Sigma^{-2}_{k}$ is the same as in the
ideal update, and its spectral norm decreases exponentially. Thus
if we can show that the clipping remains inactive throughout the algorithm
and the contribution of the error terms is sufficiently small, we
can show that we approximately preserve the exponential decrease in
the error.

We show that the error terms can be bounded with high probability
using Gaussian concentration and the properties of the preconditioner
$B^{(t)}$ used by the main update. We have
\[
\left\Vert F^{(t)}\right\Vert _{2}\leq\Delta R\left\Vert V^{\top}_{k}\widetilde{G}^{(t)}\right\Vert _{2}\left\Vert \left(K^{(t)}\right)^{-1}\right\Vert _{2}\quad\left\Vert L^{(t)}\right\Vert _{2}\leq\Delta R\left\Vert V^{\top}_{>k}\widetilde{G}^{(t)}\right\Vert _{2}\left\Vert \left(K^{(t)}\right)^{-1}\right\Vert _{2}.
\]
 Since the Gaussian distribution is rotationally invariant, $V^{\top}_{k}\widetilde{G}^{(t)}$
is a $k\times k$ Gaussian matrix and $V^{\top}_{>k}\widetilde{G}^{(t)}$
is a $(d-k)\times k$ Gaussian matrix. Using Gaussian concentration
results (Lemma \ref{lem:gaussian-norm}), we can show that we have
$\left\Vert V^{\top}_{k}\widetilde{G}^{(t)}\right\Vert _{2}\leq\widetilde{O}\left(\sqrt{k}\right)$
and $\left\Vert V^{\top}_{>k}\widetilde{G}^{(t)}\right\Vert _{2}\leq\widetilde{O}\left(\sqrt{d-k}+\sqrt{k}\right)$
with high probability. We can also show that $\left\Vert \left(K^{(t)}\right)^{-1}\right\Vert _{2}\leq O\left(\frac{1}{\sigma_{k}}\right)$
(Lemma \ref{lem:K-inverse}) as follows. We have $K^{(t)}:=\Sigma^{2}_{k}C^{(t)}B^{(t)}$.
We can show that the preconditioner $B^{(t)}$ roughly whitens $\Sigma_{k}V^{\top}_{k}X^{(t)}=\Sigma_{k}C^{(t)}$,
i.e., the singular values of $\Sigma_{k}C^{(t)}B^{(t)}$ are bounded
from above and below by absolute constants (Lemma \ref{lem:whitening}).
This allows us to bound $\left\Vert \left(K^{(t)}\right)^{-1}\right\Vert _{2}\leq\left\Vert \left(\Sigma_{k}C^{(t)}B^{(t)}\right)^{-1}\right\Vert _{2}\left\Vert \Sigma^{-1}_{k}\right\Vert _{2}=O(1/\sigma_{k})$
since $\left\Vert \left(\Sigma_{k}C^{(t)}B^{(t)}\right)^{-1}\right\Vert _{2}=O(1)$.

If the clipping is inactive, we can use the recurrence above together
with the upper bounds on the noise terms described above to analyze
the error progress. In order to obtain a stronger bound, we use a
martingale argument to show that the contribution of some of the error
terms is smaller. To simplify notation, for a matrix $M$, let us
define $\calT\left(M\right):=\Sigma^{2}_{>k}M\Sigma^{-2}_{k}$. Thus
the main recurrence (\ref{eq:main-recurrence-body}) becomes:
\[
E^{(t+1)}=\left(\calT\left(E^{(t)}\right)+L^{(t)}\right)\left(I+F^{(t)}\right)^{-1}.
\]
Using the identity $\left(I+F\right)^{-1}=I-F\left(I+F\right)^{-1}$,
we can rewrite the recurrence as follows:
\[
E^{(t+1)}=\calT\left(E^{(t)}\right)+L^{(t)}\underbrace{-\left(\calT\left(E^{(t)}\right)+L^{(t)}\right)F^{(t)}\left(I+F^{(t)}\right)^{-1}}_{:=R^{(t)}}=\calT\left(E^{(t)}\right)+L^{(t)}+R^{(t)}.
\]
The reason this is beneficial is that it separates the noise terms
depending on $\widetilde{G}^{(t)}$ into the term $L^{(t)}$ that
is linear in $\widetilde{G}^{(t)}$ and a more complex term $R^{(t)}$.
The simple linear structure of the terms $L^{(t)}$ allows us to bound
their contribution via a martingale argument. By unrolling the recurrence,
we obtain
\[
E^{(t)}=\mathcal{T}^{t}\left(E^{(0)}\right)+\sum^{t-1}_{j=0}\mathcal{T}^{t-1-j}\left(L^{(j)}\right)+\sum^{t-1}_{j=0}\mathcal{T}^{t-1-j}\left(R^{(j)}\right).
\]
We upper bound the spectral norm of the middle term $\sum^{t-1}_{j=0}\mathcal{T}^{t-1-j}\left(L^{(j)}\right)$
via a martingale concentration argument (Lemma (\ref{lem:martingale})).
For the last term, we upper bound its spectral norm term by term,
and we bound $\left\Vert \calT^{r}\left(R^{(j)}\right)\right\Vert _{2}=\left\Vert \Sigma^{2r}_{>k}R^{(j)}\Sigma^{-2r}_{k}\right\Vert _{2}$
using a similar approach as discussed above.

The analysis discussed above relied on the clipping being inactive
in every iteration. A significant part of our analysis is devoted
to showing that this event holds with high probability provided that
$R\geq\widetilde{\Omega}\left(\sqrt{k}\left\Vert U\right\Vert _{\infty}\right)$.
Note that the clipping is inactive in iteration $t$ if $\left\Vert a^{\top}_{i}X^{(t)}B^{(t)}\right\Vert _{2}\leq R$
for all $i\in[n]$, and thus we need to upper bound the lengths $\left\Vert a^{\top}_{i}X^{(t)}B^{(t)}\right\Vert _{2}$.
We can write $AX^{(t)}B^{(t)}=\left(U_{k}+U_{>k}S^{(t)}\right)P^{(t)}$
where $S^{(t)}:=\Sigma_{>k}E^{(t)}\Sigma^{-1}_{k}$ and $P^{(t)}:=\Sigma_{k}C^{(t)}B^{(t)}$.
Thus, for each row $i$, we have $a^{\top}_{i}X^{(t)}B^{(t)}=e^{\top}_{i}\left(U_{k}+U_{>k}S^{(t)}\right)P^{(t)}$
and thus 
\[
\left\Vert a^{\top}_{i}X^{(t)}B^{(t)}\right\Vert _{2}\leq\left(\left\Vert e^{\top}_{i}U_{k}\right\Vert _{2}+\left\Vert e^{\top}_{i}U_{>k}S^{(t)}\right\Vert _{2}\right)\left\Vert P^{(t)}\right\Vert _{2}.
\]
 We have $\left\Vert e^{\top}_{i}U_{k}\right\Vert _{2}\le\sqrt{k}\left\Vert U\right\Vert _{\infty}$
and $\left\Vert P^{(t)}\right\Vert _{2}\leq O(1)$ with high probability
as noted earlier, and thus it only remains to upper bound $\left\Vert e^{\top}_{i}U_{>k}S^{(t)}\right\Vert _{2}$.
To this end, we introduce the following potential for every $0\leq r,t\leq T$:
\[
\Phi_{r,t}:=\max_{i\in\left[n\right]}\left\Vert e^{\top}_{i}U_{>k}\Sigma^{2r}_{>k}S^{(t)}\right\Vert _{2}.
\]
The potential values with $r=0$ are precisely the terms that we need
to upper bound, and the values with larger $r$ allows us to analyze
these terms. We first use a concentration argument to show that the
initialization allows us to upper bound $\Phi_{r,0}$ for all $r$
with high probability (Lemma \ref{lem:initial-potential}). We then
use an inductive argument to show that we can upper bound the potentials
$\Phi_{0,t}$ using the initial potentials $\Phi_{r,0}$. Here we
assume inductively that the clipping was inactive in all prior iterations
and we use the recurrence for $E^{(t+1)}$ discussed above, and show
that the clipping is inactive in the current iteration as well (Lemma
\ref{lem:clipping-inactive}). 

\section{Related work}

\textbf{Privacy models on coherence-sensitive PCA.} We study row-level
adjacency, under which two matrices differ in one row and the $\ell_{2}$
norm of the difference between those rows is at most $1$. The guarantees
of \citet{DBLP:conf/stoc/DworkTT014} and \citet{NguyenEN26} apply
under this model. By contrast, most earlier coherence-sensitive guarantees
were established under different notions of adjacency. Earlier power-method
results consider entry-level adjacency on the input matrix \citep{DBLP:conf/stoc/HardtR13,DBLP:conf/nips/HardtP14,DBLP:conf/colt/BalcanDWY16}.
\citet{NicolasSMMC25} instead allow difference between neighboring
input matrices after symmetrization $E$ satisfying $\sqrt{\sum^{n}_{i=1}\left\Vert E_{i,:}\right\Vert ^{2}_{1}}\le1$,
while \citet{dOrsiN26} consider the same metric $E$ satisfying $\sqrt{\sum_{i,j}\left|\left(EE^{\top}\right)_{ij}\right|}\leq\Delta$.
Since these adjacency notions do not coincide with row-level $\ell_{2}$
adjacency, Table \ref{tab:result} reports the bounds obtained after
converting them to our model.

\textbf{Worst-case bounds.} \citet{DBLP:conf/stoc/DworkTT014} give
a rank-$k$ guarantee without any assumption on coherence. It provides
a benchmark for coherence-independent $k$-PCA. Their guarantee gives
the principal-angle error bound $\widetilde{O}\left(\frac{\sqrt{d}}{\epsilon\left(\sigma^{2}_{k}-\sigma^{2}_{k+1}\right)}\right)$.
\citet{TranVV26} study the spectral-norm error $\left\Vert \widetilde{M}_{k}-M_{k}\right\Vert _{2}$,
where $M_{k}$ and $\widetilde{M}_{k}$ are the best rank-$k$ approximations
before and after the Gaussian perturbation. Under their gap condition,
converting this spectral-norm guarantee to principal-angle error recovers
the bound of \citet{DBLP:conf/stoc/DworkTT014} up to logarithmic
factors.

\textbf{Beyond-worst-case bounds.} \citet{DBLP:conf/stoc/HardtR13}
is the first paper to study coherence-sensitive guarantees for private
singular-vector computation. They give a private rank-$1$ algorithm
and use repeated deflation to obtain a rank-$k$ approximation. The
resulting guarantee deteriorates when $\sigma_{1}/\sigma_{k}$ is
large. Subsequent works analyze block power methods, which update
$k$ directions simultaneously and avoid the dependence on $\sigma_{1}$
introduced by repeated deflation \citep{DBLP:conf/nips/HardtP14,DBLP:conf/colt/BalcanDWY16,NicolasSMMC25}.
A different approach is taken by \citet{dOrsiN26}, who privately
estimate the spectral gap and coherence and use these estimates as
input to a Gaussian mechanism for estimating the top-$k$ subspace.

\textbf{Private block power method.} \citet{DBLP:conf/nips/HardtP14}
give a general analysis of the noisy block power method and apply
it to private rank-$k$ PCA. Their algorithm uses oversampling, maintaining
$p\geq k$ vectors to approximate the subspace spanned by the top
$k$ singular vectors. Its coherence-sensitive principal-angle guarantee
depends on the gap $\sigma_{k}-\sigma_{k+1}$. They conjecture that,
with oversampling, this dependence can be replaced by dependence on
the larger gap $\sigma_{k}-\sigma_{p+1}$. \citet{DBLP:conf/colt/BalcanDWY16}
partially address this conjecture by replacing the dependence on $\sigma_{k}-\sigma_{k+1}$
with dependence on $\sigma_{k}-\sigma_{q+1}$ for an intermediate
rank $q\in\left[k,p/2\right]$. Building on this framework, \citet{NicolasSMMC25}
sharpen the sensitivity bound from $\sqrt{p}\left\Vert X_{t}\right\Vert _{\max}$
to $\max_{i}\left\Vert \left(X_{t}\right)_{i,:}\right\Vert _{2}$.
By contrast, our algorithm does not use oversampling and maintains
exactly $k$ vectors throughout the iterations.

\textbf{Statistical / i.i.d private PCA.} This literature obtains
stronger statistical guarantees by assuming that the data are i.i.d
samples from a structured distribution. \citet{DBLP:conf/nips/LiuK0O22}
estimates the top singular vector with sub-Gaussian distributed data.
\citet{DunglerS25} generalizes this work for top $k$ singular vectors.

\newpage

\section{AI use statement}

In this work, we used generative AI tools for retrieving and summarizing
related literature; discussing research directions; reviewing mathematical
arguments and suggesting possible proofs; comparing and interpreting
theoretical guarantees; and drafting, restructuring, and editing portions
of the manuscript. The authors checked the cited literature against
the original sources and reviewed all AI-assisted text and mathematical
content. We did not use generative AI tools to generate or analyze
data, conduct experiments, or implement the algorithm; these tasks
are not applicable to this theoretical work. We take responsibility
for the final content of this work, including all text and mathematical
claims produced with the aid of generative AI.

\bibliographystyle{plainnat}
\bibliography{ref}

\appendix
\newpage

\section{Main Tools}

\label{sec:utility-theorems}

In this section, we state the main theorems from prior work that we
will use in our utility analysis shown in Section \ref{sec:alg-utility}.
Our algorithm and analysis use the following tools from differential
privacy.

\textbf{Gaussian mechanism.} Let $f$ be a function that takes as
input a matrix $A$ and outputs a vector. We say $f$ has $\ell_{2}$-sensitivity
$\Delta$ if on any neighboring inputs $A$ and $A'$, we have $\left\Vert f(A)-f(A')\right\Vert _{2}\le\Delta$.
The following result was shown by \citet{dwork2006our}.
\begin{thm}[Gaussian mechanism, \citet{dwork2006our}]
 Assume $0<\epsilon\le1$ and $0<\delta<1$. Given an input matrix
$A$ and a function $f$ with $\ell_{2}$-sensitivity $\Delta$, the
algorithm ${\cal M}$, which outputs $f(A)+G$ where $G$ is sampled
from the Gaussian distribution $\N(0,\sigma^{2}I)$ with $\sigma\ge\frac{\Delta}{\epsilon}\sqrt{2\log\frac{1.25}{\delta}}$,
is $(\epsilon,\delta)$-DP.
\end{thm}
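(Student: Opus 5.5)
The statement immediately above is the classical Gaussian mechanism. I would prove it by bounding the privacy loss random variable directly. Fix neighboring inputs $A,A'$ and write $\mu=f(A)$, $\mu'=f(A')$, $v=\mu-\mu'$, so that $\|v\|_2\le\Delta$. The outputs are distributed as $\N(\mu,\sigma^2 I)$ and $\N(\mu',\sigma^2 I)$. For an output $y=\mu+G$, the log-density ratio is
\[
\log\frac{p_A(y)}{p_{A'}(y)}=\frac{\|y-\mu'\|_2^2-\|y-\mu\|_2^2}{2\sigma^2}=\frac{2\langle G,v\rangle+\|v\|_2^2}{2\sigma^2}.
\]
By rotational invariance, $\langle G,v\rangle\sim\N(0,\sigma^2\|v\|_2^2)$. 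The problem is therefore one-dimensional, and the worst case is $\|v\|_2=\Delta$. The privacy loss is then distributed as $\N\left(\Delta^2/(2\sigma^2),\Delta^2/\sigma^2\right)$.

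Next I define the bad set $\mathcal{B}=\{y:\log(p_A(y)/p_{A'}(y))>\epsilon\}$. For any measurable $\mathcal{O}$,
\[
\Pr[\mathcal{M}(A)\in\mathcal{O}]\le\Pr[\mathcal{M}(A)\in\mathcal{O}\setminus\mathcal{B}]+\Pr[\mathcal{M}(A)\in\mathcal{B}]\le e^{\epsilon}\Pr[\mathcal{M}(A')\in\mathcal{O}]+\Pr[\mathcal{M}(A)\in\mathcal{B}].
\]
The second inequality holds pointwise off $\mathcal{B}$. So it suffices to show $\Pr[\mathcal{B}]\le\delta$.

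Writing $Z\sim\N(0,1)$, the event $\mathcal{B}$ is $Z>\frac{\sigma\epsilon}{\Delta}-\frac{\Delta}{2\sigma}$. Set $c=\sqrt{2\log(1.25/\delta)}$ and $\sigma\ge c\Delta/\epsilon$. Monotonicity in $\sigma$ lets me take $\sigma=c\Delta/\epsilon$. The threshold is then $x:=c-\epsilon/(2c)$. I would apply the Gaussian tail bound $\Pr[Z>x]\le\frac{1}{x\sqrt{2\pi}}e^{-x^2/2}$. Expanding gives $x^2/2\ge c^2/2-\epsilon/2=\log(1.25/\delta)-\epsilon/2$, so
\[
\Pr[\mathcal{B}]\le\frac{\delta}{1.25}\cdot\frac{e^{\epsilon/2}}{x\sqrt{2\pi}}.
\]

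The main obstacle is the constant bookkeeping: I must show $\frac{e^{\epsilon/2}}{x\sqrt{2\pi}}\le 1.25$, which uses $\epsilon\le1$. When $\delta$ is small enough that $c\ge\sqrt{2\log 1.25}$ is moderately large, say $c\ge 3/2$, we get $x\ge c-1/(2c)\ge 1.16$. Then $e^{1/2}/(1.16\sqrt{2\pi})\approx0.57<1.25$, and the bound follows.

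For $\delta$ close to $1$, $c$ can be small and the tail bound above is loose. In that regime I would instead use $\Pr[Z>x]\le 1/2$ for $x\ge0$, which suffices once $\delta\ge 1/2$. For intermediate $\delta$ I would check the inequality numerically on the remaining compact range of $c$. This is the step I would follow Dwork and Roth's original calculation for. A fallback is to note that the statement is cited from \citet{dwork2006our} and can be taken as given.
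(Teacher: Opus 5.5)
The paper does not prove this theorem; it imports it by citation to \citet{dwork2006our}. Your privacy-loss argument is the standard proof of the Gaussian mechanism, and it is correct. The reduction to the one-dimensional loss $\N\left(\Delta^2/(2\sigma^2),\Delta^2/\sigma^2\right)$, the bad-set decomposition, monotonicity in $\|v\|_2$ and in $\sigma$, and the estimate $\Pr[\mathcal{B}]\le\frac{\delta}{1.25}\cdot\frac{e^{\epsilon/2}}{x\sqrt{2\pi}}$ all check out. Two points in the final case analysis need tightening.

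\textbf{The regime $\delta$ near $1$.} There $x=c-\epsilon/(2c)$ is not always nonnegative. It is negative whenever $2c^{2}<\epsilon$, which can happen once $\delta>1.25e^{-1/4}\approx 0.97$. At worst $x$ approaches $\sqrt{2\log 1.25}-1/\bigl(2\sqrt{2\log 1.25}\bigr)\approx-0.08$. So in that range you only get $\Pr[\mathcal{B}]\le\Pr[Z>-0.081]<0.54<\delta$. This still suffices, but it must be stated separately from the ``$\Pr[Z>x]\le 1/2$'' case.

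\textbf{The intermediate range.} No numerical check is needed. The factor $e^{\epsilon/2}/(x\sqrt{2\pi})$ is largest at $\epsilon=1$. Hence your tail-bound argument already works as soon as $c-1/(2c)\ge e^{1/2}/(1.25\sqrt{2\pi})\approx 0.53$, that is, for $c\ge 1.02$, or equivalently $\delta\le 0.74$. This overlaps the region $\delta\ge 1/2$ covered by the crude bounds, so the two cases together handle every $\delta\in(0,1)$.
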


\begin{thm}[Advanced composition \citet{KairouzOV17}]
 \label{thm:advanced-composition} For all $\epsilon,\delta$, the
adaptive composition of $k$ algorithms, each of which is $(\epsilon,\delta)$-DP,
is $(\epsilon',\delta')$-private where
\[
\epsilon'=\epsilon\sqrt{2\ln\left(\frac{1}{\delta}\right)k}+\epsilon\frac{e^{\epsilon}-1}{e^{\epsilon}+1}k\quad\text{and}\quad\delta'=\delta(k+1)
\]
\end{thm}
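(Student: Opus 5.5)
The statement immediately above is the advanced composition theorem of \citet{KairouzOV17}, which the paper imports as a tool rather than proves. If I were to establish it myself, I would follow the optimal-composition route via privacy loss random variables. First I would reduce each $(\epsilon,\delta)$-DP mechanism to a canonical worst case. Every $(\epsilon,\delta)$-DP mechanism, on a fixed pair of neighbors, is a post-processing of a simple four-outcome randomized-response-type mechanism. That mechanism outputs a "revealing" symbol with probability $\delta$; with the remaining probability $1-\delta$ it outputs a bit that is correct with probability $e^{\epsilon}/(1+e^{\epsilon})$. This holds by the hypothesis-testing characterization of DP: the trade-off region of any $(\epsilon,\delta)$-DP pair contains that of this canonical pair. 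Adaptivity is handled by noting that the same simulation works conditionally on past outputs. Hence it suffices to bound the composition of $k$ independent copies of the canonical mechanism.

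Second, I would condition on the event that no revealing symbol appears in any of the $k$ copies. By a union bound, this event fails with probability at most $k\delta$. On this event, the privacy loss is a sum of $k$ i.i.d.\ variables, each taking values $\pm\epsilon$. Its mean is $\epsilon\frac{e^{\epsilon}-1}{e^{\epsilon}+1}$, which is the KL divergence of the binary pair. Each variable is bounded in an interval of length $2\epsilon$.

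Third, I would apply Hoeffding's inequality. The total loss exceeds $k\epsilon\frac{e^{\epsilon}-1}{e^{\epsilon}+1}+\epsilon\sqrt{2k\ln(1/\delta)}$ with probability at most $\delta$. Converting a tail bound on the privacy loss into $(\epsilon',\delta_0)$-DP via the standard lemma then gives $\epsilon'$ as stated, with additional failure $\delta$. The total is $\delta'=k\delta+\delta=(k+1)\delta$.

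The main obstacle is the first step: rigorously justifying the reduction for adaptive composition. This requires the simulation theorem, namely that the canonical pair dominates in the Blackwell sense and that this dominance tensorizes under adaptive sequencing. I would prove it by constructing, for each output history, an explicit kernel mapping canonical outputs to mechanism outputs. I would then argue inductively over the $k$ rounds.
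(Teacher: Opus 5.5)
The paper gives no proof of this statement; it imports it from \citet{KairouzOV17}, so the meaningful comparison is with that source. Your sketch is correct and essentially reproduces their argument. You simulate each round from the four-outcome canonical pair, discard transcripts containing a revealing symbol (whose privacy loss is infinite) at cost $1-(1-\delta)^k\le k\delta$, and apply Hoeffding to the remaining $k$ i.i.d.\ losses in $\{\pm\epsilon\}$. These have mean $\epsilon\frac{e^{\epsilon}-1}{e^{\epsilon}+1}$ and range $2\epsilon$, so the tail beyond $t=\epsilon\sqrt{2k\ln(1/\delta)}$ is at most $e^{-t^2/(2k\epsilon^2)}=\delta$. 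Your accounting actually gives the slightly sharper $\delta'=1-(1-\delta)^{k+1}$, and the reverse inequality follows from the symmetry of the canonical pair.

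One slip to correct: the containment in your first step is backwards. The privacy region (set of achievable type I/type II error pairs) of the canonical pair \emph{contains} that of every $(\epsilon,\delta)$-DP pair; equivalently, its trade-off curve lies below theirs. It is exactly this worst-case property that, via Blackwell's theorem, yields a simulating kernel that does not depend on the hidden bit.

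For the continuous-output Gaussian mechanisms used in this paper, that kernel amounts to writing the two output laws as $\delta\mu_1+\frac{(1-\delta)e^{\epsilon}}{1+e^{\epsilon}}\mu_2+\frac{1-\delta}{1+e^{\epsilon}}\mu_3$ and $\frac{1-\delta}{1+e^{\epsilon}}\mu_2+\frac{(1-\delta)e^{\epsilon}}{1+e^{\epsilon}}\mu_3+\delta\mu_4$ for probability measures $\mu_1,\dots,\mu_4$. This is the one technically nontrivial step, and simply citing the simulation theorem for it is perfectly acceptable.
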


\textbf{Gaussian concentration results.} We also use the following
concentration results for Gaussian random variables.
\begin{thm}[{\citet[Theorem 5.6, Tsirelson–Ibragimov–Sudakov inequality]{BoucheronLM13}}]
\label{thm:gaussian-concentration} If $G\sim\N(0,1)^{d\times d}$
and $f:\mathbb{R}^{d\times d}\to\mathbb{R}$ is $L$-Lipschitz in
Euclidean norm, then, for every $x\geq0$, 
\[
\Pr\left[f\left(G\right)-\E\left[f\left(G\right)\right]\geq x\right]\leq\exp\left(-\frac{x^{2}}{2L^{2}}\right),\qquad\Pr\left[f\left(G\right)-\E\left[f\left(G\right)\right]\leq-x\right]\leq\exp\left(-\frac{x^{2}}{2L^{2}}\right).
\]
\end{thm}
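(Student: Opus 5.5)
We would prove this by the entropy method: the Gaussian logarithmic Sobolev inequality combined with Herbst's argument. This route gives exactly the constant $1/(2L^{2})$ in the exponent. First, identify $\mathbb{R}^{d\times d}$ with $\mathbb{R}^{N}$, $N=d^{2}$. The Frobenius norm is then the Euclidean norm, so $G$ is a standard Gaussian vector $Z\in\mathbb{R}^{N}$ and $f$ is $L$-Lipschitz on $\mathbb{R}^{N}$. Since $|f(z)|\le|f(0)|+L\|z\|_{2}$, the function $f$ has all exponential moments, so $\E f(Z)$ and $\E e^{\lambda f(Z)}$ are finite. It suffices to prove the upper tail. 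The lower tail follows by applying the same bound to $-f$, which is also $L$-Lipschitz.

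\textbf{Reduction to smooth $f$.} Let $f_{\eta}=f*\varphi_{\eta}$, where $\varphi_{\eta}$ is the $\N(0,\eta^{2}I)$ density. Then $f_{\eta}$ is $C^{\infty}$ and $L$-Lipschitz, so $\|\nabla f_{\eta}\|_{2}\le L$ everywhere. Moreover $|f_{\eta}-f|\le L\eta\,\E\|Z'\|_{2}\le L\eta\sqrt{N}$ uniformly. We will show that for smooth $f$ with $\|\nabla f\|_{2}\le L$,
\[
\E e^{\lambda(f(Z)-\E f(Z))}\le e^{\lambda^{2}L^{2}/2}\quad\text{for all }\lambda\ge0.
\]
Uniform convergence then passes this bound to $f$ as $\eta\to0$, since both the exponent and the mean shift by at most $L\eta\sqrt{N}$. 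Markov's inequality with $\lambda=x/L^{2}$ gives $\Pr[f-\E f\ge x]\le e^{-x^{2}/(2L^{2})}$.

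\textbf{Herbst's argument.} Assume the Gaussian log-Sobolev inequality
\[
\mathrm{Ent}(g^{2}):=\E[g^{2}\log g^{2}]-\E g^{2}\log\E g^{2}\le2\,\E\|\nabla g\|_{2}^{2}
\]
for smooth $g$. Apply it to $g=e^{\lambda f/2}$. Then $\|\nabla g\|_{2}^{2}=\tfrac{\lambda^{2}}{4}\|\nabla f\|_{2}^{2}e^{\lambda f}$, so
\[
\mathrm{Ent}(e^{\lambda f})\le\tfrac{\lambda^{2}L^{2}}{2}\E e^{\lambda f}.
\]
Write $M(\lambda)=\E e^{\lambda(f-\E f)}$. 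Differentiation under the expectation is justified by the exponential moments. The entropy bound becomes $\lambda M'(\lambda)-M(\lambda)\log M(\lambda)\le\tfrac{\lambda^{2}L^{2}}{2}M(\lambda)$. Hence $H(\lambda):=\lambda^{-1}\log M(\lambda)$ satisfies $H'(\lambda)\le L^{2}/2$. Also $H(0^{+})=M'(0)/M(0)=0$ because the variable is centered. Integrating gives $\log M(\lambda)\le\lambda^{2}L^{2}/2$, as required.

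\textbf{Main obstacle: the log-Sobolev inequality itself.} We would first prove it in dimension one and then tensorize. Tensorization uses the subadditivity of entropy for product measures, $\mathrm{Ent}(h)\le\sum_{j}\E[\mathrm{Ent}_{j}(h)]$, where $\mathrm{Ent}_{j}$ is entropy in the $j$-th coordinate with the others fixed. Summing the one-dimensional bounds over coordinates gives $2\E\|\nabla g\|_{2}^{2}$.

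For the one-dimensional inequality, or directly in $\mathbb{R}^{N}$, we would try the Ornstein–Uhlenbeck semigroup $P_{s}h(x)=\E h(e^{-s}x+\sqrt{1-e^{-2s}}Z)$. Take $h=g^{2}>0$ and write
\[
\mathrm{Ent}(h)=\int_{0}^{\infty}\E\left[\frac{\|\nabla P_{s}h\|_{2}^{2}}{P_{s}h}\right]ds.
\]
Then use the commutation $\nabla P_{s}h=e^{-s}P_{s}\nabla h$ together with Cauchy–Schwarz, $\|P_{s}\nabla h\|_{2}^{2}\le P_{s}h\cdot P_{s}(\|\nabla h\|_{2}^{2}/h)$. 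Invariance of the Gaussian measure under $P_{s}$ then gives
\[
\mathrm{Ent}(h)\le\int_{0}^{\infty}e^{-2s}ds\;\E\left[\frac{\|\nabla h\|_{2}^{2}}{h}\right]=\tfrac12\E\left[\frac{\|\nabla h\|_{2}^{2}}{h}\right]=2\E\|\nabla g\|_{2}^{2}.
\]
The delicate points are justifying the semigroup interchanges and the limits at $s=0$ and $s=\infty$. We would handle these first for bounded smooth $g$ bounded away from zero and then extend by truncation. If that becomes heavy, we would fall back on citing Gross's inequality.
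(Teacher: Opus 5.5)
Your proof is correct. The paper gives no proof of this statement; it only cites Boucheron--Lugosi--Massart. The proof there has the same architecture as yours:
\begin{itemize}
\item reduce by density to differentiable $f$ with $\|\nabla f\|_2\le L$;
\item apply the Gaussian log-Sobolev inequality to $e^{\lambda f/2}$ to get $\mathrm{Ent}(e^{\lambda f})\le\frac{\lambda^2L^2}{2}\E e^{\lambda f}$;
\item integrate Herbst's differential inequality for $\lambda^{-1}\log M(\lambda)$.
\end{itemize}

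The one real divergence is how you prove the log-Sobolev inequality itself. As far as I recall, the cited text derives it from the log-Sobolev inequality on the hypercube (a tensorized two-point inequality). It applies that inequality to normalized Rademacher sums and passes to the Gaussian limit by the central limit theorem. You instead use the Ornstein--Uhlenbeck interpolation $\mathrm{Ent}(h)=\int_0^\infty\E\bigl[\|\nabla P_sh\|_2^2/P_sh\bigr]\,ds$, together with the commutation $\nabla P_s=e^{-s}P_s\nabla$ and Cauchy--Schwarz.

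Each route has a cost:
\begin{itemize}
\item Your route works directly in $\mathbb{R}^{N}$, so the tensorization step you mention is optional, and it gives the sharp constant in one line. The price is the semigroup interchange and regularity bookkeeping you already flag.
\item The CLT route needs only an elementary two-point inequality. The price is a second-order Taylor expansion and a limiting argument.
\end{itemize}

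For your truncation step, one clean device is to run Herbst on $\varphi_n\circ f$, where $\varphi_n$ is smooth and bounded with $\varphi_n(0)=0$, $0\le\varphi_n'\le 1$, and $\varphi_n(t)=t$ on $[-n,n]$. This choice has three consequences:
\begin{itemize}
\item the gradient bound $L$ is preserved;
\item $e^{\lambda\varphi_n(f)/2}$ is bounded and bounded away from zero, so your restricted form of the log-Sobolev inequality applies;
\item the moment-generating bound passes to $f$ by dominated convergence, since $|\varphi_n(t)|\le|t|$.
\end{itemize}
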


\begin{lem}[{\citet[Theorem II.13]{DavidsonSzarek01}}]
\label{lem:gaussian-norm} Let $G\sim\N(0,1)^{q\times k}$. For every
$x\ge0$, $\Pr\left[\left\Vert G\right\Vert _{2}>\sqrt{q}+\sqrt{k}+x\right]\le e^{-x^{2}/2}$.
\end{lem}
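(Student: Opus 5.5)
The plan is to split the claim into two parts: a bound $\E\left\Vert G\right\Vert _{2}\le\sqrt{q}+\sqrt{k}$ on the expectation, and Gaussian concentration of $\left\Vert G\right\Vert _{2}$ around its mean. Combining them gives $\Pr\left[\left\Vert G\right\Vert _{2}>\sqrt{q}+\sqrt{k}+x\right]\le\Pr\left[\left\Vert G\right\Vert _{2}-\E\left\Vert G\right\Vert _{2}\ge x\right]\le e^{-x^{2}/2}$.

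\textbf{Concentration.} I would apply Theorem \ref{thm:gaussian-concentration} to $f(G)=\left\Vert G\right\Vert _{2}$. The theorem is stated for square $d\times d$ matrices, but its proof only uses that $G$ is a standard Gaussian vector in Euclidean space, so it applies verbatim to $\R^{q\times k}\cong\R^{qk}$. Alternatively, one can embed $G$ as the top-left block of a square Gaussian matrix and compose with the $1$-Lipschitz block-extraction map. The function $f$ is $1$-Lipschitz with respect to the Euclidean (Frobenius) norm, since
\[
\bigl|\left\Vert G\right\Vert _{2}-\left\Vert G'\right\Vert _{2}\bigr|\le\left\Vert G-G'\right\Vert _{2}\le\left\Vert G-G'\right\Vert _{F}.
\]
Hence, with $L=1$, $\Pr\left[\left\Vert G\right\Vert _{2}-\E\left\Vert G\right\Vert _{2}\ge x\right]\le e^{-x^{2}/2}$.

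\textbf{Expectation bound.} This is the main step, and I would prove it by a Gaussian comparison argument (Gordon / Sudakov--Fernique). Write $\left\Vert G\right\Vert _{2}=\max_{u\in S^{q-1},\,v\in S^{k-1}}X_{u,v}$, where $X_{u,v}:=u^{\top}Gv$, and compare it with the process $Y_{u,v}:=\langle g,u\rangle+\langle h,v\rangle$, where $g\sim\N(0,I_{q})$ and $h\sim\N(0,I_{k})$ are independent. For unit vectors, write $a=\langle u,u'\rangle$ and $b=\langle v,v'\rangle$. The increments satisfy
\[
\E\left(X_{u,v}-X_{u',v'}\right)^{2}=\left\Vert uv^{\top}-u'v'^{\top}\right\Vert _{F}^{2}=2-2ab,
\qquad
\E\left(Y_{u,v}-Y_{u',v'}\right)^{2}=4-2a-2b.
\]
Their difference is $(4-2a-2b)-(2-2ab)=2(1-a)(1-b)\ge0$, since $a,b\le1$. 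So the increments of $X$ are dominated by those of $Y$.

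The Sudakov--Fernique inequality then gives
\[
\E\max X\le\E\max Y=\E\left\Vert g\right\Vert _{2}+\E\left\Vert h\right\Vert _{2}\le\sqrt{q}+\sqrt{k},
\]
where the last step uses Jensen's inequality, $\E\left\Vert g\right\Vert _{2}\le\sqrt{\E\left\Vert g\right\Vert _{2}^{2}}=\sqrt{q}$. To make the comparison rigorous, I would first apply it on finite nets of the spheres and then pass to the limit using continuity of the processes. This comparison step is the only nontrivial ingredient; once it is in place, combining it with the concentration bound above completes the proof.
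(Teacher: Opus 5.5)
Your argument is correct, but it takes a different route from the paper. The paper does not prove the bound itself. It rescales to $\Gamma=G/\sqrt{q}$ (assuming $k\le q$) and applies Davidson--Szarek's Theorem II.13 in the form $\Pr[s_1(\Gamma)\ge 1+\sqrt{k/q}+t]\le e^{-qt^2/2}$ with $t=x/\sqrt{q}$. It handles $k>q$ by transposing.

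You instead reprove the cited theorem in two steps. First, you compare $u^\top G v$ with $\langle g,u\rangle+\langle h,v\rangle$ via Sudakov--Fernique to get $\E\|G\|_2\le\sqrt{q}+\sqrt{k}$. Your increment computation is right: the difference is $2(1-a)(1-b)\ge 0$, and Sudakov--Fernique needs only domination of increments, not equal variances. Second, the $1$-Lipschitz concentration of $\|G\|_2$ from Theorem~\ref{thm:gaussian-concentration} gives the tail. This is essentially the argument behind the Davidson--Szarek result itself.

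What each buys: the paper's version is short, but it depends on an external normalization (entries of variance $1/n$, with $m\le n$) that must be matched by rescaling and a transpose. Yours is self-contained and symmetric in $q$ and $k$, so no case split is needed. It uses only the concentration inequality the paper already states plus one standard comparison inequality. Your extension of Theorem~\ref{thm:gaussian-concentration} to $q\times k$ matrices, or alternatively embedding $G$ as a block of a square Gaussian matrix, is fine. So is the finite-net limiting step.
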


\begin{proof}
The first assertion follows from Davidson-Szarek bound \citet[Theorem II.13]{DavidsonSzarek01}.
We must rescale our matrix to match their normalization. Assume first
that $k\le q$, and define $\Gamma:=\frac{1}{\sqrt{q}}G$. Then $\Gamma\in\R^{q\times k}$
has independent $\N\left(0,1/q\right)$ entries. Apply the Davidson-Szarek
Theorem with parameters $n=q,m=k,\beta=\frac{k}{q},t=\frac{x}{\sqrt{q}},$and
using the identity $\left\Vert G\right\Vert _{2}=\sqrt{q}\,s_{1}(\Gamma)$,
we obtain: 
\begin{align*}
\Pr\left[\left\Vert G\right\Vert _{2}\ge\sqrt{q}+\sqrt{k}+x\right] & =\Pr\left[s_{1}(\Gamma)\ge1+\sqrt{\frac{k}{q}}+\frac{x}{\sqrt{q}}\right]\\
 & \le\exp\left(-\frac{q}{2}\left(\frac{x}{\sqrt{q}}\right)^{2}\right)=e^{-x^{2}/2}.
\end{align*}
If $k>q$, we apply the preceding argument to $G^{\top}$, since $\left\Vert G^{\top}\right\Vert _{2}=\left\Vert G\right\Vert _{2}$.

\end{proof}

\begin{lem}
\label{lem:gaussian-cross} Let $M\in\mathbb{R}^{q\times k}$ and
$R\in\mathbb{R}^{k\times k}$ be fixed matrices, and let $G\sim\mathcal{N}\left(0,1\right)^{q\times k}$.
For every $x\geq0$, $\Pr\left[\left\Vert M^{\top}GR\right\Vert _{2}>\left\Vert M\right\Vert _{2}\left\Vert R\right\Vert _{2}\left(2\sqrt{k}+x\right)\right]\leq e^{-x^{2}/2}$.
\end{lem}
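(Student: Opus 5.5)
We need to prove that $\left\Vert M^{\top}GR\right\Vert_{2}\le\left\Vert M\right\Vert_{2}\left\Vert R\right\Vert_{2}(2\sqrt{k}+x)$ with probability at least $1-e^{-x^{2}/2}$, where $M\in\R^{q\times k}$, $R\in\R^{k\times k}$ and $G\sim\N(0,1)^{q\times k}$. The plan is to reduce to a $k\times k$ Gaussian matrix using rotational invariance and then apply Lemma \ref{lem:gaussian-norm}.

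\textbf{Step 1: peel off $R$.} Submultiplicativity gives $\left\Vert M^{\top}GR\right\Vert_{2}\le\left\Vert M^{\top}G\right\Vert_{2}\left\Vert R\right\Vert_{2}$. So it suffices to show
\[
\Pr\left[\left\Vert M^{\top}G\right\Vert_{2}>\left\Vert M\right\Vert_{2}(2\sqrt{k}+x)\right]\le e^{-x^{2}/2}.
\]

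\textbf{Step 2: reduce $M^{\top}G$ to a $k\times k$ Gaussian.} Take a thin SVD $M=W\Sigma_{M}Y^{\top}$, where:
\begin{itemize}
\item $W\in\R^{q\times r}$ has orthonormal columns;
\item $\Sigma_{M}\in\R^{r\times r}$ is diagonal with entries at most $\left\Vert M\right\Vert_{2}$;
\item $Y\in\R^{k\times r}$ has orthonormal columns;
\item $r\le\min\{q,k\}$.
\end{itemize}
Then $M^{\top}G=Y\Sigma_{M}W^{\top}G$, so $\left\Vert M^{\top}G\right\Vert_{2}\le\left\Vert M\right\Vert_{2}\left\Vert W^{\top}G\right\Vert_{2}$.

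By rotational invariance, $W^{\top}G\sim\N(0,1)^{r\times k}$: each column of $G$ is standard Gaussian in $\R^{q}$, and $W^{\top}$ has orthonormal rows. Since $r\le k$, pad $W^{\top}G$ with $k-r$ independent Gaussian rows to obtain $\widehat{G}\sim\N(0,1)^{k\times k}$. Its spectral norm dominates $\left\Vert W^{\top}G\right\Vert_{2}$, because $W^{\top}G$ is a row-submatrix of $\widehat{G}$. If $M=0$ the claim is trivial.

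\textbf{Step 3: apply the Gaussian norm bound.} Applying Lemma \ref{lem:gaussian-norm} with $q=k$ gives $\Pr\left[\left\Vert \widehat{G}\right\Vert_{2}>2\sqrt{k}+x\right]\le e^{-x^{2}/2}$. Chaining this with Steps 1 and 2 yields the claim.

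\textbf{Main obstacle.} The only delicate point is the distributional identity in Step 2 together with the domination argument when $r<k$. This is routine, so I expect no real difficulty.
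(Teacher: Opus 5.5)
Your proposal is correct and follows essentially the paper's route: remove $R$ by submultiplicativity, take a compact SVD of $M$, use rotational invariance to see that the left-singular-vector projection of $G$ is an $r\times k$ standard Gaussian matrix, and then apply Lemma~\ref{lem:gaussian-norm}. The only difference is that you pad to a $k\times k$ Gaussian and use row-submatrix domination, whereas the paper applies Lemma~\ref{lem:gaussian-norm} directly to the $r\times k$ matrix and uses $\sqrt{r}\le\sqrt{k}$. Both are valid; the paper's version is slightly more direct.
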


\begin{proof}
Let $r=\mathrm{rank}\left(M\right)\leq k$, and write a compact singular
value decomposition $M=P\Lambda Q^{\top}$, where $P\in\mathbb{R}^{q\times r}$
has orthonormal columns. By rotational invariance, $P^{\top}G\sim\mathcal{N}\left(0,1\right)^{r\times k}$.
Hence Lemma~\ref{lem:gaussian-norm} gives 
\[
\Pr\left[\left\Vert P^{\top}G\right\Vert _{2}\leq\sqrt{r}+\sqrt{k}+x\leq2\sqrt{k}+x\right]\geq1-e^{-x^{2}/2}
\]
But $\left\Vert M^{\top}GR\right\Vert _{2}=\left\Vert Q\Lambda P^{\top}GR\right\Vert _{2}\leq\left\Vert M\right\Vert _{2}\left\Vert P^{\top}G\right\Vert _{2}\left\Vert R\right\Vert _{2}$,
which implies that:
\begin{align*}
\Pr\left[\left\Vert M^{\top}GR\right\Vert _{2}>\left\Vert M\right\Vert _{2}\left\Vert R\right\Vert _{2}\left(2\sqrt{k}+x\right)\right] & \leq\Pr\left[\left\Vert M\right\Vert _{2}\left\Vert P^{\top}G\right\Vert _{2}\left\Vert R\right\Vert _{2}>\left\Vert M\right\Vert _{2}\left\Vert R\right\Vert _{2}\left(2\sqrt{k}+x\right)\right]\\
 & =\Pr\left[\left\Vert P^{\top}G\right\Vert _{2}>2\sqrt{k}+x\right]\\
 & \leq e^{-x^{2}/2}
\end{align*}
\end{proof}

\textbf{Spectral analysis tools. }Our analysis uses Weyl's inequality
and the Davis-Kahan $\sin\Theta$ theorem. 
\begin{thm}[Weyl's inequality]
\label{thm:weyl} Let $M$ and $W$ be two symmetric matrices and
for a matrix $A$, let $\lambda_{i}\left(A\right)$ denote the $i$-th
largest eigenvalue of $A$. For every $i$, we have
\begin{align*}
\left|\lambda_{i}\left(M+W\right)-\lambda_{i}\left(M\right)\right| & \le\left\Vert W\right\Vert _{2}.
\end{align*}
\end{thm}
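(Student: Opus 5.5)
Weyl's inequality is a standard fact, and I would prove it from the Courant--Fischer min--max characterization of eigenvalues of symmetric matrices. For a symmetric $S\in\R^{d\times d}$ with eigenvalues $\lambda_{1}(S)\ge\dots\ge\lambda_{d}(S)$, this characterization reads
\[
\lambda_{i}(S)=\max_{\dim\mathcal{U}=i}\ \min_{x\in\mathcal{U},\|x\|_{2}=1}x^{\top}Sx .
\]
I would first state it and justify it briefly. Write $S=\sum_{j}\lambda_{j}q_{j}q_{j}^{\top}$. Taking $\mathcal{U}=\mathrm{span}(q_{1},\dots,q_{i})$ shows that the max is at least $\lambda_{i}(S)$. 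For the reverse inequality, any $i$-dimensional subspace $\mathcal{U}$ meets $\mathrm{span}(q_{i},\dots,q_{d})$ nontrivially, because the dimensions sum to $d+1$. On a unit vector in that intersection the Rayleigh quotient is at most $\lambda_{i}(S)$.

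The main step is a comparison of Rayleigh quotients. For every unit vector $x$ we have
\[
x^{\top}Mx-\|W\|_{2}\le x^{\top}(M+W)x\le x^{\top}Mx+\|W\|_{2},
\]
since $|x^{\top}Wx|\le\|W\|_{2}$ for symmetric $W$. Fix an $i$-dimensional subspace $\mathcal{U}$ and take the minimum over unit $x\in\mathcal{U}$ in each part of this chain. The bounds survive because they are uniform in $x$. Then take the maximum over $\mathcal{U}$. By the min--max formula this gives
\[
\lambda_{i}(M)-\|W\|_{2}\le\lambda_{i}(M+W)\le\lambda_{i}(M)+\|W\|_{2},
\]
which is exactly $|\lambda_{i}(M+W)-\lambda_{i}(M)|\le\|W\|_{2}$.

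No step is a real obstacle. The only point that needs any care is the dimension-counting argument inside the Courant--Fischer characterization, which is standard and could simply be cited. Everything after it is a monotonicity argument: a pointwise shift of a function by a constant shifts its min and max by the same constant.
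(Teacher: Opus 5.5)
Your proof is correct. The Courant--Fischer characterization, including your dimension-counting step, combined with the uniform bound $|x^{\top}Wx|\le\|W\|_{2}$ on unit vectors gives both one-sided inequalities.

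The paper does not prove this statement. It quotes Weyl's inequality as a standard tool, used for instance in Lemma~\ref{lem:davis-kahan-error} and in the Lipschitz claim for the initialization. Your argument is the textbook proof the paper takes for granted.

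One small remark: $|x^{\top}Wx|\le\|W\|_{2}$ holds for any $W$ by Cauchy--Schwarz. Symmetry is needed only so that $M$ and $M+W$ have real eigenvalues to which the min--max formula applies.
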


Let us also recall the definition of principal angles between two
subspaces. Let $V,\widehat{V}\in\R^{d\times k}$ have orthonormal
columns. Let $\sigma_{1},\dots,\sigma_{k}$ be the singular values
of $\widehat{V}^{\top}V$. The principal angles between the column
spaces of $V$ and $\widehat{V}$ are $\theta_{i}:=\arccos\left(\sigma_{i}\right)$
for $1\leq i\leq k$. We let $\Theta\left(V,\widehat{V}\right)=\diag\left(\theta_{1},\dots,\theta_{k}\right)$
and $\sin\Theta\left(V,\widehat{V}\right)=\diag\left(\sin\left(\theta_{1}\right),\dots,\sin\left(\theta_{k}\right)\right)$.
\begin{thm}[{Davis-Kahan $\sin\Theta$ theorem, \citet[Theorem V.3.6]{StewartSun1990},
\citet[Theorem 1]{YuWangSamworth2015}}]
\label{thm:davis-kahan} Let $M,\widehat{M}\in\R^{d\times d}$ be
symmetric matrices with eigenvalues $\lambda_{1}\geq\dots\geq\lambda_{d}$
and $\widehat{\lambda}_{1}\geq\dots\geq\widehat{\lambda}_{d}$, respectively.
Fix $1\leq r\leq s\leq d$, let $k=s-r+1$ and let $V=\left(v_{r},v_{r+1},\dots,v_{s}\right)\in\R^{d\times k}$
and $\widehat{V}=\left(\widehat{v}_{r},\widehat{v}_{r+1},\dots,\widehat{v}_{s}\right)\in\R^{d\times k}$
have orthonormal columns satisfying $Mv_{i}=\lambda_{i}v_{i}$ and
$\widehat{M}\widehat{v}_{i}=\widehat{\lambda}_{i}\widehat{v}_{i}$
for all $r\leq i\leq s$. Let 
\[
\delta:=\inf\left\{ \left|\lambda-\widehat{\lambda}\right|\colon\lambda\in\left[\lambda_{s},\lambda_{r}\right],\widehat{\lambda}\in(-\infty,\widehat{\lambda}_{s+1}]\cup[\widehat{\lambda}_{r-1},+\infty)\right\} 
\]
where $\widehat{\lambda}_{0}:=+\infty$ and $\widehat{\lambda}_{d+1}=-\infty$. 

If $\delta>0$, we have
\[
\left\Vert \left(I-\widehat{V}\widehat{V}^{\top}\right)V\right\Vert _{F}=\left\Vert \sin\Theta\left(V,\widehat{V}\right)\right\Vert _{F}\leq\frac{\left\Vert M-\widehat{M}\right\Vert _{F}}{\delta},
\]
and
\[
\left\Vert \left(I-\widehat{V}\widehat{V}^{\top}\right)V\right\Vert _{2}=\left\Vert \sin\Theta\left(V,\widehat{V}\right)\right\Vert _{2}\leq\frac{\left\Vert M-\widehat{M}\right\Vert _{2}}{\delta}.
\]
\end{thm}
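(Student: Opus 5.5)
This is the Davis--Kahan $\sin\Theta$ theorem, a classical result that the paper imports rather than proves. The plan follows the standard Stewart--Sun / Yu--Wang--Samworth argument via a Sylvester-type equation.

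\emph{Step 1: set-up and the key identity.} Write the eigendecompositions of $M$ and $\widehat{M}$. Let $\Lambda=\diag(\lambda_r,\dots,\lambda_s)$, so that $MV=V\Lambda$. Let $\widehat{V}_\perp\in\R^{d\times(d-k)}$ collect the remaining orthonormal eigenvectors of $\widehat{M}$, with $\widehat{M}\widehat{V}_\perp=\widehat{V}_\perp\widehat{\Lambda}_\perp$; the diagonal entries of $\widehat{\Lambda}_\perp$ are exactly the $\widehat{\lambda}_j$ with $j\notin[r,s]$. Since $I-\widehat{V}\widehat{V}^\top=\widehat{V}_\perp\widehat{V}_\perp^\top$, in both norms
\[
\|(I-\widehat{V}\widehat{V}^\top)V\|=\|\widehat{V}_\perp^\top V\|.
\]
The equality with $\|\sin\Theta\|$ follows because the singular values of $\widehat{V}_\perp^\top V$ are $\sqrt{1-\cos^2\theta_i}$. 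This in turn holds because $V^\top\widehat{V}\widehat{V}^\top V+V^\top\widehat{V}_\perp\widehat{V}_\perp^\top V=I$.

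Multiplying $(\widehat{M}-M)V$ on the left by $\widehat{V}_\perp^\top$ gives the Sylvester equation
\[
\widehat{\Lambda}_\perp X - X\Lambda=\widehat{V}_\perp^\top(\widehat{M}-M)V,\qquad X:=\widehat{V}_\perp^\top V.
\]
The right-hand side has norm at most $\|\widehat{M}-M\|$, in both the Frobenius and spectral norms.

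\emph{Step 2: Frobenius bound.} Entrywise, $X_{ij}(\widehat{\mu}_i-\lambda_j)=Y_{ij}$, where $\widehat{\mu}_i\in(-\infty,\widehat{\lambda}_{s+1}]\cup[\widehat{\lambda}_{r-1},\infty)$ and $\lambda_j\in[\lambda_s,\lambda_r]$. By the definition of $\delta$, $|\widehat{\mu}_i-\lambda_j|\ge\delta$, so $\|X\|_F\le\|Y\|_F/\delta$. This gives the first bound.

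\emph{Step 3: spectral bound, the main obstacle.} The entrywise argument does not control the operator norm, so here I split $\widehat{V}_\perp$ into an upper part (eigenvalues $\ge\widehat{\lambda}_{r-1}$) and a lower part (eigenvalues $\le\widehat{\lambda}_{s+1}$).

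For the upper block $X_+$, set $c=(\widehat{\lambda}_{r-1}+\lambda_r)/2$. Then $\widehat{\Lambda}_+-c\succeq\delta/2$, $\|\Lambda-c\|\le c-\lambda_s$, and $\widehat{\Lambda}_+ - c\ge \lambda_r+\delta-c$. A standard Bhatia--Davis--McIntosh argument for separated spectra on opposite sides of an interval yields $\|X_+\|\le\|Y_+\|/\delta$. Concretely, if the spectra are separated by a gap $\delta$ with one set lying above the other, write $X_+=\int_0^\infty e^{-t(\widehat{\Lambda}_+-a)}Y_+e^{-t(b-\Lambda)}dt$ with $a\le b$ chosen so that $b-a=\delta$. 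Since $\widehat{\Lambda}_+-a\succeq\delta/2$ and $b-\Lambda\succeq\delta/2$, integrating $e^{-t\delta}$ gives $\|X_+\|_2\le\|Y_+\|_2/\delta$. The lower block is treated symmetrically.

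Combining the two blocks requires care, since naively one only gets a factor $\sqrt2$. This is where I would follow Stewart--Sun Theorem V.3.6 exactly. In the typical use, $r=1$ (as in Algorithm~\ref{alg:Algorithm-initialize}, where the top $k$ eigenvectors are taken), the upper block is empty, so only the one-sided integral argument is needed and it gives exactly $\|X\|_2\le\|\widehat{M}-M\|_2/\delta$.
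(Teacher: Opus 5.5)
Your Steps~1 and~2 are correct. The reduction to $X=\widehat V_\perp^\top V$, the Sylvester equation $\widehat\Lambda_\perp X-X\Lambda=Y$ with $\|Y\|\le\|\widehat M-M\|$, and the entrywise Frobenius bound together prove the first inequality. The one-sided spectral argument is also sound, provided the integral uses a \emph{common} shift, e.g.\ $m=\lambda_r+\delta/2$ and $X_+=\int_0^\infty e^{-t(\widehat\Lambda_+-mI)}Y_+e^{-t(mI-\Lambda)}\,dt$. With two shifts satisfying $b-a=\delta$, as you wrote, the integral instead solves $\widehat\Lambda_+X-X\Lambda+(b-a)X=Y_+$. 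The paper gives no proof of its own, only the citation, and it only invokes the theorem with $r=1$ (Lemma~\ref{lem:davis-kahan-error}, and the Frobenius form in Claim~\ref{claim:em-lipchitz}). So your one-sided argument already covers every use in the paper.

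As a proof of the statement as written, however, Step~3 has a genuine gap. For $1<r\le s<d$ the complementary eigenvalues of $\widehat M$ lie on both sides of $[\lambda_s,\lambda_r]$. Bounding the upper and lower blocks separately only yields $\sqrt2\,\|\widehat M-M\|_2/\delta$, and deferring to Stewart--Sun at exactly this point leaves the constant-$1$ spectral bound unproved.

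The missing idea is not to split at all, but to center at the midpoint of the interval. Put $c=(\lambda_r+\lambda_s)/2$ and $\rho=(\lambda_r-\lambda_s)/2$, so $\|\Lambda-cI\|_2\le\rho$. Positivity of $\delta$ forces $\widehat\lambda_{r-1}\ge\lambda_r+\delta$ and $\widehat\lambda_{s+1}\le\lambda_s-\delta$. Hence every diagonal entry $\mu$ of $\widehat\Lambda_\perp$ satisfies $|\mu-c|\ge\rho+\delta$, and $\|(\widehat\Lambda_\perp-cI)^{-1}\|_2\le1/(\rho+\delta)$. Write the Sylvester equation as $X=(\widehat\Lambda_\perp-cI)^{-1}Y+(\widehat\Lambda_\perp-cI)^{-1}X(\Lambda-cI)$ and iterate $N$ times. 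The remainder is at most $(\rho/(\rho+\delta))^N\|X\|\to0$, so
\[
\|X\|\le\sum_{n\ge0}\frac{\rho^n}{(\rho+\delta)^{n+1}}\|Y\|=\frac{\|Y\|}{\delta}
\]
in the spectral norm, and equally in the Frobenius norm (indeed in any unitarily invariant norm). This is the original Davis--Kahan argument, and it handles both sides of the interval at once with no loss.
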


Weyl's inequality (Theorem \ref{thm:weyl}) and the Davis-Kahan theorem
(Theorem \ref{thm:davis-kahan}) imply the following lemma. A similar
argument appears in \citet{DBLP:conf/stoc/DworkTT014}.
\begin{lem}
\label{lem:davis-kahan-error} Let $\widehat{M}=A^{\top}A+W$ be a
symmetric matrix, and let $\Delta_{k}=\lambda_{k}\left(A^{\top}A\right)-\lambda_{k+1}\left(A^{\top}A\right)>0$
denote the spectral gap. If $\left\Vert W\right\Vert _{2}<\Delta_{k}$
, the top-$k$ eigenspace $\widehat{V}_{k}$ of $\widehat{M}$ satisfies
\[
\left\Vert \left(I-\widehat{V}_{k}\widehat{V}^{\top}_{k}\right)V_{k}\right\Vert _{2}\le\frac{\left\Vert W\right\Vert _{2}}{\Delta_{k}-\left\Vert W\right\Vert _{2}}.
\]
\end{lem}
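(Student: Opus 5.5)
The plan is to combine Weyl's inequality (Theorem \ref{thm:weyl}) with the Davis-Kahan $\sin\Theta$ theorem (Theorem \ref{thm:davis-kahan}). Set $M=A^{\top}A$, with eigenvalues $\lambda_{i}=\sigma_{i}^{2}$ and top-$k$ eigenvectors $V_{k}$. Let $\widehat{M}=M+W$ have eigenvalues $\widehat{\lambda}_{1}\ge\dots\ge\widehat{\lambda}_{d}$ and top-$k$ eigenvectors $\widehat{V}_{k}$. We apply Theorem \ref{thm:davis-kahan} with $r=1$ and $s=k$, which gives $\widehat{\lambda}_{0}=+\infty$. In that case the excluded set for $\widehat{\lambda}$ is just $(-\infty,\widehat{\lambda}_{k+1}]$. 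The interval for $\lambda$ is $[\lambda_{k},\lambda_{1}]$. The Davis-Kahan separation is therefore
\[
\delta=\inf\left\{ \left|\lambda-\widehat{\lambda}\right|\colon\lambda\ge\lambda_{k},\ \widehat{\lambda}\le\widehat{\lambda}_{k+1}\right\} =\max\left\{ \lambda_{k}-\widehat{\lambda}_{k+1},0\right\}.
\]
This identity holds because every admissible $\lambda$ lies at or above $\lambda_{k}$ and every admissible $\widehat{\lambda}$ lies at or below $\widehat{\lambda}_{k+1}$.

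Next we bound $\widehat{\lambda}_{k+1}$ using Weyl's inequality applied to $M$ and $W$. It gives $\widehat{\lambda}_{k+1}\le\lambda_{k+1}+\|W\|_{2}$. Hence
\[
\delta\ge\lambda_{k}-\lambda_{k+1}-\|W\|_{2}=\Delta_{k}-\|W\|_{2},
\]
and this is strictly positive by the hypothesis $\|W\|_{2}<\Delta_{k}$.

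The spectral-norm conclusion of Theorem \ref{thm:davis-kahan} then gives
\[
\left\Vert \left(I-\widehat{V}_{k}\widehat{V}_{k}^{\top}\right)V_{k}\right\Vert _{2}\le\frac{\|M-\widehat{M}\|_{2}}{\delta}=\frac{\|W\|_{2}}{\delta}\le\frac{\|W\|_{2}}{\Delta_{k}-\|W\|_{2}}.
\]

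A well-definedness issue remains when $\widehat{M}$ has a repeated eigenvalue at $\widehat{\lambda}_{k}=\widehat{\lambda}_{k+1}$. The version of Davis-Kahan used here only needs $\widehat{V}_{k}$ to consist of orthonormal eigenvectors for $\widehat{\lambda}_{1},\dots,\widehat{\lambda}_{k}$. The bound above therefore holds for any such choice.

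The only delicate step is identifying $\delta$ correctly from the somewhat unusual infimum in Theorem \ref{thm:davis-kahan}. That amounts to checking that the side $[\widehat{\lambda}_{0},\infty)$ is empty, and the rest is routine.
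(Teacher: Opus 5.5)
Your proposal is correct and matches the paper's proof essentially step for step: apply Theorem~\ref{thm:davis-kahan} with $M=A^{\top}A$, $\widehat{M}=A^{\top}A+W$, $r=1$, $s=k$, identify the separation as $\lambda_{k}(A^{\top}A)-\lambda_{k+1}(A^{\top}A+W)$, and lower bound it by $\Delta_{k}-\|W\|_{2}$ via Weyl's inequality. Your extra remarks on the infimum (including the $\max\{\cdot,0\}$) and on ties at $\widehat{\lambda}_{k}=\widehat{\lambda}_{k+1}$ are sound; the tie case is moot anyway, since a tie can only occur when $\|W\|_{2}\ge\Delta_{k}/2$, and there the claimed bound is at least $1$.
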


\begin{proof}
We apply the Davis--Kahan $\sin\Theta$ theorem (Theorem \ref{thm:davis-kahan})
with $M=A^{\top}A$ and $\widehat{M}=A^{\top}A+W$, and $r=1$ and
$s=k$. The spectral separation appearing in the theorem is $\delta:=\lambda_{k}\left(A^{\top}A\right)-\lambda_{k+1}\left(A^{\top}A+W\right)$.
We obtain
\[
\left\Vert \left(I-\widehat{V}_{k}\widehat{V}^{\top}_{k}\right)V_{k}\right\Vert _{2}=\left\Vert \sin\Theta\left(\widehat{V}_{k},V_{k}\right)\right\Vert _{2}\le\frac{\left\Vert W\right\Vert _{2}}{\delta}.
\]
We can upper bound $\lambda_{k+1}\left(A^{\top}A+W\right)$ using
Weyl's inequality (Theorem \ref{thm:weyl}):
\[
\lambda_{k+1}\left(A^{\top}A+W\right)\le\lambda_{k+1}\left(A^{\top}A\right)+\left\Vert W\right\Vert _{2},
\]
and thus
\begin{align*}
\delta & \geq\lambda_{k}\left(A^{\top}A\right)-\lambda_{k+1}\left(A^{\top}A\right)-\left\Vert W\right\Vert _{2}=\Delta_{k}-\left\Vert W\right\Vert _{2}.
\end{align*}
\end{proof}

\begin{lem}
\label{lem:neumann-series}Let $M$ be a matrix. If $\left\Vert M\right\Vert _{2}<1$,
$I+M$ is invertible and $\left\Vert \left(I+M\right)^{-1}\right\Vert _{2}\leq\frac{1}{1-\left\Vert M\right\Vert _{2}}$.
\end{lem}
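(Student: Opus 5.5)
The statement to prove is Lemma~\ref{lem:neumann-series}: if $\left\Vert M\right\Vert _{2}<1$, then $I+M$ is invertible and $\left\Vert \left(I+M\right)^{-1}\right\Vert _{2}\leq\frac{1}{1-\left\Vert M\right\Vert _{2}}$. I will prove it through a lower bound on $\left\Vert (I+M)x\right\Vert _{2}$, which avoids any series convergence argument; the Neumann series is noted only as an alternative.

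First, for any vector $x$, the triangle inequality and the definition of the operator norm give
\[
\left\Vert (I+M)x\right\Vert _{2}\geq\left\Vert x\right\Vert _{2}-\left\Vert Mx\right\Vert _{2}\geq\left(1-\left\Vert M\right\Vert _{2}\right)\left\Vert x\right\Vert _{2}.
\]
Since $1-\left\Vert M\right\Vert _{2}>0$, the right-hand side is positive whenever $x\neq0$. Hence $(I+M)x=0$ forces $x=0$, so the square matrix $I+M$ has trivial kernel and is invertible.

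Second, for any $y$, set $x=(I+M)^{-1}y$. The inequality above becomes $\left\Vert y\right\Vert _{2}\geq\left(1-\left\Vert M\right\Vert _{2}\right)\left\Vert (I+M)^{-1}y\right\Vert _{2}$. Taking the supremum over unit vectors $y$ gives $\left\Vert (I+M)^{-1}\right\Vert _{2}\leq\frac{1}{1-\left\Vert M\right\Vert _{2}}$.

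Alternatively, the series $\sum_{j\ge0}(-M)^{j}$ converges absolutely, because its terms have norm at most $\left\Vert M\right\Vert _{2}^{j}$ and this geometric series has sum $\frac{1}{1-\left\Vert M\right\Vert _{2}}$. Telescoping shows the limit is a two-sided inverse of $I+M$, and the same geometric sum bounds its norm. The argument is routine; the only point needing care is the strictness $\left\Vert M\right\Vert _{2}<1$, which makes the constant $1-\left\Vert M\right\Vert _{2}$ positive.
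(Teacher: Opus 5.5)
Your proof is correct and follows essentially the same route as the paper's: the lower bound $\left\Vert (I+M)x\right\Vert _{2}\geq\left(1-\left\Vert M\right\Vert _{2}\right)\left\Vert x\right\Vert _{2}$ gives both the trivial kernel and $\left\Vert (I+M)^{-1}\right\Vert _{2}\leq\frac{1}{1-\left\Vert M\right\Vert _{2}}$. The paper phrases the last step as $\sigma_{\min}(I+M)\geq1-\left\Vert M\right\Vert _{2}$ and proves invertibility separately via $z=-Mz$, but these are only cosmetic differences.
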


\begin{proof}
We can show that $I+M$ is invertible as follows. Suppose we have
$\left(I+M\right)z=0$ for some vector $z$. Then $z=-Mz$ and thus
\[
\left\Vert z\right\Vert _{2}=\left\Vert Mz\right\Vert _{2}\leq\left\Vert M\right\Vert _{2}\left\Vert z\right\Vert _{2}
\]
Since $\left\Vert M\right\Vert _{2}<1$, the above inequality can
only hold if $\left\Vert z\right\Vert _{2}=0$. Thus $z=0$ and hence
$I+M$ is invertible.

Next, we show the inequality in the lemma statement. For any vector
$z$, we have
\[
\left\Vert \left(I+M\right)z\right\Vert _{2}\geq\left\Vert z\right\Vert _{2}-\left\Vert Mz\right\Vert _{2}\geq\left(1-\left\Vert M\right\Vert _{2}\right)\left\Vert z\right\Vert _{2}
\]
Thus
\[
\sigma_{\min}\left(I+M\right)\geq1-\left\Vert M\right\Vert _{2}
\]
Combining with $\left\Vert M\right\Vert _{2}<1$ gives
\[
\left\Vert \left(I+M\right)^{-1}\right\Vert _{2}=\frac{1}{\sigma_{\min}\left(1+M\right)}\leq\frac{1}{1-\left\Vert M\right\Vert _{2}},
\]
as needed.
\end{proof}

\section{Algorithm Analysis}

\label{sec:analysis}

\subsection{Privacy Analysis}

\label{sec:alg-privacy}
\begin{lem}
\label{lem:private-initialization} Assume every row $a_{i}$ of input
matrix $A$ satisfies $\left\Vert a_{i}\right\Vert _{2}\leq1$, and
let $\epsilon_{\mathrm{init}},\delta_{\mathrm{init}}\in\left(0,1\right)$.
Algorithm \ref{alg:Algorithm-initialize} is $\left(\epsilon_{\mathrm{init}},\delta_{\mathrm{init}}\right)$-differentially
private. 
\end{lem}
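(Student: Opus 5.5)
The plan is to view Algorithm \ref{alg:Algorithm-initialize} as a Gaussian mechanism applied to the map $f(A)=A^{\top}A$, followed by post-processing. We release the upper triangle of $\widehat{M}$, including the diagonal. Computing the top $k$ eigenvectors of $\widehat{M}$ uses only this released matrix, so it cannot weaken privacy. It therefore suffices to show that the vector of upper-triangular entries of $A^{\top}A+\Delta_{\mathrm{init}}\sym(G^{(\mathrm{init})})$ is $(\epsilon_{\mathrm{init}},\delta_{\mathrm{init}})$-DP.

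\textbf{Sensitivity.} Let $A,A'$ differ in row $i$, with rows $a$ and $a'$, both of norm at most $1$. Then
\[
A^{\top}A-A'^{\top}A'=aa^{\top}-a'a'^{\top},
\]
and its Frobenius norm is at most $\|aa^{\top}\|_{F}+\|a'a'^{\top}\|_{F}=\|a\|_{2}^{2}+\|a'\|_{2}^{2}\le2$.

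\textbf{Noise distribution.} Next we compute the law of the noise $\sym(G)$. Its diagonal entries $G_{jj}$ are $\N(0,1)$. Each off-diagonal entry $(G_{jl}+G_{lj})/2$, for $j<l$, is $\N(0,1/2)$. All these entries are independent across the upper triangle. The off-diagonal noise is thus smaller than the diagonal noise, so a naive appeal to isotropic Gaussian noise does not apply directly.

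\textbf{Reweighting.} To handle the non-isotropic noise, we rescale: map the symmetric matrix $S$ to the vector $\phi(S)$ with entries $S_{jj}$ for $j$ and $\sqrt{2}S_{jl}$ for $j<l$. This map satisfies $\|\phi(S)\|_{2}=\|S\|_{F}$ and is invertible, so releasing $\phi(\widehat{M})$ is equivalent to releasing $\widehat{M}$. Under $\phi$, the noise becomes $\Delta_{\mathrm{init}}$ times an isotropic standard Gaussian vector, since each off-diagonal entry gets variance $2\cdot\tfrac12=1$. The $\ell_2$-sensitivity of $\phi(A^{\top}A)$ equals the Frobenius sensitivity, which is at most $2$.

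\textbf{Applying the Gaussian mechanism.} The mechanism requires noise standard deviation at least $\frac{2}{\epsilon_{\mathrm{init}}}\sqrt{2\log(1.25/\delta_{\mathrm{init}})}$. Our choice $\Delta_{\mathrm{init}}=2\sqrt{2\log(2/\delta_{\mathrm{init}})}/\epsilon_{\mathrm{init}}$ meets this, because $2>1.25$. The assumption $\epsilon_{\mathrm{init}}<1$ fits the theorem's hypothesis. Post-processing then concludes the proof.

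The main obstacle is the symmetrization step: one must check the variance bookkeeping carefully. A possible worry is whether the off-diagonal entries get enough noise. The $\sqrt{2}$ reweighting resolves this exactly, because the Frobenius norm counts each off-diagonal sensitivity twice, and this matches the doubling of noise variance after scaling.
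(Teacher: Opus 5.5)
Your proposal is correct and is essentially the paper's argument. Your rescaling map $\phi$ is exactly the coordinate map in the orthonormal basis $\{E_{ii}\}\cup\{(E_{ij}+E_{ji})/\sqrt{2}\}$ of the symmetric-matrix space that the paper uses, and after that map both proofs bound the Frobenius sensitivity of $A^{\top}A$ by $2$ and finish with the Gaussian mechanism plus post-processing; you also make explicit the variance bookkeeping and the check $\log(2/\delta_{\mathrm{init}})\ge\log(1.25/\delta_{\mathrm{init}})$, which the paper leaves implicit.
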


\begin{proof}
Under replacement adjacency, two neighboring datasets $A$ and $A'$
differ by exactly one row, meaning a row $a$ is replaced by $a'$.
Therefore:
\begin{align*}
A^{\top}A-A'^{\top}A' & =aa^{\top}-a'a'^{\top}
\end{align*}
and using triangle inequality, we have:
\[
\left\Vert A^{\top}A-A'^{\top}A'\right\Vert _{F}\le\left\Vert aa^{\top}\right\Vert _{F}+\left\Vert a'a'^{\top}\right\Vert _{F}=\left\Vert a\right\Vert ^{2}_{2}+\left\Vert a'\right\Vert ^{2}_{2}\leq2
\]
Thus, $\Delta\left\Vert A^{\top}A\right\Vert _{F}\leq2$.

In the orthonormal basis $\left\{ E_{ii}:1\le i\le d\right\} \cup\left\{ \frac{E_{ij}+E_{ji}}{\sqrt{2}}:1\le i<j\le d\right\} $
of the symmetric-matrix space, the coordinates of the matrix $\sym\left(G^{(\mathrm{init})}\right)$
are independent standard Gaussian random variables. Hence $\widehat{M}=A^{\top}A+\Delta_{\mathrm{init}}\sym\left(G^{(\mathrm{init})}\right)$
is the Gaussian mechanism with sensitivity at most $2$, and is therefore
$\left(\epsilon_{\mathrm{init}},\delta_{\mathrm{init}}\right)$-differentially
private. The subsequent step of computing $X^{(0)}$ from $\widehat{M}$
is a deterministic post-processing step.
\end{proof}

\begin{lem}
\label{lem:clipped-iterations-privacy} Fix an initial matrix $X^{(0)}$
with orthonormal columns, a number of iterations $T\geq1$, and a
clipping radius $R$. Let $\epsilon_{\mathrm{pow}},\delta_{\mathrm{pow}}\in\left(0,1\right)$
and set

\[
\begin{aligned}\epsilon_{\mathrm{iter}} & =\frac{\epsilon_{\mathrm{pow}}}{4\sqrt{2T\log\left(\left(T+1\right)/\delta_{\mathrm{pow}}\right)}},\\
\delta_{\mathrm{iter}} & =\frac{\delta_{\mathrm{pow}}}{2\left(T+1\right)}.
\end{aligned}
\]

Then Algorithm \ref{alg:Algorithm-power-method} is $\left(\epsilon_{\mathrm{pow}},\delta_{\mathrm{pow}}\right)$-differentially
private. 
\end{lem}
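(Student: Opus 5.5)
We treat each iteration $t$ of Algorithm \ref{alg:Algorithm-power-method} as an adaptive composition of two Gaussian mechanisms, both conditioned on the previously released outputs. The first releases $H^{(t)}$ and the second releases $Z^{(t+1)}$. Everything else is deterministic post-processing of released quantities: $Q^{(t)},\Lambda^{(t)},\widetilde{H}^{(t)},B^{(t)}$ are computed from $H^{(t)}$, and $X^{(t+1)}$ is the QR factor of $Z^{(t+1)}$. We will show that each of the two steps is $(\epsilon_{\mathrm{iter}},\delta_{\mathrm{iter}})$-DP. Then the total is $2T$ adaptively composed $(\epsilon_{\mathrm{iter}},\delta_{\mathrm{iter}})$-DP mechanisms, and we apply Theorem \ref{thm:advanced-composition}.

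\textbf{Step 1: the $H^{(t)}$ release.} Fix $X^{(t)}$, which is a function of prior outputs and has orthonormal columns. Replacing row $a$ by $a'$ changes $X^{(t)\top}A^{\top}AX^{(t)}$ by $X^{(t)\top}(aa^{\top}-a'a'^{\top})X^{(t)}$. Its Frobenius norm is at most $\|X^{(t)\top}a\|_2^2+\|X^{(t)\top}a'\|_2^2\le 2$. As in Lemma \ref{lem:private-initialization}, $\sym(G^{(t)})$ has independent standard Gaussian coordinates in the orthonormal basis of symmetric matrices. Scaling by $\Delta=2\sqrt{2\log(2/\delta_{\mathrm{iter}})}/\epsilon_{\mathrm{iter}}$ is therefore a Gaussian mechanism with sensitivity $2$ and $\sigma=\Delta\ge \frac{2}{\epsilon_{\mathrm{iter}}}\sqrt{2\log(1.25/\delta_{\mathrm{iter}})}$, which gives $(\epsilon_{\mathrm{iter}},\delta_{\mathrm{iter}})$-DP.

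We must be careful about the normalization. The off-diagonal coordinate of $\sym(G)$ along $(E_{ij}+E_{ji})/\sqrt2$ is $\sqrt2\cdot\frac{G_{ij}+G_{ji}}{2}$, which has variance $1$. The diagonal coordinate is $G_{ii}$, with variance $1$. So this step is fine.

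\textbf{Step 2: the $Z^{(t+1)}$ release.} Fix $X^{(t)},B^{(t)}$, which are functions of released values and not otherwise of the data. The map $A\mapsto\sum_i a_i\Clip_R(a_i^{\top}X^{(t)}B^{(t)})$ changes in only one summand when one row is replaced. Each summand has Frobenius norm $\|a_i\|_2\|\Clip_R(\cdot)\|_2\le R$, so the sensitivity is at most $2R$. The noise scale is $\Delta R=\frac{2R}{\epsilon_{\mathrm{iter}}}\sqrt{2\log(2/\delta_{\mathrm{iter}})}$, which again meets the Gaussian mechanism requirement, since $\log(2/\delta)\ge\log(1.25/\delta)$. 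So this step is also $(\epsilon_{\mathrm{iter}},\delta_{\mathrm{iter}})$-DP.

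\textbf{Step 3: composition.} This is the only delicate step. Applying Theorem \ref{thm:advanced-composition} with $2T$ mechanisms and slack equal to $\delta_{\mathrm{iter}}$ gives
\[
\epsilon'=\epsilon_{\mathrm{iter}}\sqrt{4T\ln(1/\delta_{\mathrm{iter}})}+2T\epsilon_{\mathrm{iter}}\tfrac{e^{\epsilon_{\mathrm{iter}}}-1}{e^{\epsilon_{\mathrm{iter}}}+1},\qquad \delta'=(2T+1)\delta_{\mathrm{iter}}.
\]
For $\delta'$: $(2T+1)\delta_{\mathrm{iter}}=\frac{(2T+1)\delta_{\mathrm{pow}}}{2(T+1)}\le\delta_{\mathrm{pow}}$.

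For the first term of $\epsilon'$: $\ln(1/\delta_{\mathrm{iter}})=\ln(2(T+1)/\delta_{\mathrm{pow}})\le 2\ln((T+1)/\delta_{\mathrm{pow}})$, because $(T+1)/\delta_{\mathrm{pow}}\ge 2$. The first term is then at most $\epsilon_{\mathrm{iter}}\sqrt{8T\ln((T+1)/\delta_{\mathrm{pow}})}=\epsilon_{\mathrm{pow}}/2$.

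For the second term: using $\frac{e^x-1}{e^x+1}\le x/2$, it is at most $T\epsilon_{\mathrm{iter}}^2=\frac{\epsilon_{\mathrm{pow}}^2}{32\ln((T+1)/\delta_{\mathrm{pow}})}$. Since $\ln((T+1)/\delta_{\mathrm{pow}})\ge\ln 2$ and $\epsilon_{\mathrm{pow}}<1$, this is below $\epsilon_{\mathrm{pow}}/2$.

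Hence $\epsilon'\le\epsilon_{\mathrm{pow}}$. The one thing to double-check is whether the constants in $\epsilon_{\mathrm{iter}}$ were chosen for $T$ or $2T$ compositions. The factor $4$ in the denominator is exactly what absorbs the doubling to $2T$ mechanisms together with the factor $2$ in $\delta_{\mathrm{iter}}$, so the constants close as shown. Releasing only $X^{(T)}$ is post-processing of the full transcript, which completes the argument.
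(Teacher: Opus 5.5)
Your proof is correct and follows the paper's argument: two Gaussian mechanisms per iteration with sensitivities $2$ and $2R$, everything else post-processing, then advanced composition. The only difference is bookkeeping: the paper first merges each iteration into one $(2\epsilon_{\mathrm{iter}},2\delta_{\mathrm{iter}})$-DP step by basic composition and applies advanced composition to $T$ steps with slack $2\delta_{\mathrm{iter}}=\delta_{\mathrm{pow}}/(T+1)$, whereas you compose the $2T$ releases directly with slack $\delta_{\mathrm{iter}}$, which needs your extra inequality $\ln(2(T+1)/\delta_{\mathrm{pow}})\le 2\ln((T+1)/\delta_{\mathrm{pow}})$ but closes just as well (slightly more tightly).
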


\begin{proof}
Since the algorithm is adaptive, we condition on the transcript preceding
each release. Previously released quantities may therefore be treated
as fixed when analyzing the conditional privacy guarantee.

First we analyze the release of $H^{(t)}$. For a single row $a$,
its contribution to $X^{(t)\top}A^{\top}AX^{(t)}$ is:
\begin{align*}
X^{(t)\top}aa^{\top}X^{(t)} & =\left(X^{(t)\top}a\right)\left(X^{(t)\top}a\right)^{\top}
\end{align*}
Because $X^{(t)}$ is a matrix with orthonormal columns, $\left\Vert X^{(t)\top}a\right\Vert _{2}\leq\left\Vert a\right\Vert _{2}$,
and thus:
\[
\left\Vert X^{(t)\top}aa^{\top}X^{(t)}\right\Vert _{F}=\left\Vert \left(X^{(t)\top}a\right)\left(X^{(t)\top}a\right)^{\top}\right\Vert _{F}=\left\Vert X^{(t)\top}a\right\Vert ^{2}_{2}\le\left\Vert a\right\Vert ^{2}_{2}\le1
\]
When comparing neighboring datasets where row $a$ is replaced by
$a'$, the total change is:
\[
\left\Vert X^{(t)\top}aa^{\top}X^{(t)}-X^{(t)\top}a'a'^{\top}X^{(t)}\right\Vert _{F}\le\left\Vert X^{(t)\top}aa^{\top}X^{(t)}\right\Vert _{F}+\left\Vert X^{(t)\top}a'a'^{\top}X^{(t)}\right\Vert _{F}\le2
\]
Therefore, $\Delta\left\Vert X^{(t)\top}A^{\top}AX^{(t)}\right\Vert _{F}\leq2$.
Since $G^{(t)}\sim\mathcal{N}\left(0,1\right)^{k\times k}$, adding
$\Delta\sym\left(G^{(t)}\right)$ applies the symmetric Gaussian mechanism.
Because $\Delta=2\sqrt{2\log\left(2/\delta_{\mathrm{iter}}\right)}/\epsilon_{\mathrm{iter}}$,
the release $H^{(t)}=X^{(t)\top}A^{\top}AX^{(t)}+\Delta\sym\left(G^{(t)}\right)$
is $\left(\epsilon_{\mathrm{iter}},\delta_{\mathrm{iter}}\right)$-differentially
private. Construction of $B^{(t)}$ are post-processing steps.

Second, we analyze the release of $Z^{(t+1)}$. The contribution of
one row $a$ to $Z^{(t+1)}$ is $f_{t}(a):=a\Clip_{R}\left(a^{\top}X^{(t)}B^{(t)}\right)$.
Since the norm of an outer product is the product of the two vector
norms, 
\[
\left\Vert f_{t}\left(a_{i}\right)\right\Vert {}_{F}=\left\Vert a_{i}\right\Vert {}_{2}\left\Vert \mathrm{Clip}_{R}\left(a^{\top}_{i}X^{(t)}B^{(t)}\right)\right\Vert {}_{2}\leq R
\]
Thus $\Delta\left(\sum^{n}_{i=1}a_{i}\Clip_{R}\left(a^{\top}_{i}X^{(t)}B^{(t)}\right)\right)\leq2R$.
Consequently, $Z^{(t+1)}=\sum^{n}_{i=1}f_{t}\left(a_{i}\right)+\Delta R\widetilde{G}^{(t)}$
is another $\left(\epsilon_{\mathrm{iter}},\delta_{\mathrm{iter}}\right)$-differentially
private Gaussian mechanism. The QR factorization is post-processing.

By sequential composition, a single iteration is $\left(2\epsilon_{\mathrm{iter}},2\delta_{\mathrm{iter}}\right)$-differentially
private. Advanced composition over the $T$ iterations states that
the whole algorithm is $\left(\epsilon_{\mathrm{total}},\delta_{\mathrm{total}}\right)$-differentially
private, where:
\begin{align*}
\delta_{\mathrm{total}} & =T\left(2\delta_{\mathrm{iter}}\right)+\frac{\delta_{\mathrm{pow}}}{T+1}=\delta_{\mathrm{pow}}\\
\epsilon_{\mathrm{total}}\leq & 2\epsilon_{\mathrm{iter}}\sqrt{2T\log\left(\frac{T+1}{\delta_{\mathrm{pow}}}\right)}+T\left(2\epsilon_{\mathrm{iter}}\right)\left(e^{2\epsilon_{\mathrm{iter}}}-1\right)
\end{align*}
We now show that $\epsilon_{\mathrm{total}}\leq\epsilon_{\mathrm{pow}}$.
We have $2\epsilon_{\mathrm{iter}}<1$ so $e^{2\epsilon_{\mathrm{iter}}}-1\leq2\left(2\epsilon_{\mathrm{iter}}\right)$.
Our choice of $\epsilon_{\mathrm{iter}}$ gives 
\begin{align*}
2\epsilon_{\mathrm{iter}}\sqrt{2T\log\left(\frac{T+1}{\delta_{\mathrm{pow}}}\right)} & =\frac{\epsilon_{\mathrm{pow}}}{2}\\
T\left(2\epsilon_{\mathrm{iter}}\right)\left(e^{2\epsilon_{\mathrm{iter}}}-1\right) & \leq2T\left(2\epsilon_{\mathrm{iter}}\right)^{2}\\
 & =2T\left(\frac{\epsilon_{\mathrm{pow}}}{2\sqrt{2T\log\left(\frac{T+1}{\delta_{\mathrm{pow}}}\right)}}\right)^{2}\\
 & =\frac{\epsilon^{2}_{\mathrm{pow}}}{4\log\left(\frac{T+1}{\delta_{\mathrm{pow}}}\right)}\leq\frac{\epsilon_{\mathrm{pow}}}{2}
\end{align*}
where the last inequality is true since $\epsilon_{\mathrm{pow}}<1\leq2\log\left(\frac{T+1}{\delta_{\mathrm{pow}}}\right)$.

Hence $\epsilon_{\mathrm{total}}\leq\epsilon_{\mathrm{pow}}$, and
Algorithm \ref{alg:Algorithm-power-method} is $\left(\epsilon_{\mathrm{pow}},\delta_{\mathrm{pow}}\right)$-differentially
private.
\end{proof}

Lemma \ref{lem:private-initialization}, Lemma \ref{lem:clipped-iterations-privacy},
and sequential composition imply that Algorithm \ref{alg:Algorithm-main}
is $\left(\epsilon,\delta\right)$-differentially private. 

\subsection{Utility Analysis}

\label{sec:alg-utility}

We start by introducing some notation that we will use throughout
the analysis. We define
\[
\kappa_{k}=\frac{\sigma^{2}_{k}-\sigma^{2}_{k+1}}{\sigma^{2}_{k}}=\frac{\Delta_{k}}{\sigma^{2}_{k}},\qquad\|U\|_{\infty}=\max_{i,j}|U_{ij}|.
\]
For every iteration $t\geq0$ of Algorithm \ref{alg:Algorithm-power-method},
we define
\[
C^{(t)}:=V^{\top}_{k}X^{(t)},D^{(t)}:=V^{\top}_{>k}X^{(t)},s_{t}:=\left\Vert \left(I-X^{(t)}X^{(t)\top}\right)V_{k}\right\Vert _{2}.
\]
For every iteration $t$ in which $C^{(t)}$ is invertible, we define
\[
E^{(t)}:=D^{(t)}\left(C^{(t)}\right)^{-1},m_{t}:=\left\Vert E^{(t)}\right\Vert _{2}.
\]
Our goal is to analyze $m_{t}$ and show that it decreases appropriately
with high probability. Our analysis will show that the matrices $C^{(t)}$
are invertible with high probability and thus the matrices $E^{(t)}$
are well-defined. As shown in the following lemma, having $s_{t}<1$
suffices to ensure that $C^{(t)}$ is invertible.
\begin{lem}
\label{lem:s-to-C-invertible}Let $X\in\R^{d\times k}$ be any matrix
with orthonormal columns. Let $s=\left\Vert \left(I-XX^{\top}\right)V_{k}\right\Vert _{2}$
and $C=V^{\top}_{k}X$. We have $s^{2}=1-\sigma^{2}_{\min}\left(C\right)$,
where $\sigma_{\min}\left(C\right)$ is the smallest singular value
of $C$. Thus, if $s<1$, then $\sigma_{\min}\left(C\right)>0$ and
thus $C$ is invertible.
\end{lem}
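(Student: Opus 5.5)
The plan is to express $s^{2}$ as the largest eigenvalue of a $k\times k$ matrix built from $C$, and then read off the smallest singular value of $C$.

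First, write $P:=I-XX^{\top}$. This is the orthogonal projector onto the complement of the column space of $X$, so $P^{\top}P=P$. Therefore
\[
s^{2}=\left\Vert PV_{k}\right\Vert _{2}^{2}=\left\Vert V_{k}^{\top}PV_{k}\right\Vert _{2}=\lambda_{\max}\left(V_{k}^{\top}PV_{k}\right).
\]
Expanding, and using $V_{k}^{\top}V_{k}=I_{k}$ and $V_{k}^{\top}X=C$, we get
\[
V_{k}^{\top}PV_{k}=I_{k}-CC^{\top}.
\]

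Next, $CC^{\top}$ is a $k\times k$ positive semidefinite matrix whose eigenvalues are the squared singular values $\sigma_{i}^{2}(C)$. Hence
\[
\lambda_{\max}\left(I_{k}-CC^{\top}\right)=1-\lambda_{\min}\left(CC^{\top}\right)=1-\sigma_{\min}^{2}(C),
\]
where $\sigma_{\min}(C)$ is the $k$-th singular value of the square matrix $C$. This is nonnegative because $\|C\|_{2}\leq\|V_{k}\|_{2}\|X\|_{2}=1$, which is consistent with $s^{2}\geq0$. Combining the two displays gives the identity $s^{2}=1-\sigma_{\min}^{2}(C)$.

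Finally, if $s<1$ then $\sigma_{\min}^{2}(C)=1-s^{2}>0$. A square $k\times k$ matrix with positive smallest singular value has trivial kernel, so $C$ is invertible.

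There is no real obstacle here; the only step needing care is the equality $\|PV_{k}\|_{2}^{2}=\lambda_{\max}(V_{k}^{\top}PV_{k})$. It follows from $\|M\|_{2}^{2}=\|M^{\top}M\|_{2}$ applied to $M=PV_{k}$, together with $P^{\top}P=P$, which holds because $P$ is a symmetric idempotent.
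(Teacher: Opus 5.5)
Your proof is correct and matches the paper's argument: both use that $I-XX^{\top}$ is an orthogonal projector to reduce $s^{2}$ to $\|V_k^{\top}(I-XX^{\top})V_k\|_2=\|I-CC^{\top}\|_2=1-\sigma_{\min}^{2}(C)$. Your remark that $\|C\|_2\le 1$, so that $I-CC^{\top}\succeq 0$ and its spectral norm equals its largest eigenvalue, spells out a step the paper leaves implicit.
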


\begin{proof}
Since $X$ has orthonormal columns, $P=I-XX^{\top}$ is an orthogonal
projector and thus $P^{\top}P=P^{2}=P$. Thus we have
\begin{align*}
s^{2} & =\left\Vert \left(I-XX^{\top}\right)V_{k}\right\Vert ^{2}_{2}\\
 & =\left\Vert V^{\top}_{k}\left(I-XX^{\top}\right)V_{k}\right\Vert _{2}\\
 & =\left\Vert I-CC^{\top}\right\Vert _{2}\\
 & =1-\sigma^{2}_{\min}\left(C\right)
\end{align*}
as needed.
\end{proof}

We will use the following standard identity relating $s_{t}$ and
$m_{t}$.
\begin{lem}
\label{lem:s-to-m}Let $X\in\R^{d\times k}$ be any matrix with orthonormal
columns. Let $s=\left\Vert \left(I-XX^{\top}\right)V_{k}\right\Vert _{2}$.
Let $C=V^{\top}_{k}X$ and $D=V^{\top}_{>k}X$. Suppose that $C$
is invertible and let $m=\left\Vert DC^{-1}\right\Vert _{2}$. We
have $s=\frac{m}{\sqrt{1+m^{2}}}$. 
\end{lem}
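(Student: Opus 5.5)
The plan is to compute $s^{2}$ directly in terms of $C$ and $D$ and then rewrite everything through $E=DC^{-1}$. Since $V=[V_{k}\ V_{>k}]$ is orthogonal and $X$ has orthonormal columns, we have
\[
C^{\top}C+D^{\top}D=X^{\top}VV^{\top}X=I_{k}.
\]
Lemma \ref{lem:s-to-C-invertible} gives $s^{2}=1-\sigma^{2}_{\min}\left(C\right)$. So it suffices to express $\sigma_{\min}\left(C\right)$ in terms of $m=\left\Vert DC^{-1}\right\Vert _{2}$.

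Substituting $D=EC$ into the identity gives $C^{\top}\left(I+E^{\top}E\right)C=I$. Since $C$ is invertible, this becomes
\[
CC^{\top}... \text{ rather, } \left(C^{\top}\right)^{-1}C^{-1}=I+E^{\top}E,
\]
that is, $\left(CC^{\top}\right)^{-1}=I+E^{\top}E$. Taking largest eigenvalues on both sides:
\begin{itemize}
\item the largest eigenvalue of $\left(CC^{\top}\right)^{-1}$ is $1/\sigma^{2}_{\min}\left(C\right)$;
\item the largest eigenvalue of $I+E^{\top}E$ is $1+\left\Vert E\right\Vert ^{2}_{2}=1+m^{2}$.
\end{itemize}
Hence $\sigma^{2}_{\min}\left(C\right)=\frac{1}{1+m^{2}}$. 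Therefore
\[
s^{2}=1-\frac{1}{1+m^{2}}=\frac{m^{2}}{1+m^{2}},
\]
and taking nonnegative square roots gives $s=\frac{m}{\sqrt{1+m^{2}}}$.

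The only step needing care is the inversion. We need $C^{\top}\left(I+E^{\top}E\right)C=I$ to imply $\left(CC^{\top}\right)^{-1}=I+E^{\top}E$, and this holds by multiplying on the left by $\left(C^{\top}\right)^{-1}$ and on the right by $C^{-1}$. The eigenvalue correspondence then uses the fact that $CC^{\top}$ and $C^{\top}C$ share eigenvalues, so the smallest eigenvalue of $CC^{\top}$ equals $\sigma^{2}_{\min}\left(C\right)$. Beyond this, the argument is routine.
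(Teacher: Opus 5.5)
Your proof is correct, and its key step differs from the paper's. The paper does not invoke Lemma~\ref{lem:s-to-C-invertible} here. It writes $X=(V_k+V_{>k}E)C$ and builds the explicit orthonormal basis $Y=(V_k+V_{>k}E)(I+E^{\top}E)^{-1/2}$ of $\range(X)$. It then uses $YY^{\top}=XX^{\top}$ to get $s^{2}=\|I-(Y^{\top}V_k)^{\top}Y^{\top}V_k\|_2$ and reads off $Y^{\top}V_k=(I+E^{\top}E)^{-1/2}$.

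You stay in the original basis. The Gram identity $C^{\top}C+D^{\top}D=I_k$ with $D=EC$ gives $CC^{\top}=(I+E^{\top}E)^{-1}$ in one line, and the already-proved formula $s^{2}=1-\sigma_{\min}^{2}(C)$ finishes the argument. The two proofs express the same fact in different bases, since $(Y^{\top}V_k)^{\top}Y^{\top}V_k=V_k^{\top}XX^{\top}V_k=CC^{\top}$. In particular, both actually give the full spectral statement (the tangents of all principal angles are the singular values of $E$), not only the extremal one.

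Your route is shorter and avoids the matrix square root and the range argument, at the cost of relying on the previous lemma. The paper's is self-contained and makes explicit that $s$ depends on $X$ only through $E$, i.e.\ only through $\range(X)$. Before writing it up, remove the self-correcting aside (``$CC^{\top}\ldots$ rather'') from your second display.
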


\begin{proof}
Let $E=DC^{-1}$. Since $V=\left[V_{k}\;V_{>k}\right]$ is orthogonal,
we can write
\[
X=V_{k}C+V_{>k}D=\left(V_{k}+V_{>k}E\right)C
\]
Since $C$ is invertible, we have
\[
\range\left(X\right)=\range\left(V_{k}+V_{>k}E\right)
\]
Note that we have
\[
\left(V_{k}+V_{>k}E\right)^{\top}\left(V_{k}+V_{>k}E\right)=I+E^{\top}E
\]
Thus $Y=\left(V_{k}+V_{>k}E\right)\left(I+E^{\top}E\right)^{-1/2}$
is an orthonormal basis for $\range\left(X\right)$, and therefore
$YY^{\top}=XX^{\top}$. Thus
\begin{align*}
\left\Vert \left(I-XX^{\top}\right)V_{k}\right\Vert ^{2}_{2} & =\left\Vert \left(I-YY^{\top}\right)V_{k}\right\Vert ^{2}_{2}=\left\Vert V^{\top}_{k}\left(I-YY^{\top}\right)^{\top}\left(I-YY^{\top}\right)V_{k}\right\Vert _{2}
\end{align*}
Since $Y$ has orthonormal columns, $YY^{\top}$ is an orthogonal
projector. Thus $P=I-YY^{\top}$ is also an orthogonal projector,
and thus $P^{\top}P=P^{2}=P$. Thus
\begin{align*}
\left\Vert \left(I-X^{(t)}X^{(t)\top}\right)V_{k}\right\Vert ^{2}_{2} & =\left\Vert V^{\top}_{k}\left(I-YY^{\top}\right)V_{k}\right\Vert _{2}=\left\Vert I-\left(Y^{\top}V_{k}\right)^{\top}Y^{\top}V_{k}\right\Vert _{2}
\end{align*}
Recall that $Y=\left(V_{k}+V_{>k}E\right)\left(I+E^{\top}E\right)^{-1/2}$.
Thus we have
\[
Y^{\top}V_{k}=\left(I+E^{\top}E\right)^{-1/2}
\]

Thus
\[
\left\Vert \left(I-XX^{\top}\right)V_{k}\right\Vert ^{2}_{2}=\left\Vert I-\left(I+E^{\top}E\right)^{-1}\right\Vert _{2}
\]
If $E$ has singular values $s_{i}$, $I-\left(I+E^{\top}E\right)^{-1}$
has eigenvalues $1-\frac{1}{1+s^{2}_{i}}=\frac{s^{2}_{i}}{1+s^{2}_{i}}$.
Thus
\[
\left\Vert \left(I-XX^{\top}\right)V_{k}\right\Vert ^{2}_{2}=\frac{\left\Vert E\right\Vert ^{2}_{2}}{1+\left\Vert E\right\Vert ^{2}_{2}}
\]
as needed.
\end{proof}

We now proceed with the main analysis that bounds $m_{t}$. For every
$t\geq0$, we also define
\begin{align*}
K^{(t)} & :=\Sigma^{2}_{k}C^{(t)}B^{(t)}\\
S^{(t)} & :=\Sigma_{>k}E^{(t)}\Sigma^{-1}_{k}\\
P^{(t)} & :=\Sigma_{k}C^{(t)}B^{(t)}\\
Y^{(t)} & :=\Sigma^{2}_{>k}E^{(t)}\Sigma^{-2}_{k}\\
F^{(t)} & :=\Delta RV^{\top}_{k}\widetilde{G}^{(t)}\left(K^{(t)}\right)^{-1}\\
L^{(t)} & :=\Delta RV^{\top}_{>k}\widetilde{G}^{(t)}\left(K^{(t)}\right)^{-1}
\end{align*}
We also define the following event.
\begin{defn}
\label{def:gclip} Define 
\begin{align*}
q_{k} & :=2\sqrt{k}+\sqrt{2\log(32T/\beta)},\\
q_{d-k} & :=\sqrt{d-k}+\sqrt{k}+\sqrt{2\log(32T/\beta)},\\
q_{\mathrm{row}} & :=\sqrt{k}+\sqrt{2\log(16nT(T+1)/\beta)}.
\end{align*}
Let $\gclip$ be the event where, simultaneously for all $0\leq t<T$
and $0\leq r\leq T$, we have
\begin{align}
\left\Vert \Delta\sym\left(G^{(t)}\right)\right\Vert _{2}\leq\frac{\gamma}{4},\left\Vert V^{\top}_{k}\widetilde{G}^{(t)}\right\Vert _{2}\leq q_{k},\left\Vert V^{\top}_{>k}\widetilde{G}^{(t)}\right\Vert _{2}\leq q_{d-k},\label{eq:clip-gaussian-blocks}\\
\max_{i}\left\Vert e^{\top}_{i}U_{>k}\Sigma^{2r+1}_{>k}V^{\top}_{>k}\widetilde{G}^{(t)}\right\Vert _{2}\leq\sigma^{2r}_{k+1}q_{\mathrm{row}}\label{eq:clip-gaussian-row}
\end{align}
\end{defn}

\begin{lem}
\label{lem:clip-good-event} We have $\Pr\left(\gclip\right)\geq1-\frac{\beta}{4}$. 
\end{lem}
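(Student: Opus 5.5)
The plan is a union bound over $4T + T(T+1)$ families of Gaussian tail events, giving each family failure probability of order $\beta/(16T)$ or smaller. In each case I either use rotational invariance to reduce to Lemma~\ref{lem:gaussian-norm}, or use Lemma~\ref{lem:gaussian-cross}. The thresholds $q_k$, $q_{d-k}$ and $q_{\mathrm{row}}$ have their logarithmic terms chosen so that $e^{-x^2/2}$ equals the allotted failure probability.

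\textbf{Step 1 (the symmetric noise).} We have $\|\sym(G^{(t)})\|_2\le \|G^{(t)}\|_2$, since $\|G^\top\|_2=\|G\|_2$. Apply Lemma~\ref{lem:gaussian-norm} with $q=k$ and $x=\sqrt{2\log(8T/\beta)}$. This gives $\|G^{(t)}\|_2\le 2\sqrt{k}+\sqrt{2\log(8T/\beta)}=\gamma/(4\Delta)$, with failure probability $\beta/(8T)$. Hence $\|\Delta\sym(G^{(t)})\|_2\le\gamma/4$.

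\textbf{Step 2 (the block norms).} Because $V$ is orthogonal and $\widetilde{G}^{(t)}$ has i.i.d.\ standard entries, $V_k^\top\widetilde{G}^{(t)}\sim\N(0,1)^{k\times k}$ and $V_{>k}^\top\widetilde{G}^{(t)}\sim\N(0,1)^{(d-k)\times k}$. Apply Lemma~\ref{lem:gaussian-norm} with $x=\sqrt{2\log(32T/\beta)}$. Each event then fails with probability $\beta/(32T)$, and the resulting bounds are exactly $q_k$ and $q_{d-k}$.

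\textbf{Step 3 (the row events).} Fix $i$, $t$ and $r$, and write the row vector as $w^\top G'$. Here $G'=V_{>k}^\top\widetilde{G}^{(t)}\sim\N(0,1)^{(d-k)\times k}$ and $w=\Sigma_{>k}^{2r+1}U_{>k}^\top e_i\in\R^{d-k}$. Since $U$ has orthonormal columns, $\|U_{>k}^\top e_i\|_2\le1$. Also $\|\Sigma_{>k}^{2r+1}\|_2=\sigma_{k+1}^{2r+1}$.

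\begin{itemize}
\item Applying Lemma~\ref{lem:gaussian-cross} with $M=w$ (so rank $1$) and $R=I$ gives $\|w^\top G'\|_2\le\|w\|_2(1+\sqrt{k}+x)$. Alternatively, $w^\top G'/\|w\|_2$ is a standard Gaussian $1\times k$ vector, and Lemma~\ref{lem:gaussian-norm} with $q=1$ applies directly.
\item Take $x=\sqrt{2\log(16nT(T+1)/\beta)}$ and union bound over the $nT(T+1)$ triples. The total failure probability is then at most $\beta/16$.
\end{itemize}

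\textbf{Step 4 (the main obstacle).} The stated bound is $\sigma_{k+1}^{2r}q_{\mathrm{row}}$, but the computation above naturally yields $\sigma_{k+1}^{2r+1}(\sqrt{k}+1+x)$. There are two discrepancies.

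\begin{itemize}
\item The additive $1$ is absorbed by slack in the logarithm. Alternatively, one applies Lemma~\ref{lem:gaussian-norm} for a $1\times k$ Gaussian with $\sqrt1+\sqrt k$ and uses $1\le$ the slack in $\sqrt{2\log(16nT(T+1)/\beta)}$, possibly after adjusting constants.
\item The extra factor $\sigma_{k+1}$ is the point I would check most carefully against how $\Sigma$ is normalized there. I expect the intended statement either absorbs $\sigma_{k+1}$ into the scale, or uses $\|e_i^\top U_{>k}\Sigma_{>k}\|_2=\|e_i^\top A V_{>k}\|_2\le\|a_i\|_2\le1$. In the second reading, $w=\Sigma_{>k}^{2r}(\Sigma_{>k}U_{>k}^\top e_i)$, so $\|w\|_2\le\sigma_{k+1}^{2r}\cdot1$. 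This uses the row-norm assumption on $A$ and gives exactly the factor $\sigma_{k+1}^{2r}$. I will use this reading.
\end{itemize}

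\textbf{Step 5 (summing the failure probabilities).} The contributions are $T\cdot\beta/(8T)$ from Step 1, $2T\cdot\beta/(32T)$ from Step 2, and $\beta/16$ from Step 3. The total is $\beta/8+\beta/16+\beta/16=\beta/4$, as claimed.
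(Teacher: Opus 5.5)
Your plan is the paper's proof. Steps 1, 2 and 5 coincide with it. The scale bound you settle on in Step 4 is also exactly the paper's step:
$\|e_i^\top U_{>k}\Sigma_{>k}^{2r+1}\|_2\le\sigma_{k+1}^{2r}\|e_i^\top U_{>k}\Sigma_{>k}\|_2=\sigma_{k+1}^{2r}\|a_i^\top V_{>k}\|_2\le\sigma_{k+1}^{2r}$.
This is not a question of how $\Sigma$ is normalized. The route through $\|U_{>k}^\top e_i\|_2\le1$ is simply too lossy, and the row-norm assumption is what is needed.

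The one step that fails as written is the first bullet of Step 4. Lemma~\ref{lem:gaussian-norm} with $q=1$ gives the threshold $\|w\|_2(1+\sqrt{k}+x)$. Lemma~\ref{lem:gaussian-cross} as stated gives $\|w\|_2(2\sqrt{k}+x)$; the $\sqrt{r}+\sqrt{k}$ form appears only inside its proof. The event, however, requires $\|w\|_2(\sqrt{k}+x)$ with $x=\sqrt{2\log(16nT(T+1)/\beta)}$.

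There is no slack in the logarithm to absorb the extra $1$:
\begin{itemize}
\item The per-triple failure $e^{-x^2/2}=\beta/(16nT(T+1))$, summed over the $nT(T+1)$ triples, spends exactly $\beta/16$.
\item The three budgets already sum to exactly $\beta/4$.
\item Replacing $x$ by $x-1$ inflates each tail by the factor $e^{x-1/2}$.
\end{itemize}
Since $q_{\mathrm{row}}$ and $\beta/4$ are fixed by the statement, ``adjusting constants'' is not an option.

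The repair is to use the sharper mean instead of Davidson--Szarek. The map $G\mapsto\|w^\top G\|_2$ is $\|w\|_2$-Lipschitz in Frobenius norm. By Jensen, $\E\|w^\top G\|_2\le\sqrt{\E\|w^\top G\|_2^2}=\|w\|_2\sqrt{k}$. Theorem~\ref{thm:gaussian-concentration} then gives $\Pr[\|w^\top G\|_2\ge\|w\|_2(\sqrt{k}+x)]\le e^{-x^2/2}$ exactly. This is how the paper obtains $q_{\mathrm{row}}$.
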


\begin{proof}
Note that $\left\Vert \sym(G^{(t)})\right\Vert _{2}\leq\left\Vert G^{(t)}\right\Vert _{2}$.
By Lemma \ref{lem:gaussian-norm}, for a fixed $t$, we have $\left\Vert \Delta\sym(G^{(t)})\right\Vert _{2}\leq\gamma/4$
with probability $1-\frac{\beta}{8T}$. Union bound over all $t$
gives the total failure probability $\beta/8$.

By rotational invariance of the normal distribution, $V^{\top}\widetilde{G}^{(t)}$
is identically distributed as $\widetilde{G}^{(t)}$. Consequently,
$V^{\top}_{k}\widetilde{G}^{(t)}\sim\mathcal{N}\left(0,1\right)^{k\times k}$
and $V^{\top}_{>k}\widetilde{G}^{(t)}\sim\mathcal{N}\left(0,1\right)^{(d-k)\times k}$.
The two bounds in (\ref{eq:clip-gaussian-blocks}) then follow from
Lemma \ref{lem:gaussian-norm} with failure probability $\frac{\beta}{32T}$
for each bound and each $t$. Union bound over all $t$ and two bounds
gives the total failure probability of $\beta/16$.

For (\ref{eq:clip-gaussian-row}), the $i$th row $e^{\top}_{i}U_{>k}\Sigma^{2r+1}_{>k}V^{\top}_{>k}\widetilde{G}^{(t)}$
is a $k$-dimensional Gaussian vector whose coordinate variance is
at most $\sigma^{4r}_{k+1}$. Indeed, each coordinate is identically
distributed as a standard Gaussian scaled by
\[
\left\Vert e^{\top}_{i}U_{>k}\Sigma^{2r+1}_{>k}\right\Vert _{2}\leq\sigma^{2r}_{k+1}\left\Vert e^{\top}_{i}U_{>k}\Sigma_{>k}\right\Vert _{2}\leq\sigma^{2r}_{k+1}\|a_{i}\|_{2}
\]

Applying Theorem \ref{thm:gaussian-concentration} and a union bound
over $i,t,r$ give the stated row-wise bound with total failure probability
$\beta/16$.

The total failure probability for all conditions is $\beta/8+\beta/16+\beta/16\le\beta/4$.
\end{proof}

 Next, we show that, under certain assumptions, the preconditioner
$B^{(t)}$ roughly whitens $\Sigma_{k}V^{\top}_{k}X^{(t)}=\Sigma_{k}C^{(t)}$,
i.e., the singular values of $\Sigma_{k}C^{(t)}B^{(t)}$ are bounded
from above and below by absolute constants. To this end, we assume
that $\left\Vert \Delta\sym\left(G^{(t)}\right)\right\Vert _{2}\leq\frac{\gamma}{4}$,
$s_{t}\leq\frac{1}{2}$, and $\gamma\leq\frac{\sigma^{2}_{k}}{4}$.
As shown in the proof of Lemma \ref{lem:clip-good-event}, the first
condition holds with high probability. Later in this section, we will
show that the second condition also holds with high probability and
the third condition is implied by a sufficiently large eigengap $\Delta_{k}$.
Recall that $P^{(t)}:=\Sigma_{k}C^{(t)}B^{(t)}$.
\begin{lem}
\label{lem:whitening}Suppose that $\left\Vert \Delta\sym\left(G^{(t)}\right)\right\Vert _{2}\leq\frac{\gamma}{4}$,
$s_{t}\leq\frac{1}{2}$, and $\gamma\leq\frac{\sigma^{2}_{k}}{4}$.
We have $H^{(t)}=\widetilde{H}^{(t)}$, $B^{(t)}=\left(H^{(t)}\right)^{-1/2}$,
and $\sigma_{\min}\left(P^{(t)}\right)\geq\frac{2}{3}$ and $\left\Vert P^{(t)}\right\Vert _{2}\leq\sqrt{2}$.
\end{lem}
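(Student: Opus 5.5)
The plan is to write $M:=X^{(t)\top}A^{\top}AX^{(t)}$ and $W:=\Delta\sym(G^{(t)})$, so that $H^{(t)}=M+W$ with $\|W\|_2\le\gamma/4$. We first lower bound $\lambda_{\min}(M)$, then use Weyl's inequality to show that no eigenvalue of $H^{(t)}$ is below $\gamma$. Finally we compare $P^{(t)\top}P^{(t)}$ with $I$.

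\textbf{Step 1: lower bound on $M$.} Since $V$ is orthogonal,
\[
M=C^{(t)\top}\Sigma_k^2C^{(t)}+D^{(t)\top}\Sigma_{>k}^2D^{(t)}\succeq C^{(t)\top}\Sigma_k^2C^{(t)}.
\]
By Lemma \ref{lem:s-to-C-invertible}, $\sigma_{\min}^2(C^{(t)})=1-s_t^2\ge 3/4$. Hence $\lambda_{\min}(M)\ge \tfrac34\sigma_k^2\ge 3\gamma$, using $\gamma\le\sigma_k^2/4$. By Weyl's inequality (Theorem \ref{thm:weyl}), every eigenvalue of $H^{(t)}$ is at least $3\gamma-\gamma/4>\gamma$. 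Therefore the thresholding does nothing, so $\widetilde H^{(t)}=H^{(t)}$ and $B^{(t)}=(H^{(t)})^{-1/2}$.

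\textbf{Step 2: reduce to an operator sandwich.} Set $N:=C^{(t)\top}\Sigma_k^2C^{(t)}$, which is positive definite. Then
\[
P^{(t)\top}P^{(t)}=B^{(t)}NB^{(t)}=(H^{(t)})^{-1/2}N(H^{(t)})^{-1/2}.
\]
It therefore suffices to show $c_1H^{(t)}\preceq N\preceq c_2H^{(t)}$ with $c_1\ge 4/9$ and $c_2\le 2$. Conjugating by $(H^{(t)})^{-1/2}$ then gives $\sigma_{\min}(P^{(t)})\ge 2/3$ and $\|P^{(t)}\|_2\le\sqrt2$.

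\textbf{Step 3: the two bounds on $H^{(t)}$ in terms of $N$.} We have $H^{(t)}=N+D^{(t)\top}\Sigma_{>k}^2D^{(t)}+W$.

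For the lower bound, $D^{(t)\top}\Sigma_{>k}^2D^{(t)}\succeq0$ and $W\succeq-\tfrac{\gamma}{4}I$. Also $\lambda_{\min}(N)\ge\tfrac34\sigma_k^2\ge3\gamma$, so $\tfrac\gamma4 I\preceq\tfrac1{12}N$. This gives $H^{(t)}\succeq\tfrac{11}{12}N$, that is, $N\preceq\tfrac{12}{11}H^{(t)}\preceq 2H^{(t)}$.

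For the upper bound, $W\preceq\tfrac1{12}N$ as before. The main obstacle is $D^{(t)\top}\Sigma_{>k}^2D^{(t)}$, which must be dominated by a multiple of $N$ using only $s_t\le1/2$. Write $D^{(t)}=E^{(t)}C^{(t)}$, so that
\[
D^{(t)\top}\Sigma_{>k}^2D^{(t)}=C^{(t)\top}E^{(t)\top}\Sigma_{>k}^2E^{(t)}C^{(t)}\preceq \sigma_{k+1}^2\|E^{(t)}\|_2^2\,C^{(t)\top}C^{(t)}.
\]
Next use $C^{(t)\top}C^{(t)}\preceq\sigma_k^{-2}N$ and $\sigma_{k+1}\le\sigma_k$. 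By Lemma \ref{lem:s-to-m}, $m_t^2=s_t^2/(1-s_t^2)\le1/3$. Together these give $D^{(t)\top}\Sigma_{>k}^2D^{(t)}\preceq\tfrac13N$.

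\textbf{Conclusion.} Combining, $H^{(t)}\preceq(1+\tfrac13+\tfrac1{12})N=\tfrac{17}{12}N$, so $N\succeq\tfrac{12}{17}H^{(t)}$. Since $\tfrac{12}{17}\ge\tfrac49$, the claimed constants follow.
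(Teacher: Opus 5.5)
Your proof is correct and takes essentially the paper's approach: you show $H^{(t)}\succeq\gamma I$ so the thresholding is inactive, then sandwich $H^{(t)}$ between multiples of $C^{(t)\top}\Sigma_k^2C^{(t)}$ in Loewner order and conjugate by $B^{(t)}=(H^{(t)})^{-1/2}$. The differences are cosmetic: the paper compares $H^{(t)}$ first with the intermediate $X^{(t)\top}A^\top AX^{(t)}$ (obtaining constants $9/13$ and $12/11$), and it bounds the tail term directly via $\|D^{(t)}\|_2=s_t\le\frac12$ instead of your detour through $E^{(t)}$ and Lemma~\ref{lem:s-to-m}.
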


\begin{proof}
We first show that $H^{(t)}=\widetilde{H}^{(t)}$, i.e., we have $H^{(t)}\succeq\gamma I_{k}$.
We have
\begin{align*}
H^{(t)} & =X^{(t)\top}A^{\top}AX^{(t)}+\Delta\sym\left(G^{(t)}\right)\\
 & =X^{(t)\top}\left(V_{k}\Sigma^{2}_{k}V^{\top}_{k}+V_{>k}\Sigma^{2}_{>k}V^{\top}_{>k}\right)X^{(t)}+\Delta\sym\left(G^{(t)}\right)\\
 & =C^{(t)\top}\Sigma^{2}_{k}C^{(t)}+D^{(t)\top}\Sigma^{2}_{>k}D^{(t)}+\Delta\sym\left(G^{(t)}\right).
\end{align*}
We lower bound each of the terms in turn. Consider $C^{(t)\top}\Sigma^{2}_{k}C^{(t)}$.
Since $\left\Vert D^{(t)}\right\Vert _{2}=s_{t}\leq\frac{1}{2}$,
we have $D^{(t)\top}D^{(t)}\preceq\frac{1}{4}I_{k}$. Since $C^{(t)\top}C^{(t)}+D^{(t)\top}D^{(t)}=I_{k}$,
it follows that $C^{(t)\top}C^{(t)}\succeq\frac{3}{4}I_{k}$ and thus
$C^{(t)\top}\Sigma^{2}_{k}C^{(t)}\succeq\frac{3}{4}\sigma^{2}_{k}I_{k}$.
Next, we have $D^{(t)\top}\Sigma^{2}_{>k}D^{(t)}\succeq0$. We have
$\left\Vert \Delta\sym\left(G^{(t)}\right)\right\Vert _{2}\leq\frac{\gamma}{4}\leq\frac{\sigma^{2}_{k}}{16}$
and thus $\Delta\sym\left(G^{(t)}\right)\succeq-\frac{1}{16}\sigma^{2}_{k}I_{k}$.
Putting everything together, we obtain
\[
H^{(t)}\succeq\frac{3}{4}\sigma^{2}_{k}I_{k}-\frac{1}{16}\sigma^{2}_{k}I_{k}=\frac{11}{16}\sigma^{2}_{k}I_{k}\geq\gamma I_{k}.
\]
Thus we have $H^{(t)}=\widetilde{H}^{(t)}$ and $B^{(t)}=\left(H^{(t)}\right)^{-1/2}$.

Finally, consider $P^{(t)}:=\Sigma_{k}C^{(t)}B^{(t)}$. We will lower
and upper bound $P^{(t)\top}P^{(t)}$. We have
\[
P^{(t)\top}P^{(t)}=B^{(t)\top}\left(C^{(t)\top}\Sigma^{2}_{k}C^{(t)}\right)B^{(t)}
\]
Let $M:=X^{(t)\top}A^{\top}AX^{(t)}=C^{(t)\top}\Sigma^{2}_{k}C^{(t)}+D^{(t)\top}\Sigma^{2}_{>k}D^{(t)}$
and $N:=\Delta\sym\left(G^{(t)}\right)$, and thus $H^{(t)}=M+N$.
As shown above, we have $M\succeq\frac{3}{4}\sigma^{2}_{k}I_{k}$
and $N\preceq\frac{1}{16}\sigma^{2}_{k}I_{k}$, and thus
\begin{align*}
-\frac{1}{12}M & \preceq N\preceq\frac{1}{12}M,\\
\frac{11}{12}M & \preceq H^{(t)}\preceq\frac{13}{12}M.
\end{align*}
Thus
\[
\frac{11}{12}B^{(t)\top}MB^{(t)}\preceq B^{(t)\top}H^{(t)}B^{(t)}\preceq\frac{13}{12}B^{(t)\top}MB^{(t)}.
\]
Since $B^{(t)}=\left(H^{(t)}\right)^{-1/2}$ , we have $B^{(t)\top}H^{(t)}B^{(t)}=I_{k}$.
Thus
\[
\frac{12}{13}I_{k}\preceq B^{(t)\top}MB^{(t)}\preceq\frac{12}{11}I_{k}.
\]
Since $D^{(t)\top}\Sigma^{2}_{>k}D^{(t)}\succeq0$, we have $M\succeq C^{(t)\top}\Sigma^{2}_{k}C^{(t)}$
and thus
\[
B^{(t)\top}\left(C^{(t)\top}\Sigma^{2}_{k}C^{(t)}\right)B^{(t)}\preceq B^{(t)\top}MB^{(t)}\preceq\frac{12}{11}I_{k}.
\]
Since $C^{(t)\top}\Sigma^{2}_{k}C^{(t)}\succeq\frac{3}{4}\sigma^{2}_{k}I_{k}$
and $D^{(t)\top}\Sigma^{2}_{>k}D^{(t)}\preceq\frac{1}{4}\sigma^{2}_{k}I_{k}$,
we have $M\preceq\frac{4}{3}\left(C^{(t)\top}\Sigma^{2}_{k}C^{(t)}\right)$
and thus
\[
B^{(t)\top}\left(C^{(t)\top}\Sigma^{2}_{k}C^{(t)}\right)B^{(t)}\succeq\frac{3}{4}B^{(t)\top}MB^{(t)}\succeq\frac{9}{13}I_{k}.
\]
Putting everything together, we have
\[
\frac{9}{13}I_{k}\preceq P^{(t)\top}P^{(t)}\preceq\frac{12}{11}I_{k}
\]
and thus
\[
\sigma_{\min}\left(P^{(t)}\right)\geq\sqrt{\frac{9}{13}}\geq\frac{2}{3},
\]
\[
\left\Vert P^{(t)}\right\Vert _{2}\leq\sqrt{\frac{12}{11}}\leq\sqrt{2}.
\]
\end{proof}

Using Lemma \ref{lem:whitening}, we can show the following.
\begin{lem}
\label{lem:K-inverse} Suppose that $\left\Vert \Delta\sym\left(G^{(t)}\right)\right\Vert _{2}\leq\frac{\gamma}{4}$,
$s_{t}\leq\frac{1}{2}$, and $\gamma\leq\frac{\sigma^{2}_{k}}{4}$.
We have $\left\Vert \left(K^{(t)}\right)^{-1}\right\Vert _{2}\leq\frac{3}{2}\frac{1}{\sigma_{k}}$.
\end{lem}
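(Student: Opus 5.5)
We reduce the claim to Lemma \ref{lem:whitening} through a factorization of $K^{(t)}$. By definition $K^{(t)}=\Sigma^{2}_{k}C^{(t)}B^{(t)}=\Sigma_{k}\left(\Sigma_{k}C^{(t)}B^{(t)}\right)=\Sigma_{k}P^{(t)}$. The hypotheses of the present lemma are exactly those of Lemma \ref{lem:whitening}, so we may invoke it to conclude that $\sigma_{\min}\left(P^{(t)}\right)\geq\frac{2}{3}$. In particular $P^{(t)}$ is an invertible $k\times k$ matrix. Also $\Sigma_{k}$ is invertible because $\sigma_{k}\ge\sigma_{k+1}\ge0$ and $\Delta_{k}>0$ force $\sigma_{k}>0$. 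Hence $K^{(t)}$ is invertible and
\[
\left(K^{(t)}\right)^{-1}=\left(P^{(t)}\right)^{-1}\Sigma^{-1}_{k}.
\]

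Submultiplicativity of the spectral norm then gives
\[
\left\Vert \left(K^{(t)}\right)^{-1}\right\Vert _{2}\leq\left\Vert \left(P^{(t)}\right)^{-1}\right\Vert _{2}\left\Vert \Sigma^{-1}_{k}\right\Vert _{2}=\frac{1}{\sigma_{\min}\left(P^{(t)}\right)}\cdot\frac{1}{\sigma_{k}}\leq\frac{3}{2}\cdot\frac{1}{\sigma_{k}}.
\]
This uses $\left\Vert \Sigma^{-1}_{k}\right\Vert _{2}=1/\sigma_{k}$, since $\sigma_{k}$ is the smallest diagonal entry of $\Sigma_{k}$.

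There is no real obstacle in this argument; all the substantive work is the whitening bound already established in Lemma \ref{lem:whitening}. The only points to check are the invertibility of $P^{(t)}$ and of $\Sigma_{k}$, which together justify writing the inverse of $K^{(t)}$ as the product above.
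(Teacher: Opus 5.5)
Your proposal is correct and is essentially the paper's own proof: factor $K^{(t)}=\Sigma_{k}P^{(t)}$, invoke Lemma~\ref{lem:whitening} for $\sigma_{\min}\left(P^{(t)}\right)\geq\frac{2}{3}$, and bound $\left\Vert \left(P^{(t)}\right)^{-1}\Sigma_{k}^{-1}\right\Vert _{2}$ by submultiplicativity. Your explicit check that $\Sigma_{k}$ and the square matrix $P^{(t)}$ are invertible is a small addition the paper leaves implicit.
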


\begin{proof}
We have $K^{(t)}=\Sigma^{2}_{k}C^{(t)}B^{(t)}=\Sigma_{k}P^{(t)}$.
By Lemma \ref{lem:whitening}, we have $\sigma_{\min}\left(P^{(t)}\right)\ge\frac{2}{3}$.
Thus $P^{(t)}$ is invertible and
\begin{align*}
\left\Vert \left(K^{(t)}\right)^{-1}\right\Vert _{2} & =\left\Vert \left(\Sigma_{k}P^{(t)}\right)^{-1}\right\Vert _{2}=\left\Vert \left(P^{(t)}\right)^{-1}\Sigma^{-1}_{k}\right\Vert _{2}\leq\left\Vert \left(P^{(t)}\right)^{-1}\right\Vert _{2}\left\Vert \Sigma^{-1}_{k}\right\Vert _{2}\\
 & =\frac{1}{\sigma_{\min}\left(P^{(t)}\right)}\frac{1}{\sigma_{k}}\leq\frac{3}{2}\frac{1}{\sigma_{k}}.
\end{align*}
\end{proof}

\paragraph{Main recurrence.}

In the following, we say that clipping is inactive in iteration $t$
if $\Clip_{R}\left(a^{\top}_{i}X^{(t)}B^{(t)}\right)=a^{\top}_{i}X^{(t)}B^{(t)}$
for all $i\in[n]$, i.e., we have $\left\Vert a^{\top}_{i}X^{(t)}B^{(t)}\right\Vert _{2}\leq R$
for all $i\in[n]$. Note that, in this case, the main update in Algorithm
\ref{alg:Algorithm-power-method} is $Z^{(t+1)}=A^{\top}AX^{(t)}B^{(t)}+\Delta R\widetilde{G}^{(t)}$.
First, we will show a recurrence for $E^{(t+1)}$ assuming that clipping
is inactive in iteration $t$ (Lemma \ref{lem:tangent-recurrence}).
This recurrence expresses $E^{(t+1)}$ in terms of the ideal progress
term $\Sigma^{2}_{>k}E^{(t)}\Sigma^{-2}_{k}$ (note that $\left\Vert \Sigma^{2}_{>k}E^{(t)}\Sigma^{-2}_{k}\right\Vert _{2}\leq\frac{\sigma^{2}_{k+1}}{\sigma^{2}_{k}}\left\Vert E^{(t)}\right\Vert _{2}=\left(1-\kappa_{k}\right)\left\Vert E^{(t)}\right\Vert _{2}$),
and additional error terms that we will bound in Theorem \ref{thm:convergence-martingale}.
\begin{lem}
\label{lem:tangent-recurrence} Suppose $C^{(t)}$ is invertible
and clipping is inactive in iteration $t$. If $\left\Vert F^{(t)}\right\Vert _{2}<1$,
then $C^{(t+1)}$ is invertible and 
\begin{align*}
E^{(t+1)}= & \left(\Sigma^{2}_{>k}E^{(t)}\Sigma^{-2}_{k}+L^{(t)}\right)\left(I+F^{(t)}\right)^{-1}.
\end{align*}
\end{lem}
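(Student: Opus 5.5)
Since clipping is inactive in iteration $t$, the update is exactly $Z^{(t+1)}=A^{\top}AX^{(t)}B^{(t)}+\Delta R\widetilde{G}^{(t)}$, and $X^{(t+1)}R^{(t+1)}=Z^{(t+1)}$. The plan is to project $Z^{(t+1)}$ onto $V_{k}$ and $V_{>k}$, factor out $K^{(t)}$ on the right, and read off $E^{(t+1)}$ from the ratio of the two blocks. Throughout we use $A^{\top}A=V_{k}\Sigma_{k}^{2}V_{k}^{\top}+V_{>k}\Sigma_{>k}^{2}V_{>k}^{\top}$ together with $C^{(t)}=V_{k}^{\top}X^{(t)}$ and $D^{(t)}=V_{>k}^{\top}X^{(t)}$.

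\textbf{Step 1: the two blocks of $Z^{(t+1)}$.} Multiplying on the left by $V_{k}^{\top}$ and $V_{>k}^{\top}$ gives
\begin{align*}
V_{k}^{\top}Z^{(t+1)} &= \Sigma_{k}^{2}C^{(t)}B^{(t)}+\Delta RV_{k}^{\top}\widetilde{G}^{(t)}=K^{(t)}+\Delta RV_{k}^{\top}\widetilde{G}^{(t)},\\
V_{>k}^{\top}Z^{(t+1)} &= \Sigma_{>k}^{2}D^{(t)}B^{(t)}+\Delta RV_{>k}^{\top}\widetilde{G}^{(t)}.
\end{align*}
Write $D^{(t)}=E^{(t)}C^{(t)}$. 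Then $\Sigma_{>k}^{2}D^{(t)}B^{(t)}=\Sigma_{>k}^{2}E^{(t)}\Sigma_{k}^{-2}\cdot\Sigma_{k}^{2}C^{(t)}B^{(t)}=\Sigma_{>k}^{2}E^{(t)}\Sigma_{k}^{-2}K^{(t)}$.

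\textbf{Step 2: factor out $K^{(t)}$.} This requires $K^{(t)}$ to be invertible. We have $C^{(t)}$ invertible by hypothesis, $\Sigma_{k}$ invertible since $\sigma_{k}^{2}>\sigma_{k+1}^{2}\ge0$, and $B^{(t)}=(\widetilde{H}^{(t)})^{-1/2}$ invertible because $\widetilde{H}^{(t)}\succeq\gamma I\succ0$ by construction. With $K^{(t)}$ invertible, the definitions of $F^{(t)}$ and $L^{(t)}$ give
\[
V_{k}^{\top}Z^{(t+1)}=\left(I+F^{(t)}\right)K^{(t)},\qquad V_{>k}^{\top}Z^{(t+1)}=\left(\Sigma_{>k}^{2}E^{(t)}\Sigma_{k}^{-2}+L^{(t)}\right)K^{(t)}.
\]
By Lemma \ref{lem:neumann-series}, $\|F^{(t)}\|_{2}<1$ implies $I+F^{(t)}$ is invertible. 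Hence $V_{k}^{\top}Z^{(t+1)}$ is invertible.

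\textbf{Step 3: pass from $Z^{(t+1)}$ to $X^{(t+1)}$.} Since $V_{k}^{\top}Z^{(t+1)}$ is invertible, $V_{k}^{\top}Z^{(t+1)}=C^{(t+1)}R^{(t+1)}$ forces both $C^{(t+1)}$ and $R^{(t+1)}$ to be invertible. In particular $Z^{(t+1)}$ has full column rank, so the QR factor $R^{(t+1)}$ is indeed an invertible $k\times k$ matrix. Then
\[
E^{(t+1)}=D^{(t+1)}\left(C^{(t+1)}\right)^{-1}=\left(V_{>k}^{\top}Z^{(t+1)}\right)\left(R^{(t+1)}\right)^{-1}R^{(t+1)}\left(V_{k}^{\top}Z^{(t+1)}\right)^{-1}.
\]
The $K^{(t)}$ factors cancel, which yields the stated recurrence.

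There is no real obstacle here. The only point needing care is the invertibility bookkeeping: that $K^{(t)}$ is invertible, which comes from $\widetilde{H}^{(t)}\succeq\gamma I$, and that invertibility of $V_{k}^{\top}Z^{(t+1)}$ transfers to $C^{(t+1)}$ through the QR factorization.
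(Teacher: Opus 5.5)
Your proposal is correct and follows the paper's proof essentially step for step: you compute $V_{k}^{\top}Z^{(t+1)}=\left(I+F^{(t)}\right)K^{(t)}$ and $V_{>k}^{\top}Z^{(t+1)}=\left(\Sigma_{>k}^{2}E^{(t)}\Sigma_{k}^{-2}+L^{(t)}\right)K^{(t)}$, transfer invertibility to $C^{(t+1)}$ and $R^{(t+1)}$ through $C^{(t+1)}R^{(t+1)}=V_{k}^{\top}Z^{(t+1)}$, and cancel $R^{(t+1)}$ and $K^{(t)}$. Your explicit justification that $B^{(t)}$ is invertible via $\widetilde{H}^{(t)}\succeq\gamma I$ fills in a detail the paper only asserts.
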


\begin{proof}
Let us start by observing that 
\begin{align*}
C^{(t+1)}R^{(t+1)} & =V^{\top}_{k}X^{(t+1)}R^{(t+1)}\\
 & =V^{\top}_{k}Z^{(t+1)}\\
 & =V^{\top}_{k}\left(A^{\top}AX^{(t)}B^{(t)}+\Delta R\widetilde{G}^{(t)}\right)\\
 & =V^{\top}_{k}A^{\top}AX^{(t)}B^{(t)}+\Delta RV^{\top}_{k}\widetilde{G}^{(t)}\\
 & =\Sigma^{2}_{k}V^{\top}_{k}X^{(t)}B^{(t)}+\Delta RV^{\top}_{k}\widetilde{G}^{(t)}\\
 & =\Sigma^{2}_{k}C^{(t)}B^{(t)}+\Delta RV^{\top}_{k}\widetilde{G}^{(t)}\\
 & =K^{(t)}+\Delta RV^{\top}_{k}\widetilde{G}^{(t)}.
\end{align*}
Since $C^{(t)}$, $\Sigma^{2}_{k}$, and $B^{(t)}$ are invertible,
$K^{(t)}=\Sigma^{2}_{k}C^{(t)}B^{(t)}$ is invertible. Then $F^{(t)}:=\Delta RV^{\top}_{k}\widetilde{G}^{(t)}\left(K^{(t)}\right)^{-1}$
is well-defined.

Multiplying the above identity by $\left(K^{(t)}\right)^{-1}$ then
gives 
\[
C^{(t+1)}R^{(t+1)}\left(K^{(t)}\right)^{-1}=I+\left(\Delta RV^{\top}_{k}\widetilde{G}^{(t)}\right)\left(K^{(t)}\right)^{-1}=I+F^{(t)}
\]
Since $\left\Vert F^{(t)}\right\Vert _{2}<1$, Lemma \ref{lem:neumann-series}
implies that $I+F^{(t)}$ is invertible. Thus $C^{(t+1)}R^{(t+1)}$
is invertible and therefore $C^{(t+1)}$ and $R^{(t+1)}$ are invertible.

This identity also gives 
\[
C^{(t+1)}=\left(I+F^{(t)}\right)K^{(t)}\left(R^{(t+1)}\right)^{-1}.
\]
Now we establish the recurrence. Using the above identity, we obtain
\begin{align*}
E^{(t+1)} & =D^{(t+1)}\left(C^{(t+1)}\right)^{-1}\\
 & =V^{\top}_{>k}X^{(t+1)}\left(C^{(t+1)}\right)^{-1}\\
 & =V^{\top}_{>k}Z^{(t+1)}\left(R^{(t+1)}\right)^{-1}\left(C^{(t+1)}\right)^{-1}\\
 & =V^{\top}_{>k}Z^{(t+1)}\left(C^{(t+1)}R^{(t+1)}\right)^{-1}\\
 & =V^{\top}_{>k}Z^{(t+1)}\left(K^{(t)}\right)^{-1}\left(I+F^{(t)}\right)^{-1}.
\end{align*}
Note that 
\begin{align*}
V^{\top}_{>k}Z^{(t+1)} & =V^{\top}_{>k}\left(A^{\top}AX^{(t)}B^{(t)}+\Delta R\widetilde{G}^{(t)}\right)\\
 & =\Sigma^{2}_{>k}V^{\top}_{>k}X^{(t)}B^{(t)}+\Delta RV^{\top}_{>k}\widetilde{G}^{(t)}\\
 & =\Sigma^{2}_{>k}D^{(t)}B^{(t)}+\Delta RV^{\top}_{>k}\widetilde{G}^{(t)}\\
 & =\Sigma^{2}_{>k}E^{(t)}C^{(t)}B^{(t)}+\Delta RV^{\top}_{>k}\widetilde{G}^{(t)}\\
 & =\Sigma^{2}_{>k}E^{(t)}\Sigma^{-2}_{k}K^{(t)}+\Delta RV^{\top}_{>k}\widetilde{G}^{(t)}\\
 & =\left(\Sigma^{2}_{>k}E^{(t)}\Sigma^{-2}_{k}+\Delta RV^{\top}_{>k}\widetilde{G}^{(t)}\left(K^{(t)}\right)^{-1}\right)K^{(t)}\\
 & =\left(\Sigma^{2}_{>k}E^{(t)}\Sigma^{-2}_{k}+L^{(t)}\right)K^{(t)}.
\end{align*}
Plugging this into the previous identity for $E^{(t+1)}$, we obtain
the claimed recurrence.
\end{proof}

\paragraph{Showing that clipping is inactive.}

Using Lemma \ref{lem:tangent-recurrence}, we will show that, for
an appropriate choice of $R$, clipping is inactive in every iteration
with high probability. Recall that the clipping is inactive in iteration
$t$ if $\left\Vert a^{\top}_{i}X^{(t)}B^{(t)}\right\Vert _{2}\leq R$
for all $i\in[n]$. In the following, we analyze $a^{\top}_{i}X^{(t)}B^{(t)}$
for each $i\in[n]$. Recall that $S^{(t)}:=\Sigma_{>k}E^{(t)}\Sigma^{-1}_{k}$.
\begin{lem}
\label{lem:row-length-bound}Suppose that $\left\Vert \Delta\sym\left(G^{(t)}\right)\right\Vert _{2}\leq\frac{\gamma}{4}$,
$s_{t}\leq\frac{1}{2}$, and $\gamma\leq\frac{\sigma^{2}_{k}}{4}$.
For each $1\leq i\leq n$, we have
\[
\left\Vert a^{\top}_{i}X^{(t)}B^{(t)}\right\Vert _{2}\leq\sqrt{2}\left(\sqrt{k}\left\Vert U\right\Vert _{\infty}+\left\Vert e^{\top}_{i}U_{>k}S^{(t)}\right\Vert _{2}\right).
\]
\end{lem}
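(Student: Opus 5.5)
The plan is to express the row $a_i^{\top}X^{(t)}B^{(t)}$ through the decomposition of $X^{(t)}$ along $V_k$ and $V_{>k}$, and then invoke Lemma~\ref{lem:whitening}. Since the hypotheses include $s_t\le\frac12<1$, Lemma~\ref{lem:s-to-C-invertible} shows that $C^{(t)}$ is invertible, so $E^{(t)}=D^{(t)}(C^{(t)})^{-1}$ and $S^{(t)}$ are well defined. As in the proof of Lemma~\ref{lem:s-to-m}, we write
\[
X^{(t)}=V_kC^{(t)}+V_{>k}D^{(t)}=\left(V_k+V_{>k}E^{(t)}\right)C^{(t)}.
\]
Multiplying on the left by $A=U_k\Sigma_kV_k^{\top}+U_{>k}\Sigma_{>k}V_{>k}^{\top}$ gives
\[
AX^{(t)}=\left(U_k\Sigma_k+U_{>k}\Sigma_{>k}E^{(t)}\right)C^{(t)}.
\]
Next we insert $\Sigma_k^{-1}\Sigma_k$ and right-multiply by $B^{(t)}$. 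This yields
\[
AX^{(t)}B^{(t)}=\left(U_k+U_{>k}\Sigma_{>k}E^{(t)}\Sigma_k^{-1}\right)\Sigma_kC^{(t)}B^{(t)}=\left(U_k+U_{>k}S^{(t)}\right)P^{(t)}.
\]

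Taking the $i$th row, and using that $a_i^{\top}=e_i^{\top}A$, we get
\[
a_i^{\top}X^{(t)}B^{(t)}=e_i^{\top}\left(U_k+U_{>k}S^{(t)}\right)P^{(t)}.
\]
Submultiplicativity and the triangle inequality then give
\[
\left\Vert a_i^{\top}X^{(t)}B^{(t)}\right\Vert_2\le\left(\left\Vert e_i^{\top}U_k\right\Vert_2+\left\Vert e_i^{\top}U_{>k}S^{(t)}\right\Vert_2\right)\left\Vert P^{(t)}\right\Vert_2.
\]
The three hypotheses of the statement are exactly those of Lemma~\ref{lem:whitening}, which gives $\|P^{(t)}\|_2\le\sqrt2$. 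Finally, $e_i^{\top}U_k$ has $k$ entries, each of absolute value at most $\|U\|_\infty$, so $\|e_i^{\top}U_k\|_2\le\sqrt k\|U\|_\infty$.

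The only point needing care is the algebraic identity $AX^{(t)}B^{(t)}=(U_k+U_{>k}S^{(t)})P^{(t)}$, specifically the bookkeeping of where $\Sigma_k$ and $\Sigma_k^{-1}$ are inserted. It requires the invertibility of $C^{(t)}$ supplied by Lemma~\ref{lem:s-to-C-invertible}. Beyond that, the argument is direct and no step is a genuine obstacle.
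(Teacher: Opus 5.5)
Your proposal is correct and follows the paper's proof essentially step for step. Both use the factorization $AX^{(t)}B^{(t)}=(U_k+U_{>k}S^{(t)})P^{(t)}$, the triangle inequality on the $i$th row, $\|P^{(t)}\|_2\le\sqrt{2}$ from Lemma~\ref{lem:whitening}, and $\|e_i^{\top}U_k\|_2\le\sqrt{k}\|U\|_\infty$; your explicit appeal to Lemma~\ref{lem:s-to-C-invertible} for the invertibility of $C^{(t)}$ simply makes precise what the paper uses implicitly when writing $D^{(t)}=E^{(t)}C^{(t)}$.
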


\begin{proof}
We have
\begin{align*}
AX^{(t)}B^{(t)} & =\left(U_{k}\Sigma_{k}V^{\top}_{k}+U_{>k}\Sigma_{>k}V^{\top}_{>k}\right)X^{(t)}B^{(t)}\\
 & =U_{k}\underbrace{\Sigma_{k}V^{\top}_{k}X^{(t)}B^{(t)}}_{P^{(t)}}+U_{>k}\Sigma_{>k}\underbrace{V^{\top}_{>k}X^{(t)}}_{D^{(t)}=E^{(t)}C^{(t)}}B^{(t)}\\
 & =\left(U_{k}+U_{>k}S^{(t)}\right)P^{(t)}.
\end{align*}
Thus, for each row $i\in[n]$, we have
\[
a^{\top}_{i}X^{(t)}B^{(t)}=e^{\top}_{i}AX^{(t)}B^{(t)}=e^{\top}_{i}\left(U_{k}+U_{>k}S^{(t)}\right)P^{(t)}
\]
and thus
\[
\left\Vert a^{\top}_{i}X^{(t)}B^{(t)}\right\Vert _{2}\leq\left(\left\Vert e^{\top}_{i}U_{k}\right\Vert _{2}+\left\Vert e^{\top}_{i}U_{>k}S^{(t)}\right\Vert _{2}\right)\left\Vert P^{(t)}\right\Vert _{2}.
\]
By Lemma \ref{lem:whitening}, we have $\left\Vert P^{(t)}\right\Vert _{2}\le\sqrt{2}$.
We also have $\left\Vert e^{\top}_{i}U_{k}\right\Vert _{2}\leq\sqrt{k}\left\Vert U\right\Vert _{\infty}$.
\end{proof}

Next, we upper bound $\max_{i\in\left[n\right]}\left\Vert e^{\top}_{i}U_{>k}S^{(t)}\right\Vert _{2}$.
To this end, we introduce the following potential. For every $0\leq r,t\le T$,
let
\[
\Phi_{r,t}:=\max_{i\in\left[n\right]}\left\Vert e^{\top}_{i}U_{>k}\Sigma^{2r}_{>k}S^{(t)}\right\Vert _{2}.
\]
 Our goal is to upper bound $\Phi_{0,t}=\max_{i\in\left[n\right]}\left\Vert e^{\top}_{i}U_{>k}S^{(t)}\right\Vert _{2}$,
and we will use $\Phi_{r,\cdot}$ for larger $r$ to do so. First,
we show that the initialization allows us to upper bound $\Phi_{r,0}$
for all $0\leq r\leq T$ as follows.
\begin{lem}
\label{lem:initial-potential} Suppose that $\Delta_{k}\geq8\Delta_{\mathrm{init}}\left(2\sqrt{d}+\sqrt{2\log\frac{8}{\beta}}\right)$.
With probability at least $1-\frac{\beta}{4}$, simultaneously for
all $0\leq r\leq T$, we have
\begin{equation}
\Phi_{r,0}\leq\sigma^{2r}_{k+1}\omega_{\mathrm{init}}\quad\omega_{\mathrm{init}}:=3\left\Vert U\right\Vert _{\infty}\left(\sqrt{k}+\sqrt{\log\left(\frac{16n\left(T+1\right)}{\beta}\right)}\right)\label{eq:initial-potential}
\end{equation}
 Let $\ginit$ be this event.
\end{lem}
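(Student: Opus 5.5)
The plan is to get an explicit, row-by-row formula for $S^{(0)}$ from the eigenvector equation of $\widehat{M}$, and then bound each row using Gaussian concentration for the fixed linear functionals $e^{\top}_{i}U_{>k}\Sigma^{p}_{>k}V^{\top}_{>k}W$.

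\textbf{Setup and the good event.} Write $W:=\Delta_{\mathrm{init}}\sym(G^{(\mathrm{init})})$, so that $\widehat{M}X^{(0)}=X^{(0)}\widehat{\Lambda}$, where $\widehat{\Lambda}=\diag(\widehat{\lambda}_{1},\dots,\widehat{\lambda}_{k})$. By Lemma~\ref{lem:gaussian-norm}, with probability $1-\beta/8$ we have $\|W\|_{2}\le\Delta_{\mathrm{init}}(2\sqrt{d}+\sqrt{2\log(8/\beta)})\le\Delta_{k}/8$. On this event:
\begin{itemize}
\item Weyl (Theorem~\ref{thm:weyl}) gives $\widehat{\lambda}_{k}\ge\sigma^{2}_{k}-\Delta_{k}/8$.
\item Lemma~\ref{lem:davis-kahan-error} gives $s_{0}\le1/7$.
\item Hence $C^{(0)}$ is invertible with $\sigma_{\min}(C^{(0)})\ge\sqrt{48}/7$ (Lemma~\ref{lem:s-to-C-invertible}), and $\|D^{(0)}\|_{2}\le1/7$.
\end{itemize}

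\textbf{Series formula for the rows.} Projecting the eigen-equation onto $V_{>k}$ gives $\Sigma^{2}_{>k}D+V^{\top}_{>k}WX=D\widehat{\Lambda}$. Since $\|\Sigma^{2}_{>k}\|\,\|\widehat{\Lambda}^{-1}\|\le\sigma^{2}_{k+1}/\widehat{\lambda}_{k}<1$, this can be solved by a Neumann series:
\[
D=\sum_{m\ge0}\Sigma^{2m}_{>k}V^{\top}_{>k}WX\widehat{\Lambda}^{-(m+1)}.
\]
Using $S^{(0)}=\Sigma_{>k}D(C^{(0)})^{-1}\Sigma^{-1}_{k}$, each row becomes
\[
e^{\top}_{i}U_{>k}\Sigma^{2r}_{>k}S^{(0)}=\sum_{m\ge0}e^{\top}_{i}U_{>k}\Sigma^{2(r+m)+1}_{>k}V^{\top}_{>k}W\left(V_{k}C^{(0)}+V_{>k}D^{(0)}\right)\widehat{\Lambda}^{-(m+1)}(C^{(0)})^{-1}\Sigma^{-1}_{k}.
\]
The row vector $c^{\top}_{i,p}:=e^{\top}_{i}U_{>k}\Sigma^{p}_{>k}$ is deterministic. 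Its norm is at most $\sigma^{p}_{k+1}\sqrt{d}\,\|U\|_{\infty}$, and also at most $\sigma^{p-1}_{k+1}\|a_{i}\|_{2}$.

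\textbf{The $V_{k}C^{(0)}$ part.} Here $c^{\top}_{i,p}V^{\top}_{>k}WV_{k}$ is a $k$-dimensional Gaussian vector. The blocks $V^{\top}_{>k}WV_{k}$ are i.i.d.\ $\mathcal{N}(0,\Delta_{\mathrm{init}}^{2}/2)$ by the orthonormal-basis description of $\sym(G)$ in Lemma~\ref{lem:private-initialization}. So by Theorem~\ref{thm:gaussian-concentration} with a union bound over $i$, $r$ and the relevant $p$, its norm is $\lesssim\|c_{i,p}\|\Delta_{\mathrm{init}}(\sqrt{k}+\sqrt{\log(16n(T+1)/\beta)})$.

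To avoid a union bound over infinitely many $m$, I will apply the concentration only to the single vector $c^{\top}_{i,2r+1}V^{\top}_{>k}WV_{k}$. The factor $\Sigma^{2m}_{>k}$ cannot be pulled out of the middle of the product, so I will instead write the sum over $m$ as the resolvent $\sum_{m}\Sigma^{2m}_{>k}(\cdot)\widehat{\Lambda}^{-(m+1)}$, acting column-wise in the eigenbasis of $\widehat{\Lambda}$. For each column $j$, the diagonal operator $(\widehat{\lambda}_{j}-\Sigma^{2}_{>k})^{-1}$ then needs a union bound over only $k$ columns. Its norm is at most $1/(\widehat{\lambda}_{k}-\sigma^{2}_{k+1})\le8/(7\Delta_{k})$.

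Combining this with $\|c_{i,2r+1}\|\le\sigma^{2r}_{k+1}\sigma_{k+1}\sqrt{d}\|U\|_{\infty}$ and $\|(C^{(0)})^{-1}\|\le7/\sqrt{48}$, the $V_{k}C^{(0)}$ part contributes
\[
\sigma^{2r}_{k+1}\|U\|_{\infty}\left(\sqrt{k}+\sqrt{\log\left(16n(T+1)/\beta\right)}\right)\cdot O\!\left(\frac{\sigma_{k+1}}{\sigma_{k}}\cdot\frac{\sqrt{d}\,\Delta_{\mathrm{init}}}{\Delta_{k}}\right).
\]
The gap hypothesis makes the last factor at most a small constant.

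\textbf{Main obstacle: the $V_{>k}D^{(0)}$ part.} Here $W$ and $D^{(0)}$ are dependent, so Gaussian concentration cannot be applied directly. My first attempt is to bound $\|c^{\top}_{i,p}V^{\top}_{>k}WV_{>k}\|_{2}$ by concentration for a fixed vector, which gives $\lesssim\|c_{i,p}\|\Delta_{\mathrm{init}}(\sqrt{d}+\sqrt{\log})$. I then multiply by $\|D^{(0)}\|_{2}\le2\|W\|_{2}/\Delta_{k}\lesssim\sqrt{d}\,\Delta_{\mathrm{init}}/\Delta_{k}$ and by the same resolvent bound. This produces an extra factor of order $(\sqrt{d}\,\Delta_{\mathrm{init}}/\Delta_{k})^{2}\cdot\sqrt{d}$, which has to be absorbed using the gap condition with its constant $8$.

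If that absorption fails, the fallback is a fixed-point/self-bounding argument. Treat $\Phi_{r,0}$ for all $r$ simultaneously: the $D^{(0)}$ term is itself $\Sigma_{>k}$-weighted rows of $D^{(0)}$, so it can be re-expressed through $\Phi_{r',0}$ with $r'\ge r$ times a contraction factor $O(\|W\|_{2}/\Delta_{k})\le1/2$. Solving the resulting inequality $\Phi\le\omega/3+\Phi/2$ gives the constant $3$ in $\omega_{\mathrm{init}}$.

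\textbf{Probability.} The total failure probability, over the bound on $\|W\|_{2}$ and the row-wise concentrations, is at most $\beta/8+\beta/8=\beta/4$.
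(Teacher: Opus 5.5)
\textbf{There is a genuine gap: neither concentration step is valid, because the argument never controls how $X^{(0)}$ and $\widehat\Lambda$ depend on $W$.} Your setup is correct: the Sylvester/Neumann formula for $D^{(0)}$, $s_0\le 1/7$, $\sigma_{\min}(C^{(0)})\ge\sqrt{48}/7$, and the law of $V_{>k}^\top W V_k$.

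In the $V_kC^{(0)}$ part, the quantity you must bound is $c_{i,2r+1}^\top(\widehat\lambda_j I-\Sigma^2_{>k})^{-1}V_{>k}^\top W V_k$. Its coefficient vector depends on the random eigenvalue $\widehat\lambda_j$, which is a function of $W$, including the very block $V_{>k}^\top W V_k$. So this is not a fixed linear functional of a Gaussian, and a union bound over the $k$ column indices does nothing about that dependence. To repair it you would need either a net over the possible values of $\lambda$, or a union bound over all Neumann terms $c_{i,2(r+m)+1}^\top V_{>k}^\top W V_k$ for every $m\ge 0$. The second option is exactly what you set out to avoid. In addition, recombining the $k$ columns, each with its own resolvent, against $C^{(0)}$ column by column costs a factor $\|C^{(0)}\|_F\le\sqrt{k}$. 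That gives $k$ rather than $\sqrt{k}$ in $\omega_{\mathrm{init}}$.

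The $V_{>k}D^{(0)}$ part fails quantitatively. Your first attempt gives about $\sigma_{k+1}^{2r}\|U\|_\infty\sqrt{d}\,(\sqrt{d}\,\Delta_{\mathrm{init}}/\Delta_k)^2$. The hypothesis only makes the squared ratio at most $1/256$, so a factor $\sqrt{d}$ survives. It cannot be absorbed into $\sqrt{k}+\sqrt{\log(16n(T+1)/\beta)}$ when $d\gg k+\log(nT/\beta)$.

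The self-bounding fallback is unsupported. $\Phi_{r,0}$ is a maximum of row norms, and multiplying by $V_{>k}^\top W V_{>k}$ does not turn the row $e_i^\top U_{>k}\Sigma^{2r+1}_{>k}$ into a combination of rows $e_{i'}^\top U_{>k}\Sigma^{2r'+1}_{>k}$ with $\ell_1$ coefficient mass $O(\|W\|_2/\Delta_k)$. Writing it that way requires inverting $\Sigma_{>k}$, which may be singular, and an $\ell_2\to\ell_1$ conversion in $\R^n$ that can cost $\sqrt{n}$. So there is no contraction to solve.

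The missing idea is the paper's: do not expand $E^{(0)}$ at all, but treat it as a function of the Gaussian matrix.
\begin{itemize}
\item On the ball $\|M\|_2\le\Delta_k/8$, Davis--Kahan applied to $A^\top A+M$ versus $A^\top A+M'$ gives $\|E(M)-E(M')\|_F\le\frac{6}{\Delta_k}\|M-M'\|_F$.
\item Composing with the Frobenius projection onto that ball gives a globally Lipschitz extension $\overline{E}(G)$.
\item Each scalar $e_i^\top U_{>k}\Sigma^{2r+1}_{>k}\overline{E}(G)\Sigma_k^{-1}z$ is then $\frac{6\Delta_{\mathrm{init}}}{\Delta_k}\frac{\sigma_{k+1}^{2r+1}}{\sigma_k}\|e_i^\top U_{>k}\|_2$-Lipschitz.
\item It has mean zero, because the reflection $J=V\diag(-I_k,I_{d-k})V^\top$ preserves the law of $G$ and sends $E\mapsto -E$.
\item It concentrates by Gaussian Lipschitz concentration, and a $1/2$-net in $z$ plus a union bound over $(i,r,z)$ finishes.
\end{itemize}
This handles all the nonlinear dependence between $W$ and $X^{(0)}$ at once. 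Only a single factor $\sqrt{d}\,\Delta_{\mathrm{init}}/\Delta_k$, coming from $\|e_i^\top U_{>k}\|_2\le\sqrt{d}\|U\|_\infty$, then has to be absorbed by the gap condition.
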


\begin{proof}
 Recall that we have $\Phi_{r,0}=\max_{i\in\left[n\right]}\left\Vert e^{\top}_{i}U_{>k}\Sigma^{2r+1}_{>k}E^{(0)}\Sigma^{-1}_{k}\right\Vert _{2}$.
To analyze the norm, it is useful to view $E^{(0)}$ as a function
of the Gaussian noise matrix added by Algorithm \ref{alg:Algorithm-initialize}.
More precisely, for a Gaussian matrix $G\sim\N(0,1)^{d\times d}$,
let $X\left(\Delta_{\mathrm{init}}\sym\left(G\right)\right)$ be the
top-$k$ eigenvectors of the matrix $A^{\top}A+\Delta_{\mathrm{init}}\sym\left(G\right)$,
and let
\begin{align*}
C\left(\Delta_{\mathrm{init}}\sym\left(G\right)\right) & :=V^{\top}_{k}X\left(\Delta_{\mathrm{init}}\sym\left(G\right)\right)\\
D\left(\Delta_{\mathrm{init}}\sym\left(G\right)\right) & :=V^{\top}_{>k}X\left(\Delta_{\mathrm{init}}\sym\left(G\right)\right)\\
E\left(\Delta_{\mathrm{init}}\sym\left(G\right)\right) & :=D\left(\Delta_{\mathrm{init}}\sym\left(G\right)\right)\left(C\left(\Delta_{\mathrm{init}}\sym\left(G\right)\right)\right)^{-1}
\end{align*}
where $E\left(\Delta_{\mathrm{init}}\sym\left(G\right)\right)$ is
defined whenever $C\left(\Delta_{\mathrm{init}}\sym\left(G\right)\right)$
is invertible. If the function $M\to E\left(M\right)$ was Lipschitz
under the Frobenius norm, we could use Theorem \ref{thm:gaussian-concentration}
to analyze $\left\Vert e^{\top}_{i}U_{>k}\Sigma^{2r+1}_{>k}E\left(M\right)\Sigma^{-1}_{k}\right\Vert _{2}$.
Unfortunately, the function is not globally Lipschitz, but it is Lipchitz
if we restrict it to the following ball:
\[
\mathcal{B}:=\left\{ M\colon M=M^{\top},\left\Vert M\right\Vert _{2}\leq\frac{\Delta_{k}}{8}\right\} .
\]
 We show in the following claim that the function $M\to E\left(M\right)$
is Lipschitz on $\mathcal{B}$. It is convenient to construct a Lipschitz
extension of this function that is defined everywhere as follows.
Let $\Pi_{\mathcal{B}}$ be the Frobenius norm projection onto $\mathcal{B}$,
i.e., for every matrix $M$, we let $\Pi_{\mathcal{B}}\left(M\right)=\arg\min_{M'\in\mathcal{B}}\left\Vert M-M'\right\Vert _{F}$.
Let
\[
\overline{E}\left(G\right)=E\left(\Pi_{\mathcal{B}}\left(\Delta_{\mathrm{init}}\sym\left(G\right)\right)\right).
\]
Next, we show that $M\to E\left(M\right)$ is Lipschitz on $\mathcal{B}$
under the Frobenius norm, which will then imply that $G\to\overline{E}\left(G\right)$
is globally Lipschitz under the Frobenius norm.
\begin{claim}
\label{claim:em-lipchitz}For any two matrices $M,M'\in\mathcal{B}$
for which $E\left(M\right)$ and $E\left(M'\right)$ are well-defined
(i.e., $C\left(M\right)$ and $C\left(M'\right)$ are invertible),
we have
\[
\left\Vert E\left(M\right)-E\left(M'\right)\right\Vert _{F}\leq\frac{6}{\Delta_{k}}\left\Vert M-M'\right\Vert _{F}.
\]
\end{claim}

\begin{proof}
We start by writing $E\left(M\right)$ and $E\left(M'\right)$ in
a more convenient form. For $W\in\left\{ M,M'\right\} $, let $P\left(W\right):=X\left(W\right)X\left(W\right)^{\top}$.
We can write
\[
V^{\top}P\left(W\right)V=\begin{pmatrix}P_{11}\left(W\right) & P_{12}\left(W\right)\\
P_{21}\left(W\right) & P_{22}\left(W\right)
\end{pmatrix},
\]
where
\begin{align*}
P_{11}\left(W\right) & =V^{\top}_{k}P\left(W\right)V_{k}=C\left(W\right)C\left(W\right)^{\top}\in\R^{k\times k}\\
P_{12}\left(W\right) & =V^{\top}_{k}P\left(W\right)V_{>k}=C\left(W\right)D\left(W\right)^{\top}\in\R^{k\times(d-k)}\\
P_{21}\left(W\right) & =V^{\top}_{>k}P\left(W\right)V_{k}=D\left(W\right)C\left(W\right)^{\top}\in\R^{(d-k)\times k}\\
P_{22}\left(W\right) & =V^{\top}_{>k}P\left(W\right)V_{>k}=D\left(W\right)D\left(W\right)^{\top}\in\R^{(d-k)\times(d-k)}.
\end{align*}
We have
\[
E\left(W\right)=D\left(W\right)C\left(W\right)^{-1}=P_{21}\left(W\right)P_{11}\left(W\right)^{-1}.
\]
Using the above identity, we obtain
\begin{align*}
 & \left\Vert E\left(M\right)-E\left(M'\right)\right\Vert _{F}\\
 & =\left\Vert P_{21}\left(M\right)P_{11}\left(M\right)^{-1}-P_{21}\left(M'\right)P_{11}\left(M'\right)^{-1}\right\Vert _{F}\\
 & =\left\Vert \left(P_{21}\left(M\right)-P_{21}\left(M'\right)\right)P_{11}\left(M\right)^{-1}+P_{21}\left(M'\right)\left(P_{11}\left(M\right)^{-1}-P_{11}\left(M'\right)^{-1}\right)\right\Vert _{F}\\
 & \overset{(a)}{=}\left\Vert \left(P_{21}\left(M\right)-P_{21}\left(M'\right)\right)P_{11}\left(M\right)^{-1}+P_{21}\left(M'\right)P_{11}\left(M\right)^{-1}\left(P_{11}\left(M'\right)-P_{11}\left(M\right)\right)P_{11}\left(M'\right)^{-1}\right\Vert _{F}\\
 & \overset{(b)}{\leq}\left\Vert P_{21}\left(M\right)-P_{21}\left(M'\right)\right\Vert _{F}\left\Vert P_{11}\left(M\right){}^{-1}\right\Vert _{2}\\
 & \ \ +\left\Vert P_{21}\left(M'\right)\right\Vert _{2}\left\Vert P_{11}\left(M\right)^{-1}\right\Vert _{2}\left\Vert P_{11}\left(M'\right)^{-1}\right\Vert _{2}\left\Vert P\left(M\right)-P\left(M'\right)\right\Vert _{F}\\
 & \overset{(c)}{\leq}\left(1+\left\Vert P_{21}\left(M'\right)\right\Vert _{2}\left\Vert P_{11}\left(M'\right)^{-1}\right\Vert _{2}\right)\left\Vert P_{11}\left(M\right)^{-1}\right\Vert _{2}\left\Vert P\left(M\right)-P\left(M'\right)\right\Vert _{F}
\end{align*}
In $(a)$, we used the identity $A^{-1}-B^{-1}=A^{-1}\left(B-A\right)B^{-1}$.
In $(b)$, we used the inequalities $\left\Vert A+B\right\Vert _{F}\leq\left\Vert A\right\Vert _{F}+\left\Vert B\right\Vert _{F}$,
$\left\Vert AB\right\Vert _{F}\leq\left\Vert A\right\Vert _{F}\left\Vert B\right\Vert _{2}$,
and $\left\Vert AB\right\Vert _{2}\leq\left\Vert A\right\Vert _{2}\left\Vert B\right\Vert _{2}$.
In $(c)$, we used that $\left\Vert P_{21}\left(M\right)-P_{21}\left(M'\right)\right\Vert _{F}\leq\left\Vert P\left(M\right)-P\left(M'\right)\right\Vert _{F}$,
which we can show as follows: 
\begin{align*}
\left\Vert P_{21}\left(M\right)-P_{21}\left(M'\right)\right\Vert _{F} & =\left\Vert V^{\top}_{>k}\left(P\left(M\right)-P\left(M'\right)\right)V_{k}\right\Vert _{F}\\
 & \leq\left\Vert V^{\top}_{>k}\right\Vert _{2}\left\Vert P\left(M\right)-P\left(M'\right)\right\Vert _{F}\left\Vert V_{k}\right\Vert _{2}\\
 & =\left\Vert P\left(M\right)-P\left(M'\right)\right\Vert _{F}.
\end{align*}
where the last equality follows from the fact that $V_{k}$ and $V_{>k}$
have orthonormal columns.

Next, we bound the terms in the above inequality. Consider $\left\Vert P_{21}\left(M'\right)\right\Vert _{2}$.
We have
\begin{align*}
\left\Vert P_{21}\left(M'\right)\right\Vert _{2} & =\left\Vert V^{\top}_{>k}X\left(M'\right)X\left(M'\right)^{\top}V_{k}\right\Vert _{2}\\
 & \leq\left\Vert V^{\top}_{>k}\right\Vert _{2}\left\Vert X\left(M'\right)\right\Vert ^{2}_{2}\left\Vert V_{k}\right\Vert ^{2}\leq1
\end{align*}
where the last inequality follows from the fact that $V_{k}$, $V_{>k}$
, and $X\left(M'\right)$ have orthonormal columns.

Consider $\left\Vert P_{11}\left(M\right){}^{-1}\right\Vert _{2}$
and $\left\Vert P_{11}\left(M'\right)^{-1}\right\Vert _{2}$. For
$W\in\left\{ M,M'\right\} $, we have
\begin{align*}
\left\Vert I_{k}-P_{11}\left(W\right)\right\Vert _{2} & =\left\Vert V^{\top}_{k}\left(I-P\left(W\right)\right)V_{k}\right\Vert _{2}=\left\Vert \left(I-P\left(W\right)\right)V_{k}\right\Vert ^{2}_{2}
\end{align*}
where we used that $P_{11}\left(W\right)=V^{\top}_{k}P\left(W\right)V_{k}$
and $Q:=I-P\left(W\right)$ is an orthogonal projector and thus $Q^{\top}Q=Q$.

Lemma \ref{lem:davis-kahan-error} together with $\left\Vert W\right\Vert _{2}\leq\frac{\Delta_{k}}{8}$
give
\[
\left\Vert \left(I-P\left(W\right)\right)V_{k}\right\Vert _{2}\leq\frac{\left\Vert W\right\Vert _{2}}{\Delta_{k}-\left\Vert W\right\Vert _{2}}\leq\frac{2\left\Vert W\right\Vert _{2}}{\Delta_{k}}\leq\frac{1}{4}.
\]
Thus
\[
\left\Vert I_{k}-P_{11}\left(W\right)\right\Vert _{2}\leq\frac{1}{16}
\]
and therefore $P_{11}\left(W\right)\succeq\frac{15}{16}I_{k}$ and
$\left\Vert P_{11}\left(W\right)^{-1}\right\Vert _{2}\leq\frac{16}{15}$.

Finally, consider $\left\Vert P\left(M\right)-P\left(M'\right)\right\Vert _{F}$.
We have
\begin{align*}
\left\Vert P\left(M\right)-P\left(M'\right)\right\Vert _{F} & =\left\Vert X\left(M\right)X\left(M\right)^{\top}-X\left(M'\right)X\left(M'\right)^{\top}\right\Vert _{F}\\
 & =\sqrt{2}\left\Vert \left(I-X\left(M'\right)X\left(M'\right)^{\top}\right)X\left(M\right)\right\Vert _{F}
\end{align*}
We can upper bound the last term using the Davis-Kahan theorem (Theorem
\ref{thm:davis-kahan}) and Weyl's inequality (Theorem \ref{thm:weyl})
as follows. Let
\[
\delta=\max\left\{ \lambda_{k}\left(A^{\top}A+M\right)-\lambda_{k+1}\left(A^{\top}A+M'\right),\lambda_{k}\left(A^{\top}A+M'\right)-\lambda_{k+1}\left(A^{\top}A+M\right)\right\} .
\]
We apply the Davis-Kahan theorem with matrices $A^{\top}A+M$ and
$A^{\top}A+M'$, and $s=1$ and $r=k$. We obtain
\[
\left\Vert \left(I-X\left(M'\right)X\left(M'\right)^{\top}\right)X\left(M\right)\right\Vert _{F}\leq\frac{\left\Vert M-M'\right\Vert _{F}}{\delta}.
\]
To lower bound $\delta$, we use Weyl's inequality and the fact that
$\left\Vert M\right\Vert _{2},\left\Vert M'\right\Vert _{2}\leq\frac{\Delta_{k}}{8}$.
By Weil's inequality, for $W\in\left\{ M,M'\right\} $ and $j\in\left\{ k,k+1\right\} $,
we have
\[
\left|\lambda_{j}\left(A^{\top}A+W\right)-\lambda_{j}\left(A^{\top}A\right)\right|\leq\left\Vert W\right\Vert _{2}.
\]
Thus
\begin{align*}
\lambda_{k}\left(A^{\top}A+M\right)-\lambda_{k+1}\left(A^{\top}A+M'\right) & \geq\left(\lambda_{k}\left(A^{\top}A\right)-\left\Vert M\right\Vert _{2}\right)-\left(\lambda_{k+1}\left(A^{\top}A\right)+\left\Vert M'\right\Vert _{2}\right)\\
 & =\sigma^{2}_{k}-\sigma^{2}_{k+1}-\left\Vert M\right\Vert _{2}-\left\Vert M'\right\Vert _{2}\\
 & =\Delta_{k}-\left\Vert M\right\Vert _{2}-\left\Vert M'\right\Vert _{2}\\
 & \ge\frac{3}{4}\Delta_{k}
\end{align*}
where in the last inequality we used that $\left\Vert M\right\Vert _{2},\left\Vert M'\right\Vert _{2}\leq\frac{\Delta_{k}}{8}$.

Similarly, we have
\[
\lambda_{k}\left(A^{\top}A+M'\right)-\lambda_{k+1}\left(A^{\top}A+M\right)\geq\frac{3}{4}\Delta_{k}
\]
and thus $\delta\geq\frac{3}{4}\Delta_{k}$.

Thus we obtain
\begin{align*}
\left\Vert \left(I-X\left(M'\right)X\left(M'\right)^{\top}\right)X\left(M\right)\right\Vert _{F} & \leq\frac{\left\Vert M-M'\right\Vert _{F}}{\delta}\leq\frac{4}{3\Delta_{k}}\left\Vert M-M'\right\Vert _{F}.
\end{align*}
and
\[
\left\Vert P\left(M\right)-P\left(M'\right)\right\Vert _{F}=\sqrt{2}\left\Vert \left(I-X\left(M'\right)X\left(M'\right)^{\top}\right)X\left(M\right)\right\Vert _{F}\leq\frac{8}{3\Delta_{k}}\left\Vert M-M'\right\Vert _{F}.
\]
Putting everything together, we have
\[
\left\Vert E\left(M\right)-E\left(M'\right)\right\Vert _{F}\leq\left(1+\frac{16}{15}\right)\frac{16}{15}\frac{8}{3\Delta_{k}}\left\Vert M-M'\right\Vert _{F}\leq\frac{6}{\Delta_{k}}\left\Vert M-M'\right\Vert _{F},
\]
as needed.
\end{proof}

Using the above claim and the fact that $\Pi_{\mathcal{B}}\left(\cdot\right)$
and $\sym\left(\cdot\right)$ are non-expansive in Frobenius norm,
we can show that $G\to\overline{E}\left(G\right)$ is Lipschitz.
\begin{claim}
For any two matrices $G,G'$ for which $\overline{E}\left(G\right)$
and $\overline{E}\left(G'\right)$ are well-defined, we have
\[
\left\Vert \overline{E}\left(G\right)-\overline{E}\left(G'\right)\right\Vert _{F}\leq\frac{6\Delta_{\mathrm{init}}}{\Delta_{k}}\left\Vert G-G'\right\Vert _{F}.
\]
where $c_{1}$ is an absolute constant.

\begin{proof}
We have
\begin{align*}
\left\Vert \overline{E}\left(G\right)-\overline{E}\left(G'\right)\right\Vert _{F} & =\left\Vert E\left(\Pi_{\mathcal{B}}\left(\Delta_{\mathrm{init}}\sym\left(G\right)\right)\right)-E\left(\Pi_{\mathcal{B}}\left(\Delta_{\mathrm{init}}\sym\left(G'\right)\right)\right)\right\Vert _{F}\\
 & \leq\frac{6}{\Delta_{k}}\left\Vert \Pi_{\mathcal{B}}\left(\Delta_{\mathrm{init}}\sym\left(G\right)\right)-\Pi_{\mathcal{B}}\left(\Delta_{\mathrm{init}}\sym\left(G'\right)\right)\right\Vert _{F}\\
 & \leq\frac{6}{\Delta_{k}}\left\Vert \Delta_{\mathrm{init}}\sym\left(G\right)-\Delta_{\mathrm{init}}\sym\left(G'\right)\right\Vert _{F}\\
 & \leq\frac{6\Delta_{\mathrm{init}}}{\Delta_{k}}\left\Vert G-G'\right\Vert _{F}
\end{align*}
where we used the previous claim and the fact that $\Pi_{\mathcal{B}}\left(\cdot\right)$
and $\sym\left(\cdot\right)$ are non-expansive in Frobenius norm.
\end{proof}

\end{claim}

Next, we use Gaussian concentration to upper bound $\left\Vert e^{\top}_{i}U_{>k}\Sigma^{2r+1}_{>k}\overline{E}\left(G^{(\mathrm{init})}\right)\Sigma^{-1}_{k}\right\Vert _{2}$
for every row $1\leq i\leq n$ and every $0\leq r\leq T$. To this
end, for a fixed row $i$, exponent $r$, and an arbitrary unit vector
$z\in\R^{k}$, we consider the following scalar-valued function:
\[
f_{i,r,z}\left(G\right):=e^{\top}_{i}U_{>k}\Sigma^{2r+1}_{>k}\overline{E}\left(G\right)\Sigma^{-1}_{k}z
\]
Since $\overline{E}\left(G\right)$ is Lipschitz, we can show that
$f_{i,r,z}$ is Lipschitz.
\begin{claim}
For any two matrices $G,G'$ for which $\overline{E}\left(G\right)$
and $\overline{E}\left(G'\right)$ are well-defined, we have
\[
\left|f_{i,r,z}\left(G\right)-f_{i,r,z}\left(G'\right)\right|\leq L_{i,r}\left\Vert G-G'\right\Vert _{F},
\]
where $L_{i,r}:=\frac{6\Delta_{\mathrm{init}}}{\Delta_{k}}\frac{\sigma^{2r+1}_{k+1}}{\sigma_{k}}\left\Vert e^{\top}_{i}U_{>k}\right\Vert _{2}$.
\end{claim}

\begin{proof}
We have
\begin{align*}
\left|f_{i,r,z}\left(G\right)-f_{i,r,z}\left(G'\right)\right| & =\left|e^{\top}_{i}U_{>k}\Sigma^{2r+1}_{>k}\left(\overline{E}\left(G\right)-\overline{E}\left(G'\right)\right)\Sigma^{-1}_{k}z\right|\\
 & \leq\left\Vert e^{\top}_{i}U_{>k}\right\Vert _{2}\left\Vert \Sigma^{2r+1}_{>k}\right\Vert _{2}\left\Vert \overline{E}\left(G\right)-\overline{E}\left(G'\right)\right\Vert _{F}\left\Vert \Sigma^{-1}_{k}\right\Vert _{2}\left\Vert z\right\Vert _{2}\\
 & =\left\Vert e^{\top}_{i}U_{>k}\right\Vert _{2}\sigma^{2r+1}_{k+1}\left\Vert \overline{E}\left(G\right)-\overline{E}\left(G'\right)\right\Vert _{F}\frac{1}{\sigma_{k}}\\
 & \leq\left\Vert e^{\top}_{i}U_{>k}\right\Vert _{2}\sigma^{2r+1}_{k+1}\frac{6\Delta_{\mathrm{init}}}{\Delta_{k}}\left\Vert G-G'\right\Vert _{F}\frac{1}{\sigma_{k}}
\end{align*}
where in the last inequality we used the fact that $\overline{E}\left(G\right)$
is $\frac{6\Delta_{\mathrm{init}}}{\Delta_{k}}$-Lipschitz.
\end{proof}

Next, we show that $\E\left[f_{i,r,z}\left(G\right)\right]=0$.
\begin{claim}
We have $\E\left[f_{i,r,z}\left(G\right)\right]=0$.
\end{claim}

\begin{proof}
Consider a Gaussian matrix $G\sim\mathcal{N}\left(0,1\right)^{d\times d}$.
Let
\[
J=V\begin{pmatrix}-I_{k} & 0\\
0 & I_{d-k}
\end{pmatrix}V^{\top}.
\]
The matrix $J$ is orthogonal and symmetric. Let $G'=JGJ^{\top}$.
Since the Gaussian distribution is rotationally invariant, $G'$ has
the same distribution as $G$. Next, we will show that $f_{i,r,z}\left(G'\right)=-f_{i,r,z}\left(G\right)$
which will imply that $\E\left[f_{i,r,z}\left(G\right)\right]=0$.
Let $M:=\Pi_{\mathcal{B}}\left(\Delta_{\mathrm{init}}\sym\left(G\right)\right)$
and $M':=\Pi_{\mathcal{B}}\left(\Delta_{\mathrm{init}}\sym\left(G'\right)\right)$.
Note that we have $M'=JMJ^{\top}$ and $J\left(A^{\top}A\right)J^{\top}=A^{\top}A$,
and thus
\[
A^{\top}A+M'=J\left(A^{\top}A+M\right)J^{\top}.
\]
As in the proof of Claim \ref{claim:em-lipchitz}, for $W\in\left\{ M,M'\right\} $,
we let $P\left(W\right)=X\left(W\right)X\left(W\right)^{\top}$ and
we write it as follows:
\[
V^{\top}P\left(W\right)V=\begin{pmatrix}P_{11}\left(W\right) & P_{12}\left(W\right)\\
P_{21}\left(W\right) & P_{22}\left(W\right)
\end{pmatrix}
\]
where $P_{11}\left(W\right)=V^{\top}_{k}P\left(W\right)V_{k}$, $P_{12}\left(W\right)=V^{\top}_{k}P\left(W\right)V_{>k}$,
$P_{21}\left(W\right)=V^{\top}_{>k}P\left(W\right)V_{k}$, and $P_{22}\left(W\right)=V^{\top}_{>k}P\left(W\right)V_{>k}$.

Since $A^{\top}A+M'=J\left(A^{\top}A+M\right)J^{\top}$, $JV_{k}=-V_{k}$,
and $JV_{>k}=V_{>k}$, we have $P\left(M'\right)=JP\left(M\right)J^{\top}$
and
\[
V^{\top}P\left(M'\right)V=\begin{pmatrix}P_{11}\left(M\right) & -P_{12}\left(M\right)\\
-P_{21}\left(M\right) & P_{22}\left(M\right)
\end{pmatrix}
\]
Since $E\left(M\right)=P_{21}\left(M\right)P_{11}\left(M\right)^{-1}$,
we obtain that $\overline{E}\left(G'\right)=-\overline{E}\left(G\right)$.
Therefore $f_{i,r,z}\left(G'\right)=-f_{i,r,z}\left(G\right)$ and
\[
\E\left[f_{i,r,z}\left(G\right)\right]=\E\left[f_{i,r,z}\left(G'\right)\right]=-\E\left[f_{i,r,z}\left(G\right)\right].
\]
The expectation exists since $f_{i,r,z}$ is globally Lipschitz, and
thus $\E\left[f_{i,r,z}\left(G\right)\right]=0$.
\end{proof}

By applying Theorem \ref{thm:gaussian-concentration} to $f_{i,r,z}$
and using a net argument, we obtain the following guarantee.
\begin{claim}
\label{claim:norm-concentration-initial}With probability at least
$1-\frac{\beta}{8}$, simultaneously for all $1\leq i\leq n$ and
$0\leq r\leq T$, we have
\[
\left\Vert e^{\top}_{i}U_{>k}\Sigma^{2r+1}_{>k}\overline{E}\left(G^{(\mathrm{init})}\right)\Sigma^{-1}_{k}\right\Vert _{2}\leq2L_{i,r}\left(2\sqrt{k}+\sqrt{2\log\left(\frac{16n\left(T+1\right)}{\beta}\right)}\right).
\]
\end{claim}

\begin{proof}
We discretize the set of all possible unit vectors $z\in\R^{k}$ (i.e.,
the unit sphere in $\R^{k}$) using a $1/2$-net $\calN$, i.e., $\calN$
is a set of unit vectors with the property that, for every unit vector
$z\in\R^{k}$, there is a vector $z'\in\mathcal{N}$ such that $\left\Vert z-z'\right\Vert _{2}\leq1/2$.
A standard volume argument shows that there exists a net of size $\left|\calN\right|\leq5^{k}$.
In the following, we let $\calN$ be a $1/2$-net of size $\left|\calN\right|\leq5^{k}$.

Fix a row index $i$ and exponent $r$. Let $w:=e^{\top}_{i}U_{>k}\Sigma^{2r+1}_{>k}\overline{E}\left(G^{(\mathrm{init})}\right)\Sigma^{-1}_{k}$.
Note that $w\in\mathbb{R}^{1\times k}$ is a row vector. In the following,
we upper bound $\left\Vert w\right\Vert _{2}.$ We may assume that
$w\neq0$, since otherwise $\left\Vert w\right\Vert _{2}=0$ and we
are done. Let $\overline{w}=w^{\top}/\left\Vert w\right\Vert _{2}$
and thus $\left\Vert w\right\Vert _{2}=w\overline{w}$. Since $\calN$
is a $1/2$-net, there is a vector $z'\in\calN$ such that $\left\Vert \overline{w}-z'\right\Vert _{2}\leq1/2$.
We have
\begin{align*}
\left\Vert w\right\Vert _{2} & =w\overline{w}=wz'+w\left(\overline{w}-z'\right)\\
 & \leq\left|wz'\right|+\left\Vert w\right\Vert _{2}\left\Vert \overline{w}-z'\right\Vert _{2}\\
 & \leq\max_{z\in\calN}\left|wz\right|+\frac{1}{2}\left\Vert w\right\Vert _{2}.
\end{align*}
Rearranging, we obtain
\[
\left\Vert w\right\Vert _{2}\leq2\max_{z\in\calN}\left|wz\right|.
\]
Since $w=e^{\top}_{i}U_{>k}\Sigma^{2r+1}_{>k}\overline{E}\left(G^{(\mathrm{init})}\right)\Sigma^{-1}_{k}$
and $f_{i,r,z}\left(G^{(\mathrm{init})}\right)=wz$, we have
\[
\left\Vert e^{\top}_{i}U_{>k}\Sigma^{2r+1}_{>k}\overline{E}\left(G^{(\mathrm{init})}\right)\Sigma^{-1}_{k}\right\Vert _{2}\leq2\max_{z\in\mathcal{N}}\left|f_{i,r,z}\left(G^{(\mathrm{init})}\right)\right|.
\]
Since $f_{i,r,z}$ is $L_{i,r}$-Lipschitz and has expectation $0$,
Theorem \ref{thm:gaussian-concentration} gives that, for every $x\geq0$,
\[
\Pr\left[\left|f_{i,r,z}\left(G^{(\mathrm{init})}\right)\right|\geq L_{i,r}x\right]\le2\exp\left(-\frac{x^{2}}{2}\right).
\]
There are $n\left(T+1\right)\left|\calN\right|\leq n\left(T+1\right)5^{k}$
triples $\left(i,r,z\right)$ where $1\leq i\leq n$ is a row index,
$0\leq r\leq T$ is an exponent, and $z$ is a vector in $\calN$.
Setting $x=\sqrt{2\log\left(\frac{16n\left(T+1\right)5^{k}}{\beta}\right)}$
in the above inequality and using an union bound over all triples
$\left(i,r,z\right)$, we obtain that, with probability at least $1-\frac{\beta}{8}$,
simultaneously for all $i$ and $r$, we have
\begin{align*}
\left\Vert e^{\top}_{i}U_{>k}\Sigma^{2r+1}_{>k}\overline{E}\left(G^{(\mathrm{init})}\right)\Sigma^{-1}_{k}\right\Vert _{2} & \leq2\max_{z\in\mathcal{N}}\left|f_{i,r,z}\left(G^{(\mathrm{init})}\right)\right|\\
 & \leq2L_{i,r}\sqrt{2\log\left(\frac{16n\left(T+1\right)5^{k}}{\beta}\right)}\\
 & \leq2L_{i,r}\left(2\sqrt{k}+\sqrt{2\log\left(\frac{16n\left(T+1\right)}{\beta}\right)}\right)
\end{align*}
as needed.
\end{proof}

Finally, we show that the assumption on $\Delta_{k}$ implies that,
with probability at least $1-\frac{\beta}{8}$, $\Delta_{\mathrm{init}}\sym\left(G^{(\mathrm{init})}\right)\in\mathcal{B}$
and thus $\overline{E}\left(G^{(\mathrm{init})}\right)=E^{(0)}$.
\begin{claim}
Suppose that $\Delta_{k}\geq8\Delta_{\mathrm{init}}\left(2\sqrt{d}+\sqrt{2\log\frac{8}{\beta}}\right)$.
With probability at least $1-\frac{\beta}{8}$, $\left\Vert \Delta_{\mathrm{init}}\sym\left(G^{(\mathrm{init})}\right)\right\Vert _{2}\leq\frac{1}{8}\Delta_{k}$
and thus $\Delta_{\mathrm{init}}\sym\left(G^{(\mathrm{init})}\right)\in\mathcal{B}$
and $\overline{E}\left(G^{(\mathrm{init})}\right)=E^{(0)}$.
\end{claim}

\begin{proof}
By Lemma \ref{lem:gaussian-norm}, we have
\[
\Pr\left[\left\Vert \sym\left(G^{(\mathrm{init})}\right)\right\Vert _{2}>2\sqrt{d}+\sqrt{2\log\frac{8}{\beta}}\right]\leq\exp\left(-\frac{1}{2}\left(\sqrt{2\log\frac{8}{\beta}}\right)^{2}\right)=\frac{\beta}{8}.
\]
Thus, with probability $1-\beta/8$, we have
\[
\Delta_{\mathrm{init}}\left\Vert \sym\left(G^{(\mathrm{init})}\right)\right\Vert _{2}\leq\Delta_{\mathrm{init}}\left(2\sqrt{d}+\sqrt{2\log\frac{8}{\beta}}\right)\le\frac{1}{8}\Delta_{k}.
\]
\end{proof}

Thus, by combining the last two claims, we obtain that, with probability
at least $1-\frac{\beta}{4}$, simultaneously for all $0\leq r\leq T$,
we have
\begin{align*}
\Phi_{r,0} & :=\max_{i\in\left[n\right]}\left\Vert e^{\top}_{i}U_{>k}\Sigma^{2r+1}_{>k}E^{(0)}\Sigma^{-1}_{k}\right\Vert _{2}\\
 & =\max_{i\in\left[n\right]}\left\Vert e^{\top}_{i}U_{>k}\Sigma^{2r+1}_{>k}\overline{E}\left(G^{(\mathrm{init})}\right)\Sigma^{-1}_{k}\right\Vert _{2}\\
 & \leq\max_{i\in\left[n\right]}2L_{i,r}\left(2\sqrt{k}+\sqrt{2\log\left(\frac{16n\left(T+1\right)}{\beta}\right)}\right)\\
 & =\max_{i\in\left[n\right]}\frac{12\Delta_{\mathrm{init}}}{\Delta_{k}}\frac{\sigma^{2r+1}_{k+1}}{\sigma_{k}}\left\Vert e^{\top}_{i}U_{>k}\right\Vert _{2}\left(2\sqrt{k}+\sqrt{2\log\left(\frac{16n\left(T+1\right)}{\beta}\right)}\right).
\end{align*}
Using that $\left\Vert e^{\top}_{i}U_{>k}\right\Vert _{2}\leq\sqrt{d-k}\left\Vert U\right\Vert _{\infty}$
and the assumption $\Delta_{k}\geq8\Delta_{\mathrm{init}}\left(2\sqrt{d}+\sqrt{2\log\frac{8}{\beta}}\right)$,
we can further upper bound
\begin{align*}
\Phi_{r,0} & \leq\frac{12\Delta_{\mathrm{init}}}{\Delta_{k}}\frac{\sigma^{2r+1}_{k+1}}{\sigma_{k}}\sqrt{d-k}\left\Vert U\right\Vert _{\infty}\left(2\sqrt{k}+\sqrt{2\log\left(\frac{16n\left(T+1\right)}{\beta}\right)}\right)\\
 & \leq\frac{12}{8\left(2\sqrt{d}+\sqrt{2\log\frac{8}{\beta}}\right)}\frac{\sigma^{2r+1}_{k+1}}{\sigma_{k}}\sqrt{d-k}\left\Vert U\right\Vert _{\infty}\left(2\sqrt{k}+\sqrt{2\log\left(\frac{16n\left(T+1\right)}{\beta}\right)}\right)\\
 & \leq\frac{3}{2}\frac{\sigma^{2r+1}_{k+1}}{\sigma_{k}}\left\Vert U\right\Vert _{\infty}\left(2\sqrt{k}+\sqrt{2\log\left(\frac{16n\left(T+1\right)}{\beta}\right)}\right)\\
 & \leq\frac{3}{2}\sigma^{2r}_{k+1}\left\Vert U\right\Vert _{\infty}\left(2\sqrt{k}+\sqrt{2\log\left(\frac{16n\left(T+1\right)}{\beta}\right)}\right)\\
 & \leq3\sigma^{2r}_{k+1}\left\Vert U\right\Vert _{\infty}\left(\sqrt{k}+\sqrt{\log\left(\frac{16n\left(T+1\right)}{\beta}\right)}\right)
\end{align*}
as needed.
\end{proof}

Using Lemma \ref{lem:initial-potential}, we will show by induction
that the clipping is inactive in all iterations with high probability.
We will use the following lemma for the inductive step. Recall that
$\gclip$ is the event defined in Definition \ref{def:gclip} and
$\ginit$ is the event in Lemma \ref{lem:initial-potential}. 
\begin{lem}
\label{lem:clipping-inactive} Suppose that $\Delta_{k}\geq\max\left\{ 8\Delta_{\mathrm{init}}\left(2\sqrt{d}+\sqrt{2\log\frac{8}{\beta}}\right),8\Delta q_{\mathrm{row}}\right\} $,
$\gamma\leq\frac{\sigma^{2}_{k}}{4},$ $p_{R}:=\frac{3\Delta Rq_{k}}{2\sigma_{k}}\leq\frac{\kappa_{k}}{16}$,
and $R\geq15\left\Vert U\right\Vert _{\infty}\left(\sqrt{k}+\sqrt{\log\left(\frac{16n\left(T+1\right)}{\beta}\right)}\right)$.
Suppose that $\mathcal{G}_{\mathrm{clip}}$ and $\ginit$ hold. If
$s_{j}\le\frac{1}{2}$ for all $0\leq j\leq t$ and clipping was inactive
in all iterations $0\leq j\leq t-1$, then clipping is also inactive
in iteration $t$. 
\end{lem}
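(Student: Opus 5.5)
The plan is to bound the potential $\Phi_{0,t}$ and then apply Lemma~\ref{lem:row-length-bound}. Since $s_t\le\frac12$, $\gamma\le\sigma_k^2/4$ and (under $\gclip$) $\|\Delta\sym(G^{(t)})\|_2\le\gamma/4$, that lemma gives
\[
\left\Vert a_i^\top X^{(t)}B^{(t)}\right\Vert_2\le\sqrt2\left(\sqrt k\|U\|_\infty+\Phi_{0,t}\right).
\]
So it suffices to show $\Phi_{0,t}\le \omega_{\mathrm{init}}+c\,R$ for a small absolute constant $c$ (around $1/4$). The hypothesis $R\ge 15\|U\|_\infty(\sqrt k+\sqrt{\log(16n(T+1)/\beta)})$ then absorbs $\sqrt2(\sqrt k\|U\|_\infty+\omega_{\mathrm{init}})$, since $\omega_{\mathrm{init}}=3\|U\|_\infty(\cdots)$, and the right-hand side is at most $R$.

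\textbf{One-step recurrence for the potentials.} Fix $j\le t-1$. Clipping was inactive in iteration $j$ and $s_j\le\frac12$. Lemmas~\ref{lem:whitening} and~\ref{lem:K-inverse} give $\|(K^{(j)})^{-1}\|_2\le\frac{3}{2\sigma_k}$. Under $\gclip$ this gives $\|F^{(j)}\|_2\le \Delta R q_k\cdot\frac{3}{2\sigma_k}=p_R<1$. Lemma~\ref{lem:tangent-recurrence} then applies:
\[
S^{(j+1)}=\Sigma_{>k}\left(\Sigma_{>k}^2E^{(j)}\Sigma_k^{-2}+L^{(j)}\right)\left(I+F^{(j)}\right)^{-1}\Sigma_k^{-1}.
\]
Multiplying on the left by $e_i^\top U_{>k}\Sigma_{>k}^{2r}$ splits the row into two pieces.

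The first piece is $e_i^\top U_{>k}\Sigma^{2(r+1)}_{>k}S^{(j)}\cdot\Sigma_k^{-1}(I+F^{(j)})^{-1}\Sigma_k^{-1}$. By Lemma~\ref{lem:neumann-series} its norm is at most $\Phi_{r+1,j}\,\sigma_k^{-2}/(1-p_R)$.

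The second piece is $\Delta R\, e_i^\top U_{>k}\Sigma_{>k}^{2r+1}V_{>k}^\top\widetilde G^{(j)}(K^{(j)})^{-1}(I+F^{(j)})^{-1}\Sigma_k^{-1}$. Using the row bound (\ref{eq:clip-gaussian-row}) in $\gclip$, its norm is at most $\Delta R\,\sigma_{k+1}^{2r}q_{\mathrm{row}}\cdot\frac{3}{2\sigma_k^2(1-p_R)}$.

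Normalize by setting $\psi_{r,j}:=\Phi_{r,j}/\sigma_{k+1}^{2r}$, so that $\sigma_{k+1}^2/\sigma_k^2=1-\kappa_k$ appears. This yields
\[
\psi_{r,j+1}\le\rho\,\psi_{r+1,j}+\frac{3\Delta R q_{\mathrm{row}}}{2(1-p_R)\sigma_k^2},\qquad \rho:=\frac{1-\kappa_k}{1-p_R}\le 1-\frac{\kappa_k}{2},
\]
where the bound on $\rho$ uses $p_R\le\kappa_k/16$. The recurrence is valid for all $r\le T-1$.

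\textbf{Unrolling.} Apply the recurrence along the diagonal $(0,t)\to(1,t-1)\to\cdots\to(t,0)$; this stays inside $r\le T$ because $t\le T$. This gives
\[
\psi_{0,t}\le\rho^t\psi_{t,0}+\frac{3\Delta Rq_{\mathrm{row}}}{2(1-p_R)\sigma_k^2}\sum_{\ell\ge0}\rho^\ell\le \omega_{\mathrm{init}}+\frac{3\Delta R q_{\mathrm{row}}}{(1-p_R)\kappa_k\sigma_k^2}.
\]
Here $\psi_{t,0}\le\omega_{\mathrm{init}}$ by $\ginit$ (Lemma~\ref{lem:initial-potential}). Since $\kappa_k\sigma_k^2=\Delta_k\ge 8\Delta q_{\mathrm{row}}$, the noise term is at most $\frac{3R}{8(1-p_R)}\le\frac{R}{2}$.

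\textbf{Expected obstacle.} The main difficulty is the constant bookkeeping in the final step. We need $\sqrt2\bigl(\sqrt k\|U\|_\infty+\omega_{\mathrm{init}}+\text{noise}\bigr)\le R$. With the crude noise bound $R/2$ the factor $\sqrt2$ pushes the total slightly over budget. I would therefore tighten the noise estimate: use $p_R\le 1/16$, and use $\sum_\ell\rho^\ell\le 2/\kappa_k$ more carefully (or $\rho\le 1-\frac{15}{16}\kappa_k$). This should bring the noise term to roughly $\frac{3\cdot 16}{8\cdot 15}R\cdot\frac{16}{15}$ or less, which must then be checked against the slack $15$ in the hypothesis on $R$. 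If the constants still do not close, I would recheck whether the factor $\|P^{(t)}\|_2\le\sqrt{12/11}$ from Lemma~\ref{lem:whitening} can be used in place of $\sqrt2$; that gives ample room.
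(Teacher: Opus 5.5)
Your proposal is correct and follows the paper's proof essentially step for step: you verify Lemma~\ref{lem:tangent-recurrence} for $j\le t-1$, derive $\Phi_{r,j+1}\le \frac{\Phi_{r+1,j}}{(1-p_R)\sigma_k^2}+\frac{3\Delta R q_{\mathrm{row}}\sigma_{k+1}^{2r}}{2(1-p_R)\sigma_k^2}$, unroll along the diagonal down to $\Phi_{t,0}$, and close with $\ginit$ and Lemma~\ref{lem:row-length-bound}. Your normalization by $\sigma_{k+1}^{2r}$ is only cosmetic (if $\sigma_{k+1}=0$ then $S^{(t)}=0$ and there is nothing to prove), and the constants you worried about do close: with $p_R\le\frac{1}{16}$ your noise term is at most $\frac{2}{5}R$, so the total is $\sqrt{2}\left(\frac{1}{15}+\frac{1}{5}+\frac{2}{5}\right)R=\frac{2\sqrt{2}}{3}R<R$, whereas the paper sums the geometric series exactly, $\sum_j \rho^j\le\frac{1-p_R}{\kappa_k-p_R}$, to get the sharper bound $\frac{8\Delta R q_{\mathrm{row}}}{5\Delta_k}\le\frac{R}{5}$.
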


\begin{proof}
First, let us verify that the conditions of Lemma \ref{lem:tangent-recurrence}
are satisfied for all iterations $j\leq t-1$. By assumption, clipping
is inactive in each of those iterations, and thus it only remains
to check that we have $\left\Vert F^{(j)}\right\Vert _{2}<1$ for
each $j\leq t-1$. We have
\begin{align*}
\left\Vert F^{(j)}\right\Vert _{2} & =\left\Vert \Delta RV^{\top}_{k}\widetilde{G}^{(j)}\left(K^{(j)}\right)^{-1}\right\Vert _{2}\\
 & \leq\Delta R\left\Vert V^{\top}_{k}\widetilde{G}^{(j)}\right\Vert _{2}\left\Vert \left(K^{(j)}\right)^{-1}\right\Vert _{2}\\
 & \leq\Delta Rq_{k}\frac{3}{2}\frac{1}{\sigma_{k}}=p_{R}.
\end{align*}
In the last inequality, we used that $\gclip$ holds (and thus $\left\Vert V^{\top}_{k}\widetilde{G}^{(j)}\right\Vert _{2}\leq q_{k}$)
and Lemma \ref{lem:K-inverse} (note that the conditions of the lemma
are satisfied by assumption). By assumption, we have $p_{R}\leq\frac{\kappa_{k}}{16}\leq\frac{1}{16}$,
and thus $\left\Vert F^{(j)}\right\Vert _{2}<1$ as needed.

Thus, by Lemma \ref{lem:tangent-recurrence}, for all iterations $j\leq t-1$,
we have that $C^{(j+1)}$ is invertible and
\[
E^{(j+1)}=\left(\Sigma^{2}_{>k}E^{(j)}\Sigma^{-2}_{k}+L^{(j)}\right)\left(I+F^{(j)}\right)^{-1}.
\]
Recall that $S^{(j)}:=\Sigma_{>k}E^{(j)}\Sigma^{-1}_{k}$ for all
$j$. Multiplying the above recurrence with $\Sigma_{>k}$ on the
left and $\Sigma^{-1}_{k}$ on the right, we obtain
\[
S^{(j+1)}=\left(\Sigma^{2}_{>k}S^{(j)}\Sigma^{-1}_{k}+\Sigma_{>k}L^{(j)}\right)\left(I+F^{(j)}\right)^{-1}\Sigma^{-1}_{k}.
\]
Using the above recurrence, we can obtain a recurrence for the potential
$\Phi_{r,t}$. Recall that $L^{(j)}=\Delta RV^{\top}_{>k}\widetilde{G}^{(j)}\left(K^{(j)}\right)^{-1}$.
For every $r\geq0$ and every $j\leq t-1$, we have
\begin{align*}
\Phi_{r,j+1} & =\max_{i\in\left[n\right]}\left\Vert e^{\top}_{i}U_{>k}\Sigma^{2r}_{>k}S^{(j+1)}\right\Vert _{2}\\
 & =\max_{i\in\left[n\right]}\left\Vert \left(e^{\top}_{i}U_{>k}\Sigma^{2\left(r+1\right)}_{>k}S^{(j)}\Sigma^{-1}_{k}+e^{\top}_{i}U_{>k}\Sigma^{2r}_{>k}\Sigma_{>k}L^{(j)}\right)\left(I+F^{(j)}\right)^{-1}\Sigma^{-1}_{k}\right\Vert _{2}\\
 & \leq\max_{i\in\left[n\right]}\left(\left\Vert e^{\top}_{i}U_{>k}\Sigma^{2\left(r+1\right)}_{>k}S^{(j)}\right\Vert _{2}\left\Vert \Sigma^{-1}_{k}\right\Vert _{2}+\Delta R\left\Vert e^{\top}_{i}U_{>k}\Sigma^{2r+1}_{>k}V^{\top}_{>k}\widetilde{G}^{(j)}\right\Vert _{2}\left\Vert \left(K^{(j)}\right)^{-1}\right\Vert _{2}\right)\\
 & \qquad\qquad\cdot\left\Vert \left(I+F^{(j)}\right)^{-1}\right\Vert _{2}\left\Vert \Sigma^{-1}_{k}\right\Vert _{2}\\
 & \leq\left(\Phi_{r+1,j}\frac{1}{\sigma_{k}}+\Delta R\sigma^{2r}_{k+1}q_{\mathrm{row}}\frac{3}{2}\frac{1}{\sigma_{k}}\right)\frac{1}{1-p_{R}}\frac{1}{\sigma_{k}}\\
 & =\frac{\Phi_{r+1,j}}{\left(1-p_{R}\right)\sigma^{2}_{k}}+\frac{3\Delta Rq_{\mathrm{row}}\sigma^{2r}_{k+1}}{2\left(1-p_{R}\right)\sigma^{2}_{k}}
\end{align*}
In the last inequality, we used the following inequalities: $\left\Vert e^{\top}_{i}U_{>k}\Sigma^{2r+1}_{>k}V^{\top}_{>k}\widetilde{G}^{(j)}\right\Vert _{2}\leq\sigma^{2r}_{k+1}q_{\mathrm{row}}$
since $\gclip$ holds, $\left\Vert \left(K^{(j)}\right)^{-1}\right\Vert _{2}\leq\frac{3}{2}\frac{1}{\sigma_{k}}$
by Lemma \ref{lem:K-inverse}, and $\left\Vert \left(I+F^{(j)}\right)^{-1}\right\Vert _{2}\leq\frac{1}{1-p_{R}}$
which follows from Lemma \ref{lem:neumann-series} since $\left\Vert F^{(j)}\right\Vert _{2}\leq p_{R}<1$.

Unrolling the recurrence, we obtain
\begin{align*}
\Phi_{0,t} & \leq\frac{\Phi_{t,0}}{\sigma^{2t}_{k}\left(1-p_{R}\right)^{t}}+\frac{3\Delta Rq_{\mathrm{row}}}{2\sigma^{2}_{k}\left(1-p_{R}\right)}\sum^{t-1}_{j=0}\left(\frac{1-\kappa_{k}}{1-p_{R}}\right)^{j}\\
 & =\frac{\Phi_{t,0}}{\sigma^{2t}_{k}\left(1-p_{R}\right)^{t}}+\frac{3\Delta Rq_{\mathrm{row}}}{2\sigma^{2}_{k}\left(\kappa_{k}-p_{R}\right)}\left(1-\left(\frac{1-\kappa_{k}}{1-p_{R}}\right)^{t}\right)
\end{align*}
Since $\ginit$ holds, we have $\Phi_{t,0}\leq\omega_{\mathrm{init}}\sigma^{2t}_{k+1}$.
Using this bound together with the assumptions $\Delta q_{\mathrm{row}}\leq\frac{1}{8}\Delta_{k}$
and $p_{R}\leq\frac{\kappa_{k}}{16}$, we obtain
\begin{align*}
\Phi_{0,t} & \leq\frac{\omega_{\mathrm{init}}\sigma^{2t}_{k+1}}{\sigma^{2t}_{k}\left(1-p_{R}\right)^{t}}+\frac{3\Delta Rq_{\mathrm{row}}}{2\sigma^{2}_{k}\left(\kappa_{k}-p_{R}\right)}\left(1-\left(\frac{1-\kappa_{k}}{1-p_{R}}\right)^{t}\right)\\
 & =\omega_{\mathrm{init}}\left(\frac{1-\kappa_{k}}{1-p_{R}}\right)^{t}+\frac{3\Delta Rq_{\mathrm{row}}}{2\sigma^{2}_{k}\left(\kappa_{k}-p_{R}\right)}\left(1-\left(\frac{1-\kappa_{k}}{1-p_{R}}\right)^{t}\right)\\
 & \leq\omega_{\mathrm{init}}+\frac{3\Delta Rq_{\mathrm{row}}}{2\sigma^{2}_{k}\frac{15}{16}\kappa_{k}}=\omega_{\mathrm{init}}+\frac{8\Delta Rq_{\mathrm{row}}}{5\Delta_{k}}\leq\omega_{\mathrm{init}}+\frac{1}{5}R
\end{align*}
Combining with Lemma \ref{lem:row-length-bound} and using that $R\geq\max\left\{ 15\sqrt{k}\left\Vert U\right\Vert _{\infty},5\omega_{\mathrm{init}}\right\} $,
we obtain
\begin{align*}
\max_{i\in\left[n\right]}\left\Vert a^{\top}_{i}X^{(t)}B^{(t)}\right\Vert _{2} & \leq\sqrt{2}\left(\sqrt{k}\left\Vert U\right\Vert _{\infty}+\Phi_{0,t}\right)\\
 & \leq\sqrt{2}\left(\sqrt{k}\left\Vert U\right\Vert _{\infty}+\omega_{\mathrm{init}}+\frac{1}{5}R\right)\\
 & \leq\sqrt{2}\left(\frac{1}{15}R+\frac{1}{5}R+\frac{1}{5}R\right)\leq R
\end{align*}
and thus clipping is inactive in iteration $t$ as well.
\end{proof}

\paragraph{Convergence analysis.}

Next, using Lemmas \ref{lem:tangent-recurrence} and \ref{lem:clipping-inactive},
we analyze $s_{t}$ and $m_{t}$ for all iterations $t\geq0$. We
proceed by induction and show that the clipping is inactive and thus
we can apply the main recurrence shown in Lemma \ref{lem:tangent-recurrence}.
Recall that the recurrence states that $E^{(t+1)}=\left(\Sigma^{2}_{>k}E^{(t)}\Sigma^{-2}_{k}+L^{(t)}\right)\left(I+F^{(t)}\right)^{-1}$.
The term $\Sigma^{2}_{>k}E^{(t)}\Sigma^{-2}_{k}$ captures the main
progress, and we have $\left\Vert \Sigma^{2}_{>k}E^{(t)}\Sigma^{-2}_{k}\right\Vert _{2}\leq\frac{\sigma^{2}_{k+1}}{\sigma^{2}_{k}}\left\Vert E^{(t)}\right\Vert _{2}=\left(1-\kappa_{k}\right)m_{t}$.
Thus in an ideal case where $L^{(t)}$ and $F^{(t)}$ are zero, $m_{t}$
decreases geometrically in every iteration. In our setting, $L^{(t)}$
and $F^{(t)}$ are non-zero and they depend on the Gaussian noise
matrix $\widetilde{G}^{(t)}$, and our analysis will upper bound their
contribution. 

We start by showing that the initial solution satisfies $s_{0}\le\frac{1}{4}$
and thus $m_{0}\leq\frac{1}{\sqrt{15}}$, which will provide the base
case for the induction.
\begin{lem}
\label{lem:initial-error} Suppose that $\Delta_{k}\geq8\Delta_{\mathrm{init}}\left(2\sqrt{d}+\sqrt{2\log\frac{4}{\beta}}\right)$.
With probability at least $1-\frac{\beta}{4}$, $s_{0}\leq\frac{1}{4}$.
\end{lem}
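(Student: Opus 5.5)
The argument combines a Gaussian norm bound on the initialization noise with Lemma \ref{lem:davis-kahan-error}. Let $W:=\Delta_{\mathrm{init}}\sym\left(G^{(\mathrm{init})}\right)$, so that $\widehat{M}=A^{\top}A+W$. By construction, $X^{(0)}$ spans the top-$k$ eigenspace of $\widehat{M}$, and the error $s_{0}$ is exactly the quantity controlled by Lemma \ref{lem:davis-kahan-error}. We therefore only need to show that $\left\Vert W\right\Vert _{2}$ is a small fraction of $\Delta_{k}$ with probability at least $1-\beta/4$.

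First, bound the noise. Since $\left\Vert \sym(G)\right\Vert _{2}\leq\frac{1}{2}\left(\left\Vert G\right\Vert _{2}+\left\Vert G^{\top}\right\Vert _{2}\right)=\left\Vert G\right\Vert _{2}$, apply Lemma \ref{lem:gaussian-norm} to the $d\times d$ Gaussian matrix $G^{(\mathrm{init})}$ with $x=\sqrt{2\log(4/\beta)}$. This gives
\[
\left\Vert G^{(\mathrm{init})}\right\Vert _{2}\leq2\sqrt{d}+\sqrt{2\log\frac{4}{\beta}}
\]
with failure probability at most $e^{-x^{2}/2}=\beta/4$. On this event, the hypothesis on $\Delta_{k}$ gives
\[
\left\Vert W\right\Vert _{2}\leq\Delta_{\mathrm{init}}\left(2\sqrt{d}+\sqrt{2\log\frac{4}{\beta}}\right)\leq\frac{\Delta_{k}}{8}.
\]

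Second, apply Lemma \ref{lem:davis-kahan-error}. Its hypothesis $\left\Vert W\right\Vert _{2}<\Delta_{k}$ holds because $\Delta_{k}>0$, and it yields
\[
s_{0}=\left\Vert \left(I-X^{(0)}X^{(0)\top}\right)V_{k}\right\Vert _{2}\leq\frac{\left\Vert W\right\Vert _{2}}{\Delta_{k}-\left\Vert W\right\Vert _{2}}\leq\frac{\Delta_{k}/8}{7\Delta_{k}/8}=\frac{1}{7}\leq\frac{1}{4}.
\]
Here we use that $t\mapsto t/(\Delta_{k}-t)$ is increasing on $[0,\Delta_{k})$.

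No step here is a real obstacle. The only points needing care are two identifications. First, $X^{(0)}$ is an orthonormal basis for the top-$k$ eigenspace $\widehat{V}_{k}$ of $\widehat{M}$, so $X^{(0)}X^{(0)\top}=\widehat{V}_{k}\widehat{V}_{k}^{\top}$. Second, the eigenvalues of $A^{\top}A$ are $\sigma_{j}^{2}$, so the gap in Lemma \ref{lem:davis-kahan-error} is exactly $\Delta_{k}$.
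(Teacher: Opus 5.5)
Your proof is correct and matches the paper's argument: you bound $\|W\|_2\le\Delta_k/8$ using Lemma \ref{lem:gaussian-norm} with $x=\sqrt{2\log(4/\beta)}$, then invoke Lemma \ref{lem:davis-kahan-error}. Two small differences improve on the paper: you state the step $\|\sym(G)\|_2\le\|G\|_2$ explicitly, which the paper leaves implicit, and you obtain the sharper constant $1/7$ in place of the paper's $2\|W\|_2/\Delta_k\le 1/4$.
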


\begin{proof}
Let $W=\Delta_{\mathrm{init}}\sym\left(G^{(\mathrm{init})}\right)$,
and thus the columns of the initial solution $X^{(0)}$ are the top-$k$
eigenvectors $X^{(0)}$ of $\widehat{M}=A^{\top}A+W$. By Lemma \ref{lem:gaussian-norm},
we have
\[
\Pr\left[\left\Vert \sym\left(G^{(\mathrm{init})}\right)\right\Vert _{2}>2\sqrt{d}+\sqrt{2\log\frac{4}{\beta}}\right]\leq\exp\left(-\frac{1}{2}\left(\sqrt{2\log\frac{4}{\beta}}\right)^{2}\right)=\frac{\beta}{4}.
\]
In the following, we condition on the event $\left\Vert \sym\left(G^{(\mathrm{init})}\right)\right\Vert _{2}\leq2\sqrt{d}+\sqrt{2\log\frac{4}{\beta}}$.
We have
\[
\left\Vert W\right\Vert _{2}=\Delta_{\mathrm{init}}\left\Vert \sym\left(G^{(\mathrm{init})}\right)\right\Vert _{2}\leq\Delta_{\mathrm{init}}\left(2\sqrt{d}+\sqrt{2\log\frac{4}{\beta}}\right)\le\frac{1}{8}\Delta_{k}.
\]
Thus Lemma \ref{lem:davis-kahan-error} gives
\[
s_{0}=\left\Vert \left(I-X^{(0)}X^{(0)\top}\right)V_{k}\right\Vert _{2}\le\frac{\left\Vert W\right\Vert _{2}}{\Delta_{k}-\left\Vert W\right\Vert _{2}}\leq\frac{2\left\Vert W\right\Vert _{2}}{\Delta_{k}}\leq\frac{1}{4}.
\]
\end{proof}

Before giving our inductive analysis, we first show the following
lemma which will be used to upper bound one of the error terms in
the main recurrence. The lemma follows from a martingale argument.
We first introduce some notation. We define a filtration $\left(\calF_{t}\right)_{0\leq t\leq T}$
as follows. For $t<T$, $\calF_{t}$ includes $G^{(\mathrm{init})}$,
$G^{(j)}$ and $\widetilde{G}^{(j)}$ for all iterations $0\leq j<t$
prior to $t$, and $G^{(t)}$; $\widetilde{G}^{(t)}$ is not included
in $\calF_{t}$ and is independent of $\calF_{t}$. We let $\calF_{T}$
include the full transcript.

Let $\tau$ be the following stopping time:
\begin{equation}
\tau:=\inf\left\{ 0\leq t<T\colon C^{(t)}\text{ is not invertible},s_{t}>\frac{1}{2},\text{ or }\left\Vert \Delta\sym\left(G^{(t)}\right)\right\Vert _{2}>\frac{\gamma}{4}\right\} \label{eq:martingale-stopping-time}
\end{equation}
with the convention that $\inf\emptyset=T$. 

We define
\[
W^{(t)}:=\begin{cases}
\left(K^{(t)}\right)^{-1} & t<\tau\\
0 & t\geq\tau
\end{cases}\qquad\widehat{L}^{(t)}:=\Delta RV^{\top}_{>k}\widetilde{G}^{(t)}W^{(t)}.
\]
Note that $\widehat{L}^{(t)}=L^{(t)}$ for $t<\tau$ and $\widehat{L}^{(t)}=0$
for $t\geq\tau$.

Finally, for a matrix $M$, we let
\[
\calT\left(M\right):=\Sigma^{2}_{>k}M\Sigma^{-2}_{k}.
\]

\begin{lem}
\label{lem:martingale} Suppose that $\gamma\leq\frac{\sigma^{2}_{k}}{4}$.
Let
\[
\nu_{R}:=\frac{3\Delta R}{\sigma_{k}}\sqrt{\frac{2d\log9+2\log\left(8T/\beta\right)}{\kappa_{k}\left(2-\kappa_{k}\right)}}.
\]
 With probability at least $1-\frac{\beta}{4}$, simultaneously for
all $1\leq t\leq T$, we have
\begin{equation}
\left\Vert \sum^{t-1}_{j=0}\mathcal{T}^{t-1-j}\left(\widehat{L}^{(j)}\right)\right\Vert _{2}\leq\nu_{R}.\label{eq:martingale-guarantee}
\end{equation}
Let $\mathcal{G}_{\mathrm{mart}}$ be this event. 
\end{lem}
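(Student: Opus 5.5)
Fix $t$ and write $M_t:=\sum_{j=0}^{t-1}\calT^{t-1-j}(\widehat L^{(j)})=\Delta R\sum_{j=0}^{t-1}\Sigma^{2(t-1-j)}_{>k}V^{\top}_{>k}\widetilde G^{(j)}W^{(j)}\Sigma^{-2(t-1-j)}_{k}$. The plan is to reduce the spectral norm to scalar quantities via nets and then apply a Gaussian martingale (sub-Gaussian MGF) argument.

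Take $1/4$-nets $\calN_1$ of the unit sphere in $\R^{d-k}$ and $\calN_2$ of the unit sphere in $\R^{k}$, of sizes at most $9^{d-k}$ and $9^{k}$. The standard argument gives $\|M\|_2\le 2\max_{x\in\calN_1,y\in\calN_2}x^{\top}My$; this is the source of the factor $9^d$ inside the $\log$. For fixed unit $x,y$ consider the scalar process
\[
\xi_j:=\Delta R\,x^{\top}\Sigma^{2(t-1-j)}_{>k}V^{\top}_{>k}\widetilde G^{(j)}W^{(j)}\Sigma^{-2(t-1-j)}_{k}y.
\]

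The key point is measurability. $W^{(j)}$ is $\calF_j$-measurable: $K^{(j)}$ depends on $X^{(j)}$ and $G^{(j)}$, and the event $\{j<\tau\}$ is determined by $\calF_j$. The matrix $\widetilde G^{(j)}$ is independent of $\calF_j$, and by rotational invariance $V^{\top}_{>k}\widetilde G^{(j)}$ is a standard Gaussian $(d-k)\times k$ matrix. Hence, conditionally on $\calF_j$, $\xi_j$ is centered Gaussian with variance $\Delta^2R^2\|\Sigma^{2(t-1-j)}_{>k}x\|_2^2\,\|W^{(j)}\Sigma^{-2(t-1-j)}_ky\|_2^2$. To bound this, observe that on $\{j<\tau\}$ the hypotheses of Lemma \ref{lem:K-inverse} hold: $s_j\le\frac12$, $\|\Delta\sym(G^{(j)})\|_2\le\gamma/4$, and $\gamma\le\sigma_k^2/4$. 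So $\|W^{(j)}\|_2\le\frac{3}{2\sigma_k}$, while $W^{(j)}=0$ otherwise. The conditional variance is therefore at most
\[
\frac{9\Delta^2R^2}{4\sigma_k^2}\Big(\frac{\sigma_{k+1}^2}{\sigma_k^2}\Big)^{2(t-1-j)}=\frac{9\Delta^2R^2}{4\sigma_k^2}(1-\kappa_k)^{2(t-1-j)},
\]
a deterministic bound. It follows that $\E[\exp(\lambda\sum_j\xi_j)]\le\exp\big(\tfrac{\lambda^2}{2}v\big)$ with
\[
v=\frac{9\Delta^2R^2}{4\sigma_k^2}\sum_{m\ge0}(1-\kappa_k)^{2m}=\frac{9\Delta^2R^2}{4\sigma_k^2\kappa_k(2-\kappa_k)}.
\]
This follows by iterated conditioning, peeling off $j=t-1$ down to $0$. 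Chernoff then gives $\Pr[\sum_j\xi_j>\sqrt{2v\log(1/p)}]\le p$.

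Choose $p=\beta/(4T\cdot 9^{d})$ and union bound over the net pairs ($\le 9^d$) and over $t\le T$. Then with probability at least $1-\beta/4$, simultaneously for all $t$,
\[
\|M_t\|_2\le 2\sqrt{2v\,(d\log9+\log(4T/\beta))},
\]
which is at most $\frac{3\Delta R}{\sigma_k}\sqrt{\frac{2d\log 9+2\log(8T/\beta)}{\kappa_k(2-\kappa_k)}}=\nu_R$ after adjusting constants; a slightly finer net choice may be needed to match them exactly.

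The main obstacle is the measurability and stopping bookkeeping. One has to verify carefully that $W^{(j)}$ depends only on $\calF_j$. Because $s_j$ and $C^{(j)}$ depend on $\widetilde G^{(j-1)}$, which is in $\calF_j$, $\{j<\tau\}\in\calF_j$ holds, and that is what makes the truncated variance bound deterministic. This is exactly why $\tau$ and $\widehat L$ were introduced.
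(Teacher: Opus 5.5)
Your proposal is correct and follows the paper's proof essentially step for step: the same conditional-Gaussian martingale terms $\xi_j$ given $\mathcal{F}_j$, the truncated variance bound $\left(\frac{3\Delta R}{2\sigma_k}\right)^2(1-\kappa_k)^{2(t-1-j)}$ via Lemma~\ref{lem:K-inverse} on $\{j<\tau\}$, sub-Gaussian martingale concentration, and a $\frac14$-net union bound over $9^dT$ triples. No constant adjustment or finer net is needed: your one-sided Chernoff bound yields $\log(4T/\beta)$ where $\nu_R$ has $\log(8T/\beta)$, so your bound is already at most $\nu_R$.
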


\begin{proof}
Fix an iteration $1\leq t\leq T$ and two unit vectors $u\in\R^{d-k}$
and $v\in\R^{k}$. For every $0\leq j<t$, let
\[
\xi_{j}:=u^{\top}\mathcal{T}^{t-1-j}\left(\widehat{L}^{(j)}\right)v=u^{\top}\Sigma^{2(t-1-j)}_{>k}\widehat{L}^{(j)}\Sigma^{-2(t-1-j)}_{k}v.
\]
Recall that $\widehat{L}^{(j)}:=\Delta RV^{\top}_{>k}\widetilde{G}^{(j)}W^{(j)}$
and $\calF_{j}$ includes all random matrices in the iterations prior
to $j$ and $G^{(j)}$, but not $\widetilde{G}^{(j)}$. Thus, conditioned
on $\calF_{j}$, $\xi_{j}$ is a Gaussian random variable with expectation
$0$ and variance
\begin{align*}
\E\left[\xi^{2}_{j}\vert\calF_{j}\right] & =\left(\Delta R\right)^{2}\left\Vert \Sigma^{2(t-1-j)}_{>k}u\right\Vert ^{2}_{2}\left\Vert W^{(j)}\Sigma^{-2(t-1-j)}_{k}v\right\Vert ^{2}_{2}\\
 & \leq\left(\Delta R\right)^{2}\left\Vert \Sigma^{2(t-1-j)}_{>k}\right\Vert ^{2}_{2}\left\Vert u\right\Vert ^{2}_{2}\left\Vert W^{(j)}\right\Vert ^{2}_{2}\left\Vert \Sigma^{-2(t-1-j)}_{k}\right\Vert ^{2}_{2}\left\Vert v\right\Vert ^{2}_{2}\\
 & =\left(\Delta R\right)^{2}\left(\frac{\sigma^{2}_{k+1}}{\sigma^{2}_{k}}\right)^{2(t-1-j)}\left\Vert W^{(j)}\right\Vert ^{2}_{2}\\
 & \leq\left(\Delta R\right)^{2}\left(\frac{\sigma^{2}_{k+1}}{\sigma^{2}_{k}}\right)^{2(t-1-j)}\left(\frac{3}{2\sigma_{k}}\right)^{2}\\
 & =\left(\frac{3\Delta R}{2\sigma_{k}}\right)^{2}\left(1-\kappa_{k}\right)^{2(t-1-j)}.
\end{align*}
The last inequality follows from Lemma \ref{lem:K-inverse}: if $j<\tau$,
we have $W^{(j)}=\left(K^{(j)}\right)^{-1}$ and the conditions of
Lemma \ref{lem:K-inverse} are satisfied; otherwise, $W^{(j)}=0$.

We can use the subgaussian version of Azuma's inequality to bound
$\sum_{j}\xi_{j}$ (see e.g. \citet[Lemma F.4]{DuchiBartlettWainwright2012}
or \citet[Theorem 1(b)]{HowardRamdasMcAuliffeSekhon2020} with $\psi(\lambda)=\lambda^{2}/2$).
For every $x\geq0$, we have
\begin{align*}
\Pr\left[\left|\sum^{t-1}_{j=0}\xi_{j}\right|\geq x\right] & \leq2\exp\left(-\frac{x^{2}}{2\sum^{t-1}_{j=0}\left(\frac{3\Delta R}{2\sigma_{k}}\right)^{2}\left(1-\kappa_{k}\right)^{2(t-1-j)}}\right)\\
 & \leq2\exp\left(-\frac{x^{2}\kappa_{k}\left(2-\kappa_{k}\right)}{2\left(\frac{3\Delta R}{2\sigma_{k}}\right)^{2}}\right).
\end{align*}
We can use the above guarantee to upper bound $\left\Vert \sum^{t-1}_{j=0}\mathcal{T}^{t-1-j}\left(\widehat{L}^{(j)}\right)\right\Vert _{2}$
using a net argument, similarly to the proof of Claim \ref{lem:clipping-inactive} and \ref{claim:norm-concentration-initial}.
Let $\calN_{1}$ and $\calN_{2}$ be $\frac{1}{4}$-nets of the two
unit spheres in $\R^{d-k}$ and $\R^{k}$ respectively, of sizes $\left|\calN_{1}\right|\leq9^{d-k}$
and $\left|\calN_{2}\right|\leq9^{k}$. For any $(d-k)\times k$ matrix
$M$, we have $\left\Vert M\right\Vert _{2}\le2\max_{u\in\calN_{1},v\in\calN_{2}}\left|u^{\top}Mv\right|$.
Indeed, let $u^{*},v^{*}$ be unit left and right top singular vectors
of $M$ and leet $u,v$ the closest vectors in the nets to $u^{*},v^{*}$.
We have
\[
\left\Vert M\right\Vert _{2}=u^{*\top}Mv^{*}=u^{*\top}M(v^{*}-v)+(u^{*\top}-u^{\top})Mv+u^{\top}Mv\le\frac{1}{4}\left\Vert M\right\Vert _{2}+\frac{1}{4}\left\Vert M\right\Vert _{2}+u^{\top}Mv.
\]
By rearranging, we obtain the claimed inequality.

There are $\left|\calN_{1}\right|\left|\calN_{2}\right|T=9^{d}T$
triples $(u,v,t)$ with $u\in\calN_{1}$, $v\in\calN_{2}$, and $1\leq t\leq T$.
Setting $x=\frac{3\Delta R}{2\sigma_{k}}\sqrt{\frac{2\log\left(\frac{8\cdot9^{d}T}{\beta}\right)}{\kappa_{k}\left(2-\kappa_{k}\right)}}$
in the guarantee above and taking an union bound over all $9^{d}T$
triples, we obtain the following guarantee. With probability at least
$1-\frac{\beta}{4},$ simultaneously for all $1\leq t\leq T$, 
\begin{align*}
\left\Vert \sum^{t-1}_{j=0}\mathcal{T}^{t-1-j}\left(\widehat{L}^{(j)}\right)\right\Vert _{2} & \leq2\max_{u\in\calN_{1},v\in\calN_{2}}\left|u^{\top}Mv\right|\\
 & \leq\frac{3\Delta R}{\sigma_{k}}\sqrt{\frac{2\log\left(\frac{8\cdot9^{d}T}{\beta}\right)}{\kappa_{k}\left(2-\kappa_{k}\right)}}=\nu_{R},
\end{align*}
as needed.
\end{proof}

We now give the main inductive argument. Recall that $\gclip$ is
the event defined in Definition \ref{def:gclip}, and $\ginit$ and
$\gmart$ are the events defined in Lemma \ref{lem:initial-potential}
and Lemma \ref{lem:martingale} respectively. Recall also that $p_{R}:=\frac{3\Delta Rq_{k}}{2\sigma_{k}}$
(see Lemma \ref{lem:clipping-inactive}) and $\nu_{R}:=\frac{3\Delta R}{\sigma_{k}}\sqrt{\frac{2d\log9+2\log\left(8T/\beta\right)}{\kappa_{k}\left(2-\kappa_{k}\right)}}$
(see Lemma \ref{lem:martingale}). 
\begin{thm}
\label{thm:convergence-martingale} Suppose that we have $\Delta_{k}\geq\max\left\{ 8\Delta_{\mathrm{init}}\left(2\sqrt{d}+\sqrt{2\log\frac{8}{\beta}}\right),8\Delta q_{\mathrm{row}}\right\} $,
$\gamma\leq\frac{\sigma^{2}_{k}}{4},$ $p_{R}\leq\frac{\kappa_{k}}{16}$,
$R\geq15\left\Vert U\right\Vert _{\infty}\left(\sqrt{k}+\sqrt{\log\left(\frac{16n\left(T+1\right)}{\beta}\right)}\right)$,
and $\nu_{R}\leq\frac{1}{8}$. Suppose that $\mathcal{G}_{\mathrm{clip}}$,
$\ginit$, $\mathcal{G}_{\mathrm{mart}}$, and the event in Lemma
\ref{lem:initial-potential} and \ref{lem:initial-error} hold. Then,
in every iteration, $C^{(t)}$ is invertible, clipping is inactive,
and
\begin{equation}
s_{t}\leq m_{t}\leq\left(\frac{1-\kappa_{k}}{1-p_{R}}\right)^{t}m_{0}+2\nu_{R}\leq\left(1-\frac{\kappa_{k}}{2}\right)^{t}m_{0}+2\nu_{R}.\label{eq:convergence}
\end{equation}
\end{thm}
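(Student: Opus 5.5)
We argue by strong induction on $t$, with the inductive hypothesis: for all $j\le t$, $C^{(j)}$ is invertible, $s_j\le\frac12$, clipping was inactive in iterations $j<t$, and the bound (\ref{eq:convergence}) holds at $j$. For the base case $t=0$, Lemma \ref{lem:initial-error} gives $s_0\le\frac14$, so by Lemma \ref{lem:s-to-C-invertible} $C^{(0)}$ is invertible, and by Lemma \ref{lem:s-to-m} $m_0=s_0/\sqrt{1-s_0^2}\le 1/\sqrt{15}$; then (\ref{eq:convergence}) at $t=0$ is trivial. The last inequality in (\ref{eq:convergence}) is just $\frac{1-\kappa_k}{1-p_R}\le 1-\frac{\kappa_k}{2}$, which follows from $p_R\le\kappa_k/16$.

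For the inductive step, suppose the hypothesis holds up to $t$. Since $\gclip$ holds we have $\|\Delta\sym(G^{(j)})\|_2\le\gamma/4$. Together with $s_j\le\frac12$ and $\gamma\le\sigma_k^2/4$, this lets us apply Lemma \ref{lem:clipping-inactive}, which shows that clipping is inactive in iteration $t$. As in that proof, $\|F^{(j)}\|_2\le p_R<1$ for all $j\le t$. Lemma \ref{lem:tangent-recurrence} then gives invertibility of $C^{(t+1)}$ and the recurrence. Note also that the stopping time $\tau$ of (\ref{eq:martingale-stopping-time}) satisfies $\tau>t$, so $\widehat L^{(j)}=L^{(j)}$ for $j\le t$. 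The martingale bound of $\gmart$ therefore applies to the true $L^{(j)}$.

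Next we unroll. With $E^{(j+1)}=\calT(E^{(j)})+L^{(j)}+R^{(j)}$, where $R^{(j)}=-(\calT(E^{(j)})+L^{(j)})F^{(j)}(I+F^{(j)})^{-1}$, we get
\[
E^{(t+1)}=\calT^{t+1}(E^{(0)})+\sum_{j=0}^{t}\calT^{t-j}(L^{(j)})+\sum_{j=0}^{t}\calT^{t-j}(R^{(j)}).
\]
The first term has norm at most $(1-\kappa_k)^{t+1}m_0$, and the middle term has norm at most $\nu_R$ by $\gmart$. For the remainder we use $\|\calT^r(M)\|_2\le(1-\kappa_k)^r\|M\|_2$ and $\|R^{(j)}\|_2\le\frac{p_R}{1-p_R}\|E^{(j+1)}\|_2$, since $R^{(j)}=-E^{(j+1)}F^{(j)}$. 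Here $\|E^{(j+1)}\|_2$ is known by induction for $j<t$, while for $j=t$ the relation appears on both sides and can be moved over.

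\textbf{Main obstacle.} This step is delicate: a crude term-by-term bound gives $m_{t+1}\le(1-\kappa_k)^{t+1}m_0+\nu_R+\sum_j(1-\kappa_k)^{t-j}\frac{p_R}{1-p_R}m_{j+1}$, and we must show that this closes with the target form. I would first try a cleaner route. Write $m_{t+1}\le\big((1-\kappa_k)m_t+\|L^{(t)}\|\big)/(1-p_R)$, but this loses the martingale cancellation. So instead I would define $N^{(t)}:=\sum_{j<t}\calT^{t-1-j}(L^{(j)})$ and track $\widehat E^{(t)}:=E^{(t)}-N^{(t)}$. This quantity satisfies
\[
\widehat E^{(t+1)}=\calT(\widehat E^{(t)})+R^{(t)},\qquad \|R^{(t)}\|\le \frac{p_R}{1-p_R}\|\calT(E^{(t)})+L^{(t)}\|.
\]
Then $\|\widehat E^{(t+1)}\|\le(1-\kappa_k)\|\widehat E^{(t)}\|+p_R(1+O(p_R))\big(\|\widehat E^{(t)}\|+\nu_R\big)+$ a contribution of $L^{(t)}$, which is absorbed using $\|N^{(t+1)}\|\le\nu_R$. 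Solving this linear recursion with contraction factor $\frac{1-\kappa_k}{1-p_R}$ gives $\|\widehat E^{(t)}\|\le\big(\frac{1-\kappa_k}{1-p_R}\big)^t m_0+\frac{p_R}{\kappa_k-p_R}O(\nu_R)$. Since $p_R\le\kappa_k/16$, the second term is at most $\nu_R$, so $m_{t+1}\le\big(\frac{1-\kappa_k}{1-p_R}\big)^{t+1}m_0+2\nu_R$. If the $\|L^{(t)}\|$ term does not absorb cleanly, I would fall back on bounding it by $\|N^{(t+1)}\|+\|\calT(N^{(t)})\|\le 2\nu_R$ and adjust constants.

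Finally, $m_{t+1}\le m_0+2\nu_R\le 1/\sqrt{15}+1/4<1/\sqrt3$, so by Lemma \ref{lem:s-to-m} $s_{t+1}\le m_{t+1}/\sqrt{1+m_{t+1}^2}\le\frac12$. This completes the induction; $s_t\le m_t$ is immediate from Lemma \ref{lem:s-to-m}.
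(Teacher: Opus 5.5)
Your proof is correct, and it handles the remainder terms $R^{(j)}$ differently from the paper.

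\emph{The paper's route.} The paper keeps everything in terms of $m_j$. It bounds $\|R^{(j)}\|_2\le\frac{p_R}{1-p_R}\big((1-\kappa_k)m_j+b_R\big)$, using the crude per-iteration estimate $\|L^{(j)}\|_2\le b_R=\frac{3\Delta R q_{d-k}}{2\sigma_k}$ that comes from the $q_{d-k}$ part of $\gclip$. This gives a Volterra-type inequality
\[
m_\ell\le(1-\kappa_k)^\ell m_0+\eta_R+h_R\sum_{j<\ell}(1-\kappa_k)^{\ell-1-j}m_j .
\]
The paper solves it by a separate induction, using $1-\kappa_k+h_R=\frac{1-\kappa_k}{1-p_R}$. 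It then needs the comparison $b_R\le\nu_R$ to absorb $f_R/\kappa_k$, and ends with the constant $(16/15)^2$.

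\emph{Your route.} You subtract the accumulated linear noise, setting $\widehat E^{(t)}=E^{(t)}-N^{(t)}$. The exact identity $\mathcal{T}(N^{(t)})+L^{(t)}=N^{(t+1)}$ gives $\mathcal{T}(E^{(t)})+L^{(t)}=\mathcal{T}(\widehat E^{(t)})+N^{(t+1)}$. Hence
\[
\|\widehat E^{(t+1)}\|_2\le\frac{1-\kappa_k}{1-p_R}\,\|\widehat E^{(t)}\|_2+\frac{p_R}{1-p_R}\,\nu_R .
\]
This first-order recursion is solved by a geometric sum, giving $m_t\le\rho^t m_0+\big(1+\frac{p_R}{\kappa_k-p_R}\big)\nu_R\le\rho^t m_0+\frac{16}{15}\nu_R$, where $\rho=\frac{1-\kappa_k}{1-p_R}$.

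\emph{What each buys.} You use $\gmart$ at every index $\ell\le t+1$, which is legitimate because it holds simultaneously in $t$ and $\tau>t$. In exchange, you never need the per-step bound $q_{d-k}$, the comparison $b_R\le\nu_R$, or the Volterra solution. The paper's route avoids introducing an auxiliary sequence, but pays for it with the two-stage induction and a worse constant.

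Two tightening remarks.

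\emph{The displayed one-step bound.} Keep the factor $(1-\kappa_k)$ on $\|\widehat E^{(t)}\|_2$ inside the bound for $R^{(t)}$. There is also no leftover ``contribution of $L^{(t)}$'', because it is absorbed exactly into $N^{(t+1)}$. This is what yields the contraction factor exactly $\frac{1-\kappa_k}{1-p_R}$. As you wrote it, with $\|\widehat E^{(t)}\|_2$ alone, you would get the factor $1-\kappa_k+\frac{p_R}{1-p_R}$. That still gives the final $(1-\kappa_k/2)^t$ bound, but not the middle bound in (\ref{eq:convergence}).

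\emph{Your first idea.} The route via $R^{(j)}=-E^{(j+1)}F^{(j)}$ also closes if you use the sharper bound $\|R^{(j)}\|_2\le p_R\,m_{j+1}$ instead of $\frac{p_R}{1-p_R}m_{j+1}$. With $r=1/(1-p_R)$ one has $1+p_R\sum_{i=1}^{t}r^i=r^t=(1-p_R)r^{t+1}$. So after moving the $j=t$ term to the left, the implicit recursion reproduces exactly the rate $\rho$, with $\nu_R$-coefficient at most $1.2$. With the looser constant $\frac{p_R}{1-p_R}$ it closes only at a slightly worse rate.
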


\begin{proof}
In the following, we will use the definition of the stopping time
$\tau$ (eqn. (\ref{eq:martingale-stopping-time})) and the identity
$s_{t}=\frac{m_{t}}{\sqrt{1+m^{2}_{t}}}$ (Lemma \ref{lem:s-to-m}).
We will prove by induction on $t\geq0$ that $C^{(j)}$ is invertible
and $m_{j}\leq\frac{1}{\sqrt{3}}$ (and thus $s_{j}\leq\frac{1}{2}$)
for all $j\leq t$, clipping is inactive in all iterations $j<t$,
$\tau>t$ (if $t=T$, $\tau=T$), and (\ref{eq:convergence}) holds
for all iterations $j\leq t$. Consider the base case $t=0$. Since
the event in Lemma \ref{lem:initial-potential} and \ref{lem:initial-error}
holds, we have $s_{0}\leq\frac{1}{4}$ and thus $m_{0}\le\frac{1}{\sqrt{15}}$.
Since $s_{0}<1$, Lemma \ref{lem:s-to-C-invertible} implies that
$C^{(0)}$ is invertible. Additionally, $\tau>0$ and (\ref{eq:convergence})
holds for $t=0$.

Next, we consider the inductive step. Consider $t\geq0$ and suppose
that $C^{(j)}$ is invertible and $m_{j}\leq\frac{1}{\sqrt{3}}$ (and
thus $s_{j}\leq\frac{1}{2}$) for all $j\leq t$, and clipping is
inactive in all iterations $j<t$, and $\tau\geq t$. These assumptions
and the assumptions in the theorem statement imply that Lemma \ref{lem:clipping-inactive}
applies and thus clipping is inactive in iteration $t$ as well.

For each $j\leq t$, we can show that $F^{(j)}:=\Delta RV^{\top}_{k}\widetilde{G}^{(j)}\left(K^{(j)}\right)^{-1}$
satisfies $\left\Vert F^{(j)}\right\Vert _{2}\leq p_{R}$ as follows.
We have
\begin{align*}
\left\Vert F^{(j)}\right\Vert _{2} & =\left\Vert \Delta RV^{\top}_{k}\widetilde{G}^{(j)}\left(K^{(j)}\right)^{-1}\right\Vert _{2}\\
 & \leq\Delta R\left\Vert V^{\top}_{k}\widetilde{G}^{(j)}\right\Vert _{2}\left\Vert \left(K^{(j)}\right)^{-1}\right\Vert _{2}\\
 & \leq\Delta Rq_{k}\frac{3}{2}\frac{1}{\sigma_{k}}=p_{R}.
\end{align*}
In the last inequality, we used that $\gclip$ holds (and thus $\left\Vert V^{\top}_{k}\widetilde{G}^{(j)}\right\Vert _{2}\leq q_{k}$)
and Lemma \ref{lem:K-inverse}. Note that the conditions of Lemma
\ref{lem:K-inverse} hold: we have $\left\Vert \Delta\sym\left(G^{(j)}\right)\right\Vert _{2}\leq\frac{\gamma}{4}$
since $\gclip$ holds, $\gamma\leq\frac{\sigma^{2}_{k}}{4}$, and
$s_{j}\leq\frac{1}{2}$ for all $j\leq t$.

For all $0\leq j\leq t$, we have $\left\Vert F^{(j)}\right\Vert _{2}\leq p_{R}\leq\frac{\kappa_{k}}{16}<1$,
$C^{(j)}$ is invertible, and clipping is inactive in iteration $j$.
Thus Lemma \ref{lem:tangent-recurrence} gives that $C^{(t+1)}$ is
invertible and the following recurrence holds for all $0\leq j\leq t$:
\begin{align}
E^{(j+1)} & =\left(\Sigma^{2}_{>k}E^{(j)}\Sigma^{-2}_{k}+L^{(j)}\right)\left(I+F^{(j)}\right)^{-1}=\left(\calT\left(E^{(j)}\right)+L^{(j)}\right)\left(I+F^{(j)}\right)^{-1}\label{eq:tangent-recurrence}
\end{align}
where we used the notation $\calT\left(M\right)=\Sigma^{2}_{>k}M\Sigma^{-2}_{k}$
defined earlier. Using the above recurrence, we obtain the following.
\begin{claim}
Let
\[
b_{R}:=\Delta Rq_{d-k}\frac{3}{2\sigma_{k}}\quad\eta_{R}:=\nu_{R}+\frac{b_{R}p_{R}}{\kappa_{k}\left(1-p_{R}\right)}\quad h_{R}:=\frac{\left(1-\kappa_{k}\right)p_{R}}{1-p_{R}}.
\]
For every $1\leq\ell\leq t+1$, we have
\begin{equation}
m_{\ell}\leq\left(1-\kappa_{k}\right)^{\ell}m_{0}+\eta_{R}+h_{R}\sum^{\ell-1}_{j=0}\left(1-\kappa_{k}\right)^{\ell-1-j}m_{j}.\label{eq:recurrence-volterra}
\end{equation}
\end{claim}

\begin{proof}
Using the identity $\left(I+F\right)^{-1}=I-F\left(I+F\right)^{-1}$,
we can write the recurrence (\ref{eq:tangent-recurrence}) as follows:
\begin{equation}
E^{(j+1)}=\calT\left(E^{(j)}\right)+L^{(j)}\underbrace{-\left(\calT\left(E^{(j)}\right)+L^{(j)}\right)F^{(j)}\left(I+F^{(j)}\right)^{-1}}_{:=R^{(j)}}=\calT\left(E^{(j)}\right)+L^{(j)}+R^{(j)}\label{eq:tangent-recurrence-separated}
\end{equation}
We rearranged the recurrence in order to separate the noise terms
depending on $\widetilde{G}^{(j)}$ into the term $L^{(j)}$ that
is linear in $\widetilde{G}^{(j)}$ and a term $R^{(j)}$ which has
a more complex dependency on $\widetilde{G}^{(j)}$. This separation
allows us to bound the contribution of the linear terms $L^{(j)}$
via the martingale argument in Lemma \ref{lem:martingale}. The terms
$R^{(j)}$ are more complex, and we will bound their contribution
by bounding the norm of each $R^{(j)}$ separately.

For each $1\leq\ell\leq t+1$, by unrolling the recurrence (\ref{eq:tangent-recurrence-separated})
and using that $L^{(j)}=\widehat{L}^{(j)}$ since $\tau>t$, we obtain
\begin{align*}
E^{(\ell)} & =\mathcal{T}^{\ell}\left(E^{(0)}\right)+\sum^{\ell-1}_{j=0}\mathcal{T}^{\ell-1-j}\left(L^{(j)}\right)+\sum^{\ell-1}_{j=0}\mathcal{T}^{\ell-1-j}\left(R^{(j)}\right)\\
 & =\mathcal{T}^{\ell}\left(E^{(0)}\right)+\sum^{\ell-1}_{j=0}\mathcal{T}^{\ell-1-j}\left(\widehat{L}^{(j)}\right)+\sum^{\ell-1}_{j=0}\mathcal{T}^{\ell-1-j}\left(R^{(j)}\right).
\end{align*}
Thus, for each $1\leq\ell\leq t+1$, we have
\[
m_{\ell}=\left\Vert E^{(\ell)}\right\Vert _{2}\leq\left\Vert \mathcal{T}^{\ell}\left(E^{(0)}\right)\right\Vert _{2}+\left\Vert \sum^{\ell-1}_{j=0}\mathcal{T}^{\ell-1-j}\left(\widehat{L}^{(j)}\right)\right\Vert _{2}+\sum^{\ell-1}_{j=0}\left\Vert \mathcal{T}^{\ell-1-j}\left(R^{(j)}\right)\right\Vert _{2}.
\]
We have
\begin{align*}
\left\Vert \mathcal{T}^{\ell}\left(E^{(0)}\right)\right\Vert _{2} & =\left\Vert \Sigma^{2\ell}_{>k}E^{(0)}\Sigma^{-2\ell}_{k}\right\Vert _{2}\\
 & \le\left\Vert \Sigma^{2\ell}_{>k}\right\Vert _{2}\left\Vert E^{(0)}\right\Vert _{2}\left\Vert \Sigma^{-2\ell}_{k}\right\Vert _{2}\\
 & =\frac{\sigma^{2\ell}_{k+1}}{\sigma^{2\ell}_{k}}m_{0}=\left(1-\kappa_{k}\right)^{\ell}m_{0}.
\end{align*}
Since $\gmart$ holds, for all $1\leq\ell\leq t+1$, we have
\[
\left\Vert \sum^{\ell-1}_{j=0}\mathcal{T}^{\ell-1-j}\left(\widehat{L}^{(j)}\right)\right\Vert _{2}\leq\nu_{R}.
\]
We have
\begin{align*}
\left\Vert R^{(j)}\right\Vert _{2} & =\left\Vert -\left(\calT\left(E^{(j)}\right)+L^{(j)}\right)F^{(j)}\left(I+F^{(j)}\right)^{-1}\right\Vert _{2}\\
 & \leq\left(\left\Vert \calT\left(E^{(j)}\right)\right\Vert _{2}+\left\Vert L^{(j)}\right\Vert _{2}\right)\left\Vert F^{(j)}\right\Vert _{2}\left\Vert \left(I+F^{(j)}\right)^{-1}\right\Vert _{2}
\end{align*}
Additionally,
\begin{align*}
\left\Vert \calT\left(E^{(j)}\right)\right\Vert _{2} & =\left\Vert \Sigma^{2}_{>k}E^{(j)}\Sigma^{-2}_{k}\right\Vert _{2}\leq\left\Vert \Sigma^{2}_{>k}\right\Vert _{2}\left\Vert E^{(j)}\right\Vert _{2}\left\Vert \Sigma^{-2}_{k}\right\Vert _{2}\\
 & =\frac{\sigma^{2}_{k+1}}{\sigma^{2}_{k}}m_{j}=\left(1-\kappa_{k}\right)m_{j}.
\end{align*}
Using that $\left\Vert V^{\top}_{>k}\widetilde{G}^{(t)}\right\Vert _{2}\leq q_{d-k}$
(since $\gclip$ holds) and Lemma \ref{lem:K-inverse}, we obtain
\begin{align*}
\left\Vert L^{(j)}\right\Vert _{2} & =\Delta R\left\Vert V^{\top}_{>k}\widetilde{G}^{(j)}\left(K^{(j)}\right)^{-1}\right\Vert _{2}\\
 & \leq\Delta R\left\Vert V^{\top}_{>k}\widetilde{G}^{(j)}\right\Vert _{2}\left\Vert \left(K^{(j)}\right)^{-1}\right\Vert _{2}\\
 & \leq\underbrace{\Delta Rq_{d-k}\frac{3}{2\sigma_{k}}}_{:=b_{R}}
\end{align*}
As shown earlier, we have $\left\Vert F^{(j)}\right\Vert _{2}\leq p_{R}$.
Since $\left\Vert F^{(j)}\right\Vert _{2}\leq p_{R}\leq\frac{\kappa_{k}}{16}<1$,
Lemma \ref{lem:neumann-series} gives $\left\Vert \left(I+F^{(j)}\right)^{-1}\right\Vert _{2}\leq\frac{1}{1-p_{R}}$.

Putting everything together, we obtain
\[
\left\Vert R^{(j)}\right\Vert _{2}\leq\left(\left(1-\kappa_{k}\right)m_{j}+b_{R}\right)\frac{p_{R}}{1-p_{R}}=:h_{R}m_{j}+f_{R},
\]
where we let $h_{R}:=$$\frac{\left(1-\kappa_{k}\right)p_{R}}{1-p_{R}}$
and $f_{R}:=\frac{b_{R}p_{R}}{1-p_{R}}$.

Thus
\begin{align*}
\left\Vert \mathcal{T}^{\ell-1-j}\left(R^{(j)}\right)\right\Vert _{2} & =\left\Vert \Sigma^{2(\ell-1-j)}_{>k}R^{(j)}\Sigma^{-2(\ell-1-j)}_{k}\right\Vert _{2}\\
 & \leq\left\Vert \Sigma^{2(\ell-1-j)}_{>k}\right\Vert _{2}\left\Vert R^{(j)}\right\Vert _{2}\left\Vert \Sigma^{-2(\ell-1-j)}_{k}\right\Vert _{2}\\
 & \leq\frac{\sigma^{2(\ell-1-j)}_{k+1}}{\sigma^{2(\ell-1-j)}_{k}}\left\Vert R^{(j)}\right\Vert _{2}\\
 & =\left(1-\kappa_{k}\right)^{\ell-1-j}\left\Vert R^{(j)}\right\Vert _{2}\\
 & \leq\left(1-\kappa_{k}\right)^{\ell-1-j}\left(h_{R}m_{j}+f_{R}\right)
\end{align*}
It follows that
\begin{align*}
m_{\ell} & \leq\left(1-\kappa_{k}\right)^{\ell}m_{0}+\nu_{R}+\sum^{\ell-1}_{j=0}\left(1-\kappa_{k}\right)^{\ell-1-j}\left(h_{R}m_{j}+f_{R}\right)\\
 & \leq\left(1-\kappa_{k}\right)^{\ell}m_{0}+\underbrace{\nu_{R}+\frac{f_{R}}{\kappa_{k}}}_{:=\eta_{R}}+h_{R}\sum^{\ell-1}_{j=0}\left(1-\kappa_{k}\right)^{\ell-1-j}m_{j}
\end{align*}
as needed.
\end{proof}

We can prove by induction that \ref{eq:recurrence-volterra} implies
the following.
\begin{claim}
For every $1\leq\ell\leq t+1$, we have
\begin{equation}
m_{\ell}\leq\left(\frac{1-\kappa_{k}}{1-p_{R}}\right)^{\ell}m_{0}+\frac{\kappa_{k}}{\kappa_{k}-h_{R}}\eta_{R}.\label{eq:recurrence-volterra-solution}
\end{equation}

\begin{proof}
We prove by induction on $\ell$ that
\[
m_{\ell}\leq\left(1-\kappa_{k}+h_{R}\right)^{\ell}m_{0}+\eta_{R}\left(1+h_{R}\sum^{\ell-1}_{j=0}\left(1-\kappa_{k}+h_{R}\right)^{j}\right).
\]
The base case $\ell=1$ follows from (\ref{eq:recurrence-volterra}).
In the inductive step, we have
\begin{align*}
m_{\ell} & \leq\left(1-\kappa_{k}\right)^{\ell}m_{0}+\eta_{R}+h_{R}\sum^{\ell-1}_{j=0}\left(1-\kappa_{k}\right)^{\ell-1-j}m_{j}\\
 & \le\left(1-\kappa_{k}\right)^{\ell}m_{0}+\eta_{R}+h_{R}\sum^{\ell-1}_{j=0}\left(1-\kappa_{k}\right)^{\ell-1-j}\left((1-\kappa_{k}+h_{R})^{j}m_{0}+\eta_{R}\left(1+h_{R}\sum^{j-1}_{r=0}(1-\kappa_{k}+h_{R})^{r}\right)\right)
\end{align*}
First, let us consider the coefficient of $m_{0}$. Since we have
$a^{\ell}=b^{\ell}+\left(a-b\right)\left(\sum^{\ell-1}_{j=0}a^{j}b^{\ell-1-j}\right)$
for all $a,b$, we can add the coefficients of $m_{0}$ and obtain
\[
\left(1-\kappa_{k}\right)^{\ell}+h_{R}\sum^{\ell-1}_{j=0}\left(1-\kappa_{k}\right)^{\ell-1-j}\left(1-\kappa_{k}+h_{R}\right)^{j}=\left(1-\kappa_{k}+h_{R}\right)^{\ell}.
\]
Next, we consider the coefficient of $\eta_{R}$. We have

\begin{align*}
 & 1+h_{R}\sum^{\ell-1}_{j=0}\left(1-\kappa_{k}\right)^{\ell-1-j}\left(1+h_{R}\sum^{j-1}_{r=0}\left(1-\kappa_{k}+h_{R}\right)^{r}\right)\\
 & =1+h_{R}\sum^{\ell-1}_{i=0}\left(\left(1-\kappa_{k}\right)^{i}+h_{R}\sum^{i-1}_{j=0}\left(1-\kappa_{k}\right)^{j}\left(1-\kappa_{k}+h_{R}\right)^{i-1-j}\right)\\
 & =1+h_{R}\sum^{\ell-1}_{i=0}\left(1-\kappa_{k}+h_{R}\right)^{i}
\end{align*}
which completes the induction.

Finally, using that $h_{R}:=\frac{\left(1-\kappa_{k}\right)p_{R}}{1-p_{R}}$
and the geometric sum formula, we obtain the claim.
\end{proof}

\end{claim}

By (\ref{eq:recurrence-volterra-solution}), for every $1\leq\ell\leq t+1$,
we have
\begin{align*}
m_{\ell} & \leq\left(\frac{1-\kappa_{k}}{1-p_{R}}\right)^{\ell}m_{0}+\frac{\kappa_{k}}{\kappa_{k}-h_{R}}\eta_{R}.
\end{align*}
Note that $b_{R}\leq\nu_{R}$, since $q_{d-k}\leq2\sqrt{d+\log(32T/\beta)}$.
The assumption $p_{R}\le\frac{\kappa_{k}}{16}$ implies $h_{R}\le p_{R}\le\frac{\kappa_{k}}{16}$,
and $\frac{p_{R}}{\kappa_{k}}\frac{b_{R}}{1-p_{R}}\le\frac{b_{R}}{15}\le\frac{\nu_{R}}{15}$
.
\[
\eta_{R}:=\nu_{R}+\frac{b_{R}p_{R}}{\kappa_{k}\left(1-p_{R}\right)}\leq\frac{16}{15}\nu_{R}
\]
and
\[
\frac{\kappa_{k}}{\kappa_{k}-h_{R}}\eta_{R}\leq\left(\frac{16}{15}\right)^{2}\nu_{R}.
\]
Thus we obtain
\[
m_{\ell}\leq\left(\frac{1-\kappa_{k}}{1-p_{R}}\right)^{\ell}m_{0}+\left(\frac{16}{15}\right)^{2}\nu_{R}\leq\left(1-\frac{\kappa_{k}}{2}\right)^{\ell}m_{0}+\left(\frac{16}{15}\right)^{2}\nu_{R},
\]
which proves that (\ref{eq:convergence}) holds for all iterations
$\ell\leq t+1$.

In particular, we have
\[
m_{t+1}\leq m_{0}+2\nu_{R}\le\frac{1}{\sqrt{15}}+\frac{1}{4}<\frac{1}{\sqrt{3}}.
\]
This concludes the main induction, including showing that $\tau>t+1$
(if $t+1=T$, $\tau=T$).

\end{proof}

Finally, we show the following corollary of the main theorem for ideal
choices of $T$ and $R$ that depend on the data. In Section \ref{sec:private-radius-selection},
we show how to privately obtain an analogous guarantee via a private
hyperparameter search.
\begin{cor}
\label{cor:martingale-convergence-corollary}Suppose that $R\geq15\left\Vert U\right\Vert _{\infty}\left(\sqrt{k}+\sqrt{\log\left(\frac{16n\left(T+1\right)}{\beta}\right)}\right)$,
$T=\left\lceil \frac{2}{\kappa_{k}}\log\left(\frac{n}{\beta\delta\epsilon}\right)\right\rceil $,
and
\[
\Delta_{k}\geq\widetilde{\Omega}\left(\frac{\sqrt{d}}{\epsilon}+\left(\frac{\sigma_{k}\sqrt{k}}{\epsilon}\right)^{2/3}+\left(\frac{R\sigma^{2}_{k}\sqrt{k}}{\epsilon}\right)^{2/3}\right),
\]
where the $\widetilde{\Omega}\left(\cdot\right)$ hides sufficiently
large poly-logarithmic factors. With probability at least $1-\beta$,
the solution $X^{(T)}$ returned by Algorithm \ref{alg:Algorithm-main}
satisfies
\[
\left\Vert \left(I-X^{(T)}X^{(T)\top}\right)V_{k}\right\Vert ^{2}_{2}\leq\left(\frac{\beta\delta\epsilon}{n}\right)^{2}+\widetilde{O}\left(\frac{d\sigma^{2}_{k}R^{2}}{\epsilon^{2}\Delta^{2}_{k}}\right).
\]
In particular, if $R=\widetilde{\Theta}\left(\sqrt{k}\left\Vert U\right\Vert _{\infty}\right)$,
we have
\[
\left\Vert \left(I-X^{(T)}X^{(T)\top}\right)V_{k}\right\Vert ^{2}_{2}\leq\left(\frac{\beta\delta\epsilon}{n}\right)^{2}+\widetilde{O}\left(\frac{dk\sigma^{2}_{k}\|U\|^{2}_{\infty}}{\epsilon^{2}\Delta^{2}_{k}}\right).
\]
\end{cor}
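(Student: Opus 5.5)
The plan is to derive the corollary directly from Theorem~\ref{thm:convergence-martingale}. There are three tasks: verify its hypotheses under the stated gap condition and parameter choices, bound the failure probability by a union bound, and plug in $T$ to evaluate the final error.

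\textbf{Verifying hypotheses.} Recall $\Delta=\widetilde{O}(\sqrt{T}/\epsilon)$ (from $\epsilon_{\mathrm{iter}}$ in Lemma~\ref{lem:clipped-iterations-privacy} with $\epsilon_{\mathrm{pow}}=\epsilon/2$), $\Delta_{\mathrm{init}}=\widetilde{O}(1/\epsilon)$, and $\gamma=\widetilde{O}(\Delta\sqrt{k})$. With $T=\widetilde{O}(1/\kappa_k)=\widetilde{O}(\sigma_k^2/\Delta_k)$ we get $\Delta=\widetilde{O}(\sigma_k/(\epsilon\sqrt{\Delta_k}))$. We then check each hypothesis in turn.
\begin{itemize}
\item[(i)] $\Delta_k\ge 8\Delta_{\mathrm{init}}(2\sqrt d+\dots)$ follows from the $\sqrt d/\epsilon$ term.
\item[(ii)] $\Delta_k\ge 8\Delta q_{\mathrm{row}}$ reads $\Delta_k\ge\widetilde{\Omega}(\sigma_k\sqrt{k}/(\epsilon\sqrt{\Delta_k}))$, i.e.\ $\Delta_k^{3/2}\ge\widetilde{\Omega}(\sigma_k\sqrt k/\epsilon)$. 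This is the $(\sigma_k\sqrt k/\epsilon)^{2/3}$ term.
\item[(iii)] $\gamma\le\sigma_k^2/4$ reads $\sigma_k\sqrt k/(\epsilon\sqrt{\Delta_k})\lesssim\sigma_k^2$. Since $\Delta_k\le\sigma_k^2$, this follows from (ii).
\item[(iv)] $p_R\le\kappa_k/16$ reads $\Delta R\sqrt k/\sigma_k\lesssim\Delta_k/\sigma_k^2$, i.e.\ $R\sigma_k^2\sqrt k/(\epsilon\sqrt{\Delta_k})\lesssim\Delta_k$. This is the $(R\sigma_k^2\sqrt k/\epsilon)^{2/3}$ term.
\item[(v)] $\nu_R\le 1/8$. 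Here $\nu_R=\widetilde{O}(\Delta R\sqrt d/(\sigma_k\sqrt{\kappa_k}))=\widetilde{O}(R\sigma_k\sqrt{d}/(\epsilon\Delta_k))$. Its square is exactly the claimed error term, so if it exceeds a constant the bound is vacuous anyway, since $s_T\le 1$ always. We may therefore assume it is small and (v) holds; alternatively, absorb it into the $\widetilde{\Omega}$ condition.
\end{itemize}
The condition on $R$ is assumed directly.

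\textbf{Probability and error.} By Lemmas~\ref{lem:clip-good-event}, \ref{lem:initial-potential}, \ref{lem:martingale}, and \ref{lem:initial-error}, each event fails with probability at most $\beta/4$. Hence they hold simultaneously with probability at least $1-\beta$. Algorithm~\ref{alg:Algorithm-main} passes $\beta/2$, so constants need slight adjustment, which is harmless. Theorem~\ref{thm:convergence-martingale} then gives $s_T\le(1-\kappa_k/2)^T m_0+2\nu_R$. Since $m_0\le 1/\sqrt{15}<1$ and $T\ge\frac{2}{\kappa_k}\log\frac{n}{\beta\delta\epsilon}$, we get $(1-\kappa_k/2)^T\le e^{-\kappa_kT/2}\le\beta\delta\epsilon/n$. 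Squaring with $(a+b)^2\le 2a^2+2b^2$ gives $s_T^2\le 2(\beta\delta\epsilon/n)^2+8\nu_R^2$. The factor of 2 is absorbed by taking $T$ slightly larger, or by accepting a constant.

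\textbf{Final substitution and main obstacle.} Since $8\nu_R^2=\widetilde{O}(d\sigma_k^2R^2/(\epsilon^2\Delta_k^2))$, we obtain the first bound. Substituting $R=\widetilde\Theta(\sqrt k\|U\|_\infty)$ gives the second, and it also turns the $R$-dependent gap term into $(k\sigma_k^2\|U\|_\infty/\epsilon)^{2/3}$. The main care point is the bookkeeping of $\Delta$'s dependence on $T$, and hence on $\kappa_k$, which produces the $2/3$ exponents. The $\log T$ factors are only polylogarithmic, since $T\le\widetilde{O}(\sigma_k^2/\Delta_k)\le\mathrm{poly}(n,\epsilon^{-1})$ by the $\sqrt d/\epsilon$ gap lower bound and $\sigma_k^2\le n$.
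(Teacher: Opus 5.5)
Your proposal is correct and follows the same route as the paper's proof. You verify the hypotheses of Theorem~\ref{thm:convergence-martingale} term by term from the three pieces of the gap condition using $\Delta=\widetilde{O}(\sigma_k/(\epsilon\sqrt{\Delta_k}))$, dispose of the case $\nu_R>1/8$ via $s_T\le 1\le 64\nu_R^2$, union-bound the four good events, and plug in $T$. One small simplification: you do not need to enlarge $T$ to absorb the factor $2$, since $m_0\le 1/\sqrt{15}$ already gives $s_T^2\le\frac{2}{15}(\beta\delta\epsilon/n)^2+8\nu_R^2$.
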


\begin{proof}
Let us verify that the assumption imply the conditions in Theorem
\ref{thm:convergence-martingale}. The choice of $T$ gives
\[
\Delta=\widetilde{O}\left(\frac{1}{\epsilon\sqrt{\kappa_{k}}}\right)=\widetilde{O}\left(\frac{\sigma_{k}}{\epsilon\sqrt{\Delta_{k}}}\right).
\]
The gap condition $\Delta_{k}\geq\widetilde{\Omega}\left(\left(\frac{\sigma_{k}\sqrt{k}}{\epsilon}\right)^{2/3}\right)$
implies $\Delta q_{\mathrm{row}}=\widetilde{O}\left(\frac{\sigma_{k}\sqrt{k}}{\epsilon\sqrt{\Delta_{k}}}\right)=O\left(\Delta_{k}\right)$
and also implies $\gamma=\widetilde{O}\left(\Delta\sqrt{k}\right)=O\left(\Delta_{k}\right)<\frac{\sigma^{2}_{k}}{4}$.

The gap condition $\Delta_{k}\geq\widetilde{\Omega}\left(\left(\frac{R\sigma^{2}_{k}\sqrt{k}}{\epsilon}\right)^{2/3}\right)$
implies
\[
p_{R}=\frac{3\Delta Rq_{k}}{2\sigma_{k}}=\widetilde{O}\left(\frac{R\sqrt{k}}{\epsilon\sqrt{\Delta_{k}}}\right)\leq\frac{\Delta_{k}}{16\sigma^{2}_{k}}=\frac{\kappa_{k}}{16}.
\]
The gap condition $\Delta_{k}\ge\widetilde{\Omega}\left(\frac{\sqrt{d}}{\epsilon}\right)$
ensures that $\Delta_{k}\geq8\Delta_{\mathrm{init}}\left(2\sqrt{d}+\sqrt{2\log\frac{8}{\beta}}\right)$.

By Lemmas \ref{lem:clip-good-event}, \ref{lem:initial-potential},
\ref{lem:martingale}, and \ref{lem:initial-error}, the events $\gclip$,
$\ginit$, $\gmart$, and the event $s_{0}\leq\frac{1}{4}$ hold jointly
with probability at least $1-\beta$.

If $\nu_{R}\leq\frac{1}{8}$, all of the conditions of Theorem \ref{thm:convergence-martingale}
are satisfied and the theorem implies the desired guarantee. Otherwise,
we have $s^{2}_{T}\leq1\leq64\nu^{2}_{R}$, and we have the same asymptotic
guarantee.

\end{proof}

\section{Private selection of the clipping radius $R$ and iteration count
$T$}

\label{sec:private-radius-selection}

Both the clipping radius $R$ and the iteration count $T$ in Corollary
\ref{cor:martingale-convergence-corollary} depend on the private
data. We can try all of possible values for them from powers of two
in an appropriate range and return the approximately best solution
using the exponential mechanism, incurring an overhead logarithmic
in the ranges (because we only consider powers of two). To be more
economical, we select them jointly using the private hyperparameter-selection
method of Papernot and Steinke \citet{PapernotSteinke2022}. A candidate
consists of a pair $(R,T)$ and the subspace produced by that run.
Each run receives the same total Renyi privacy budget, so runs with
fewer iterations use smaller per-iteration noise. The guarantee compares
the returned subspace with the best pair satisfying the utility hypotheses
in the following public grid.

Fix failure probability $\beta\in(0,1)$ and public grids of hyperparameters
\begin{align}
\mathcal{R} & =\{R_{j}=2^{j}R_{\min}:0\leq j\leq J\}, & m_{R} & =J+1,\nonumber \\
\mathcal{T}_{\mathrm{grid}} & =\{T_{\ell}=2^{\ell}T_{\min}:0\leq\ell\leq H\}, & m_{T} & =H+1,\nonumber \\
T_{\max} & =2^{H}T_{\min},R_{\max}=2^{J}R_{\min} & m & =m_{R}m_{T},\label{eq:radius-grid}
\end{align}
All endpoints are fixed independently of the data. On any dataset
satisfying the warm-start gap condition, $\Delta_{k}\geq16\Delta_{\mathrm{init}}\sqrt{d}$,
whereas $\sigma^{2}_{k}\leq\tr(A^{\top}A)\leq n$. Consequently $\kappa_{k}\geq\Delta_{\mathrm{init}}\sqrt{d}/n$
in the regime with utility guarantees. Thus we can set $T_{\min}=1,T_{\max}=\frac{2n}{\Delta_{\mathrm{init}}\sqrt{d}}\log\left(\frac{n}{\beta\delta\epsilon}\right)$.

Since $\sqrt{1/n}\le\left\Vert U\right\Vert _{\infty}\le1$, we can
set $R_{\min}=\sqrt{k/n}$. Because $\left\Vert U\right\Vert _{\infty}\le1$
, we can set $R_{\max}=15\left(\sqrt{k}+\sqrt{\log\left(\frac{16n\left(T+1\right)}{\beta}\right)}\right)$.

We run the initialization algorithm (Algorithm (\ref{alg:Algorithm-initialize}))
only once with failure probability $\beta_{0}=\beta/2$. We use this
initial solutions for all of the runs of the optimization algorithm
(Algorithm \ref{alg:Algorithm-power-method}).  The optimization
runs using different choices for radius and iteration count will use
the following parameters. For public $\Delta_{\mathrm{init}},\chi,\zeta>0$,
a run with $t$ iterations uses 
\begin{equation}
\Delta(t)=\chi\sqrt{t},\qquad\gamma(t)=4\Delta(t)\left(2\sqrt{k}+\sqrt{2\log(4t/\beta_{0})}\right)\label{eq:joint-duration-noise}
\end{equation}
The noise parameters will be adjusted and slightly different from
the even allocation between initialization and optimization in Algorithm
\ref{alg:Algorithm-main}. Let $q_{k}(t),q_{\mathrm{row}}(t),p_{R}(t),\nu_{R}(t)$
be the quantities in the analysis as functions of the iteration count
$t$, noise scale $\Delta(t)$, and failure probability $\beta_{0}$.
In particular, 
\begin{align*}
p_{R}(t) & =\frac{3\Delta(t)Rq_{k}(t)}{2\sigma_{k}},\\
\nu_{R}(t) & =\frac{3\Delta(t)R}{\sigma_{k}}\sqrt{\frac{2\bigl(d\log9+\log(32t/\beta_{0})\bigr)}{\kappa_{k}(2-\kappa_{k})}}.
\end{align*}

\subsection{A private score for comparing candidate subspaces}

In this section, we describe the score to compare the candidate subspaces.
Ideally we would like to use $\|(I-XX^{\top})V_{k}V^{\top}_{k}\|_{2}$
as the score but a more robust version of this is required to reduce
the sensitivity of the score when the inputs do not have a large eigenvalue
gap.

Recall that the initialization algorithm (Algorithm \ref{alg:Algorithm-initialize})
produces the matrix 
\[
\widehat{M}=A^{\top}A+\Delta_{\mathrm{init}}\sym(G^{(\mathrm{init})}),\qquad e_{\mathrm{w}}:=\Delta_{\mathrm{init}}\left(2\sqrt{d}+\sqrt{2\log(8/\beta)}\right).
\]
Write its eigenvalues as $\widehat{\lambda}_{1}\geq\cdots\geq\widehat{\lambda}_{d}$
and let $X^{(0)}$ be its top-$k$ eigenbasis. Set 
\begin{equation}
a=\widehat{\lambda}_{k+1}+e_{\mathrm{w}},\qquad b=\widehat{\lambda}_{k}-e_{\mathrm{w}},\qquad h=b-a.\label{eq:radius-pilot-band}
\end{equation}
If $h\leq0$, return $(R_{J},0,X^{(0)})$ immediately. Intuitively
this happens when the eigenvalue gap is too small compared with the
noise. If $h>0$, consider the following scalar function
\[
f_{a,b}(x)=\begin{cases}
0, & x\leq a,\\
(x-a)/h, & a<x<b,\\
1, & x\geq b,
\end{cases}
\]
Let $f_{a,b}(M)$ be the function that keeps all the singular vectors
of $M$ but applies the scalar function $f_{a,b}(x)$ to all the singular
values (this is called the matrix functional calculus of $f_{a,b}$).
For a candidate solution $X\in\mathbb{R}^{d\times k}$ with orthonormal
columns, its score is
\begin{equation}
\mathsf{s}_{a,b}(A,X):=\|(I-XX^{\top})f_{a,b}(A^{\top}A)\|_{2}\in[0,1].\label{eq:radius-validation-score}
\end{equation}
The following lemma bounds the sensitivity of $f_{a,b}(\cdot)$ for
all inputs regardless of the eigenvalue gap and proves that for inputs
with large singular value gaps, $f_{a,b}(\cdot)$ is identical to
the top singular value subspace.
\begin{lem}[Sensitivity and accuracy of the comparison score]
\label{lem:radius-score} For each fixed initialization with $h>0$,
and every fixed orthonormal $X$, the replacement sensitivity of $\mathsf{s}_{a,b}(A,X)$
is at most $2/h$. If $\|\widehat{M}-A^{\top}A\|_{2}\leq e_{\mathrm{w}},\Delta_{k}\geq8e_{\mathrm{w}}$
then $h\geq\Delta_{k}/2$, we have $f_{a,b}(A^{\top}A)=V_{k}V^{\top}_{k}$
and consequently, simultaneously for all $X$, $\mathsf{s}_{a,b}(A,X)=\|(I-XX^{\top})V_{k}\|_{2}$.
\end{lem}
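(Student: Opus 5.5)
The argument has two parts: a sensitivity bound and an accuracy statement. Both rest on the fact that $f_{a,b}$ is a scalar function that is $1/h$-Lipschitz, while $a,b$ depend only on the released $\widehat{M}$ and are therefore fixed once we condition on the initialization.

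\textbf{Sensitivity.} Let $A,A'$ be neighbors. Then $A^{\top}A-A'^{\top}A'=aa^{\top}-a'a'^{\top}$, which has rank at most $2$ and spectral norm at most $1$. Its trace norm is at most $\|a\|_2^2+\|a'\|_2^2\leq2$.

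Since $X$ is fixed, the reverse triangle inequality gives
\[
|\mathsf{s}_{a,b}(A,X)-\mathsf{s}_{a,b}(A',X)|\leq\|(I-XX^{\top})(f_{a,b}(A^{\top}A)-f_{a,b}(A'^{\top}A'))\|_2\leq\|f_{a,b}(A^{\top}A)-f_{a,b}(A'^{\top}A')\|_2 .
\]
So it suffices to bound the operator Lipschitz behavior of $f_{a,b}$ on symmetric matrices. Scalar Lipschitzness does not in general transfer to the operator norm, so I would use the special structure of the perturbation instead.

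I would pass through the intermediate matrix $N=A^{\top}A-a a^{\top}$, which splits the change into two PSD rank-one perturbations. For a PSD rank-one perturbation $N\to N+ww^{\top}$ and a monotone $1/h$-Lipschitz $f$, the difference $f(N+ww^{\top})-f(N)$ has controlled trace. Its trace is $\sum_i\bigl(f(\lambda_i(N+ww^{\top}))-f(\lambda_i(N))\bigr)$. By Weyl interlacing, each eigenvalue moves up and the total movement is $\|w\|^2$, so this sum is at most $\|w\|^2/h$.

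The main obstacle is converting this trace bound into a spectral bound, since the difference need not be PSD. I would try two routes. The first is to write $f_{a,b}=\frac1h\bigl((x-a)_+-(x-b)_+\bigr)$ and use the known estimate that $x\mapsto x_+$ is operator Lipschitz for trace norm on finite-rank perturbations, which gives trace-norm error at most $\|w\|^2/h$ for each piece. If that introduces constants, I would instead apply the integral representation $f(N+E)-f(N)=\int_a^b\frac1h\bigl(\mathbf{1}[N+E\succ s]-\mathbf{1}[N\succ s]\bigr)ds$. Each spectral-projector difference for a rank-one PSD $E$ has rank at most $1$ and norm at most $1$, and it is nonzero only for $s$ in a set of total measure at most $\|w\|^2$ by interlacing. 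This gives spectral error at most $\|w\|^2/h\leq1/h$ per step. Applying it to both rank-one steps, removing $aa^{\top}$ and then adding $a'a'^{\top}$, gives total sensitivity $2/h$.

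\textbf{Accuracy.} Assume $\|\widehat{M}-A^{\top}A\|_2\leq e_{\mathrm{w}}$. By Weyl's inequality (Theorem \ref{thm:weyl}),
\[
\widehat{\lambda}_k\geq\sigma_k^2-e_{\mathrm{w}},\qquad \widehat{\lambda}_{k+1}\leq\sigma_{k+1}^2+e_{\mathrm{w}} .
\]
Hence $b\geq\sigma_k^2-2e_{\mathrm{w}}$ and $a\leq\sigma_{k+1}^2+2e_{\mathrm{w}}$, so $h\geq\Delta_k-4e_{\mathrm{w}}\geq\Delta_k/2$ using $\Delta_k\geq8e_{\mathrm{w}}$.

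The same inequalities, applied in the other direction, give $b\leq\sigma_k^2$ and $a\geq\sigma_{k+1}^2$. Therefore every eigenvalue $\sigma_j^2$ with $j\leq k$ satisfies $\sigma_j^2\geq\sigma_k^2\geq b$, and every eigenvalue with $j>k$ satisfies $\sigma_j^2\leq\sigma_{k+1}^2\leq a$. The functional calculus then gives $f_{a,b}(A^{\top}A)=V_kV_k^{\top}$.

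Finally, $\|(I-XX^{\top})V_kV_k^{\top}\|_2=\|(I-XX^{\top})V_k\|_2$ because $V_k$ has orthonormal columns.
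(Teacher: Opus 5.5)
Your accuracy argument is correct and matches the paper's: Weyl's inequality gives $h\ge\Delta_k-4e_{\mathrm{w}}\ge\Delta_k/2$, $b\le\sigma_k^2$ and $a\ge\sigma_{k+1}^2$. The sensitivity argument has a genuine gap, though. Your reduction to bounding $\|f_{a,b}(A^{\top}A)-f_{a,b}(A'^{\top}A')\|_2$ is fine, but neither of your two routes for that bound works as stated.

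\textbf{Route 1.} It invokes an estimate that is false. The map $x\mapsto x_+$ (equivalently $|x|$, since $x_+=(x+|x|)/2$) is not operator Lipschitz with a dimension-free constant, in either the operator norm or the trace norm; its best constant in dimension $d$ grows like $\log d$. Restricting to PSD rank-one perturbations does not help. A dimension-free trace-norm bound for them would, by telescoping over the eigendecomposition of an arbitrary symmetric perturbation and then duality, make $x_+$ operator Lipschitz.

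\textbf{Route 2.} It rests on two false claims about the spectral projector difference $P_{>s}(N+ww^{\top})-P_{>s}(N)$ (your $\mathbf{1}[\,\cdot\succ s]$). Interlacing controls only its \emph{trace}, i.e. the change in the number of eigenvalues above $s$. It does not control the projector itself, because the eigenvectors rotate. For example, take $N=\mathrm{diag}(2,0)$ and $w=\sqrt{c/2}\,(1,1)^{\top}$. For every $s$ between $\lambda_2(N+ww^{\top})\approx c/2$ and $2$, the two spectral projectors are distinct rank-one projectors. Their difference therefore has rank $2$ and is nonzero on an interval of length about $2$, while $\|w\|_2^2=c$ can be arbitrarily small. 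With correct inputs, the integral representation only gives the trivial bound $(b-a)/h=1$.

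\textbf{Missing idea.} Scalar Lipschitzness does not transfer to the operator norm, but it does transfer to the Frobenius norm with the same constant, and that suffices here. This is the paper's route. For symmetric $M,M'$ with eigenpairs $(\lambda_i,u_i)$ and $(\lambda'_j,v_j)$, one has $\|M-M'\|_F^2=\sum_{i,j}(\lambda_i-\lambda'_j)^2(u_i^{\top}v_j)^2$. Applying the same identity to $f_{a,b}(M)-f_{a,b}(M')$, together with $|f_{a,b}(x)-f_{a,b}(y)|\le|x-y|/h$, gives $\|f_{a,b}(M)-f_{a,b}(M')\|_F\le\|M-M'\|_F/h$. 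Now use $\|\cdot\|_2\le\|\cdot\|_F$ and, for the replaced rows $a,a'$, $\|aa^{\top}-a'a'^{\top}\|_F\le\|a\|_2^2+\|a'\|_2^2\le 2$. The $2/h$ bound follows in one step. Your per-step target $\|f_{a,b}(N+ww^{\top})-f_{a,b}(N)\|_2\le\|w\|_2^2/h$ is actually true, but precisely because of this Frobenius bound and $\|ww^{\top}\|_F=\|w\|_2^2$, so the rank-one splitting is unnecessary.
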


\begin{proof}
The scalar function $f_{a,b}$ is $1/h$-Lipschitz. Consider symmetric
matrices $M,M'$ with orthonormal eigenbases $u_{i},v_{j}$ and eigenvalues
$\lambda_{i},\lambda'_{j}$ respectively. We will show that 
\begin{equation}
\sum_{i,j}(\lambda_{i}-\lambda'_{j})^{2}|u^{\top}_{i}v_{j}|^{2}=\|M-M'\|^{2}_{F}\label{eq:Frobenius-identity}
\end{equation}
Because the Frobenius norm does not change under multiplication with
an orthonormal matrix, $\|M-M'\|^{2}_{F}=\left\Vert U^{\top}\left(M-M'\right)V\right\Vert ^{2}_{F}=\sum_{ij}(u^{\top}_{i}\left(M-M'\right)v_{j})^{2}$.
By the definition of eigenvalues, $u^{\top}_{i}M=\lambda_{i}u^{\top}_{i}$
and $M'v_{j}=\lambda'_{j}v_{j}$. Thus, $u^{\top}_{i}\left(M-M'\right)v_{j}=(\lambda_{i}-\lambda'_{j})u^{\top}_{i}v_{j}$
and this proves (\ref{eq:Frobenius-identity}). Applying \ref{eq:Frobenius-identity},
we have
\begin{align*}
\|f_{a,b}(M)-f_{a,b}(M')\|^{2}_{F} & =\sum_{i,j}\bigl(f_{a,b}(\lambda_{i})-f_{a,b}(\lambda'_{j})\bigr)^{2}|u^{\top}_{i}v_{j}|^{2}\\
 & \leq h^{-2}\sum_{i,j}(\lambda_{i}-\lambda'_{j})^{2}|u^{\top}_{i}v_{j}|^{2}=h^{-2}\|M-M'\|^{2}_{F}.
\end{align*}
Consequently, for neighboring datasets $A,A'$,
\[
|\mathsf{s}_{a,b}(A,X)-\mathsf{s}_{a,b}(A',X)|\leq\|f_{a,b}(A^{\top}A)-f_{a,b}(A'^{\top}A')\|_{F}\leq\frac{\|A^{\top}A-A'^{\top}A'\|_{F}}{h}\leq\frac{2}{h}.
\]
For the accuracy claim, Weyl's inequality (Theorem \ref{thm:weyl})
gives $\sigma^{2}_{k+1}\leq a$ and $b\leq\sigma^{2}_{k}$. It also
gives $h\geq\Delta_{k}-4e_{\mathrm{w}}\geq\Delta_{k}/2$. Thus $f_{a,b}$
is one on the top $k$ eigenvalues of $A^{\top}A$ and zero on all
remaining eigenvalues. Finally, $\|(I-XX^{\top})V_{k}V^{\top}_{k}\|_{2}=\|(I-XX^{\top})V_{k}\|_{2}$. 
\end{proof}

\subsection{Joint random search and its privacy cost}

We now apply the method of \citet{PapernotSteinke2022} to our problem
with the above score function. Conditioned on the shared initialization
$\widehat{M}$ with $h>0$, define one base trial $Q$ as follows.
Sample $(j,\ell)$ uniformly from $\{0,\ldots,J\}\times\{0,\ldots,H\}$,
and run Algorithm \ref{alg:Algorithm-power-method} from $X^{(0)}$
for $T_{\ell}$ iterations, with radius $R_{j}$, noise parameter
$\Delta(T_{\ell})$, and floor $\gamma(T_{\ell})$. Denote its output
by $X$. With a fresh $Z\sim\mathcal{N}(0,1)$, form the noisy score
\begin{equation}
\widetilde{s}=\mathsf{s}_{a,b}(A,X)+\frac{2\zeta}{h}Z.\label{eq:radius-private-score}
\end{equation}
The base output is the tuple $(R_{j},T_{\ell},X,\widetilde{s})$,
ordered by smaller $\widetilde{s}$ being better, with a fixed total
ordering to break ties.

Independently sample a number of trials $N$ with 
\begin{equation}
\Pr\left[N=r\right]=\theta(1-\theta)^{r-1},\quad r\geq1,\qquad\theta=\frac{\beta}{8m},\qquad\E\left[N\right]=\frac{1}{\theta}.\label{eq:radius-geometric-search}
\end{equation}
Run the base trial $Q$ independently $N$ times with the same initialization.
Release the triple $(\widehat{R},\widehat{T},\widehat{X})$ with the
smallest noisy score.
\begin{thm}
\label{thm:radius-privacy} Define 
\begin{equation}
\varrho_{\mathrm{init}}=\frac{2}{\Delta^{2}_{\mathrm{init}}},\qquad\varrho_{\mathrm{b}}=\frac{4}{\chi^{2}}+\frac{1}{2\zeta^{2}},\qquad\varrho_{\mathrm{tot}}=\varrho_{\mathrm{init}}+\varrho_{\mathrm{b}}\label{eq:radius-rdp-costs}
\end{equation}
The selection procedure satisfies $(\lambda,\varepsilon_{\mathrm{RDP}}(\lambda))$-Rényi
differential privacy for every $\lambda>1$, with 
\begin{equation}
\varepsilon_{\mathrm{RDP}}(\lambda)\leq\lambda\varrho_{\mathrm{tot}}+4\sqrt{\varrho_{\mathrm{b}}\log(1/\theta)}+\frac{\log(1/\theta)}{\lambda-1}.\label{eq:radius-rdp-bound}
\end{equation}
In particular it is $(\epsilon,\delta)$-DP whenever 
\begin{equation}
\varrho_{\mathrm{tot}}+2\sqrt{\varrho_{\mathrm{tot}}\log(1/(\theta\delta))}+4\sqrt{\varrho_{\mathrm{b}}\log(1/\theta)}\leq\epsilon.\label{eq:radius-dp-calibration}
\end{equation}
For $\epsilon\in(0,1)$, an explicit setting of parameters is 
\begin{equation}
v=\frac{\epsilon^{2}}{100\log(1/(\theta\delta))},\qquad\Delta^{2}_{\mathrm{init}}=\frac{6}{v},\qquad\chi^{2}=\frac{12}{v},\qquad\zeta^{2}=\frac{3}{2v}.\label{eq:radius-explicit-calibration}
\end{equation}
Thus $\Delta(t)=\frac{\sqrt{1200t\log(1/(\theta\delta))}}{\epsilon}$
for each candidate iteration count $t$. 
\end{thm}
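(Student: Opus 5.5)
We prove the theorem with Rényi differential privacy throughout: first bound the RDP cost of the shared initialization and of a single base trial, then apply the Papernot--Steinke random-stopping theorem to the geometric number of trials, and finally convert RDP to $(\epsilon,\delta)$-DP.

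\textbf{Initialization cost.} By the proof of Lemma \ref{lem:private-initialization}, $\widehat{M}$ is a Gaussian mechanism with Frobenius sensitivity $2$ and noise scale $\Delta_{\mathrm{init}}$. Its RDP cost is therefore $\lambda\cdot 2^{2}/(2\Delta^{2}_{\mathrm{init}})=\lambda\varrho_{\mathrm{init}}$. The quantities $a,b,h$, $X^{(0)}$, and the early-return branch $h\le0$ are post-processing of $\widehat{M}$. We may therefore condition on $\widehat{M}$ and apply adaptive RDP composition at the end.

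\textbf{Cost of one base trial, conditioned on $\widehat{M}$.} The index $(j,\ell)$ is sampled independently of the data, so we first fix it. A run of Algorithm \ref{alg:Algorithm-power-method} with $t=T_{\ell}$ iterations and noise $\Delta(t)=\chi\sqrt{t}$ releases two Gaussian mechanisms per iteration, both of sensitivity $2$ relative to their noise scale:
\begin{itemize}
\item $H^{(s)}$, with Frobenius sensitivity $2$ and noise scale $\Delta(t)$;
\item $Z^{(s+1)}$, with sensitivity $2R$ and noise scale $\Delta(t)R$.
\end{itemize}
Each mechanism costs $\lambda\cdot 4/(2\chi^{2}t)$, so one iteration costs $4\lambda/(\chi^{2}t)$. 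Adaptive composition over $t$ iterations gives $4\lambda/\chi^{2}$, independent of $t$. This is exactly why the noise scales as $\sqrt{t}$.

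The noisy score $\widetilde{s}$ has sensitivity $2/h$ by Lemma \ref{lem:radius-score}, and its noise scale is $2\zeta/h$. It therefore costs $\lambda(2/h)^{2}/(2(2\zeta/h)^{2})=\lambda/(2\zeta^{2})$. The total cost for fixed $(j,\ell)$ is $\lambda\varrho_{\mathrm{b}}$.

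Mixing over the data-independent choice of $(j,\ell)$ does not increase the RDP bound, since Rényi divergence is jointly quasi-convex for a common mixture. One subtlety needs care here: the score function $\mathsf{s}_{a,b}$ depends on the data $A$, but $a,b$ are fixed by conditioning on $\widehat{M}$. So the sensitivity bound of Lemma \ref{lem:radius-score} applies with $h$ treated as public.

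\textbf{Random search.} This is the main step. We invoke the Papernot--Steinke theorem for a geometrically distributed number of repetitions $N$ with parameter $\theta$ (truncated negative binomial with $\eta=1$). If the base algorithm $Q$ is $(\lambda,\varepsilon_{Q}(\lambda))$-RDP for all $\lambda$, the theorem bounds the RDP of releasing the best of $N$ trials at order $\lambda$ by
\[
\varepsilon_{Q}(\lambda)+\varepsilon_{Q}(\hat\lambda)\ \text{plus a term of order}\ \frac{\log(1/\theta)}{\hat\lambda-1}+\frac{\log(1/\theta)}{\lambda-1}.
\]
We choose the free auxiliary order $\hat\lambda$ to minimize $\hat\lambda\varrho_{\mathrm{b}}+\log(1/\theta)/(\hat\lambda-1)$. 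The optimum is $\hat\lambda-1=\sqrt{\log(1/\theta)/\varrho_{\mathrm{b}}}$, which gives a contribution of at most $\varrho_{\mathrm{b}}+2\sqrt{\varrho_{\mathrm{b}}\log(1/\theta)}$. We absorb constants generously into the stated $4\sqrt{\varrho_{\mathrm{b}}\log(1/\theta)}$, using that $\varrho_{\mathrm{b}}\le\sqrt{\varrho_{\mathrm{b}}\log(1/\theta)}$ when $\varrho_{\mathrm{b}}\le\log(1/\theta)$, which holds under the calibration. Adding the initialization cost via composition yields \eqref{eq:radius-rdp-bound}. The main obstacle is matching the exact form of the Papernot--Steinke bound and its constants; I would first try to quote their Theorem 2 with $\eta=1$ directly.

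\textbf{Conversion to $(\epsilon,\delta)$-DP and calibration.} The standard conversion gives $\epsilon=\varepsilon_{\mathrm{RDP}}(\lambda)+\log(1/\delta)/(\lambda-1)$. Substituting \eqref{eq:radius-rdp-bound}, we choose $\lambda-1=\sqrt{\log(1/(\theta\delta))/\varrho_{\mathrm{tot}}}$. This yields
\[
\varrho_{\mathrm{tot}}+2\sqrt{\varrho_{\mathrm{tot}}\log(1/(\theta\delta))}+4\sqrt{\varrho_{\mathrm{b}}\log(1/\theta)},
\]
which is \eqref{eq:radius-dp-calibration}.

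For the explicit setting \eqref{eq:radius-explicit-calibration} we have:
\begin{itemize}
\item $\varrho_{\mathrm{init}}=v/3$;
\item $\varrho_{\mathrm{b}}=v/3+v/3=2v/3$;
\item $\varrho_{\mathrm{tot}}=v$.
\end{itemize}
With $L=\log(1/(\theta\delta))$ and $v=\epsilon^{2}/(100L)$, the left-hand side of \eqref{eq:radius-dp-calibration} is at most $v+2\sqrt{vL}+4\sqrt{vL}\le\epsilon/100+\epsilon/5+2\epsilon/5<\epsilon$. Finally, $\Delta(t)=\chi\sqrt{t}$ gives the stated closed form.
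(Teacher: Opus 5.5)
Your proposal follows the paper's proof step for step: the Gaussian-mechanism RDP costs (with the $\sqrt{t}$ noise scaling making each run cost $4\lambda/\chi^{2}$), the score cost $\lambda/(2\zeta^{2})$, conditioning on the initialization and composing at the end, Papernot--Steinke Theorem 2 with the geometric ($\eta=1$) distribution, and the standard RDP-to-DP conversion with the same explicit calibration.

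The one step to tighten is the constants in the random-search bound. With $\eta=1$, that theorem gives exactly
\[
\lambda\varrho_{\mathrm{b}}+2(\hat\lambda-1)\varrho_{\mathrm{b}}+\frac{2\log(1/\theta)}{\hat\lambda}+\frac{\log(1/\theta)}{\lambda-1},
\]
not $\varepsilon_Q(\lambda)+\varepsilon_Q(\hat\lambda)$ plus lower-order terms. Your choice $\hat\lambda-1=\sqrt{\log(1/\theta)/\varrho_{\mathrm{b}}}$ then bounds the middle two terms by $4\sqrt{\varrho_{\mathrm{b}}\log(1/\theta)}$ for all parameter values. So you should drop the absorption step that uses $\varrho_{\mathrm{b}}\le\log(1/\theta)$: it would prove the RDP bound only in the calibrated regime, whereas the theorem states it unconditionally.
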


\begin{proof}
A Gaussian mechanism with Euclidean sensitivity $s$ and noise standard
deviation $\sigma$ has Rényi cost $\lambda s^{2}/(2\sigma^{2})$
at order $\lambda$; these costs add under adaptive composition \citet{Mironov2017}.
The initialization has $s=2$ and $\sigma=\Delta_{\mathrm{init}}$
so it has cost $\lambda\varrho_{\mathrm{init}}$.

Fix any initialization output with $h>0$. Conditional on previous
noisy outputs, the preconditioner and clipped-update queries have
sensitivities $2$ and $2R_{j}$, with noise scales $\Delta(T_{\ell})$
and $\Delta(T_{\ell})R_{j}$, respectively. Their cost over $T_{\ell}$
steps is 
\[
\frac{4T_{\ell}\lambda}{\Delta(T_{\ell})^{2}}=\frac{4\lambda}{\chi^{2}},
\]
uniformly for any choice of $j$ and $\ell$. By the sensitivity of
the score shown in Lemma \ref{lem:radius-score}, the noisy score
additionally costs $\lambda/(2\zeta^{2})$. Thus the entire base tuple
has cost $\lambda\varrho_{\mathrm{b}}$. Each trial starts from the
same private initialization and uses independent iteration noises
so we do not count the initialization cost toward the cost of each
trial.

Apply Theorem 2 of \citet{PapernotSteinke2022} with negative-binomial
shape parameter $1$, corresponding to the geometric distribution
(\ref{eq:radius-geometric-search}). For any auxiliary order $\widehat{\lambda}>1$,
the cost of returning the best base tuple is at most 
\[
\lambda\varrho_{\mathrm{b}}+2(\widehat{\lambda}-1)\varrho_{\mathrm{b}}+\frac{2\log(1/\theta)}{\widehat{\lambda}}+\frac{\log(1/\theta)}{\lambda-1}.
\]
Taking $\widehat{\lambda}=1+\sqrt{\log(1/\theta)/\varrho_{\mathrm{b}}}$
bounds the middle two terms by $4\sqrt{\varrho_{\mathrm{b}}\log(1/\theta)}$.
This is a uniform bound for every initialization output because in
the $h\leq0$ case, we only perform postprocessing with no privacy
loss. The bound (\ref{eq:radius-rdp-bound}) then follows from adaptive
composition with the initialization.

To obtain an $(\epsilon,\delta)$-DP bound, we convert the Rényi-DP
bound, which adds a $\log(1/\delta)/(\lambda-1)$ term to $\epsilon$
\citet{Mironov2017}. Minimizing over $\lambda>1$ gives (\ref{eq:radius-dp-calibration}).
Under (\ref{eq:radius-explicit-calibration}), $\varrho_{\mathrm{tot}}=v$
and $\varrho_{\mathrm{b}}=2v/3\leq v$. Writing $D=\log(1/(\theta\delta))>\log8$,
the left-hand side of (\ref{eq:radius-dp-calibration}) is bounded
by $\epsilon^{2}/(100D)+\epsilon/5+2\epsilon/5<\epsilon$. 
\end{proof}

\subsection{Utility guarantee from hyperparameter search}

In this section we show that the algorithm recovers a good solution
if the input data satisfies the assumptions for the utility proof
without prior knowledge of $\kappa_{k},\left\Vert U\right\Vert _{\infty}$.
Let $T_{\star}=\left\lceil \frac{2}{\kappa_{k}}\log\frac{n}{\beta\delta\epsilon}\right\rceil $
and $R_{\star}=15\left\Vert U\right\Vert _{\infty}\left(\sqrt{k}+\sqrt{\log\left(\frac{16n\left(T_{\max}+1\right)}{\beta_{0}}\right)}\right)$.
 As argued before, if the input satisfies the assumptions for the
utility guarantee then $T_{\min}\le T_{\star}\le T_{\max}$ and $R_{\min}\le R_{\star}\le R_{\max}$.
Consider the smallest grid duration $T^{\dagger}\geq T_{\star}$ and
grid radius $R^{\dagger}\geq R_{\star}$. We have
\[
T_{\star}\leq T^{\dagger}\leq2T_{\star},\qquad R_{\star}\leq R^{\dagger}\leq2R_{\star}.
\]

Using a similar proof to that of Corollary \ref{cor:martingale-convergence-corollary},
we can show the following guarantee.
\begin{thm}
\label{thm:radius-utility} Let $\epsilon,\delta,\beta\in(0,1)$
and up to logarithmic factors
\[
\Delta_{k}\geq\widetilde{\Omega}\left(\frac{\sqrt{d}}{\epsilon}+\left(\frac{\sigma_{k}\sqrt{k}}{\epsilon}\right)^{2/3}+\left(\frac{R_{\star}\sigma^{2}_{k}\sqrt{k}}{\epsilon}\right)^{2/3}\right)
\]
With probability at least $1-\beta$, the selected output satisfies
\begin{equation}
\|(I-\widehat{X}\widehat{X}^{\top})V_{k}\|^{2}_{2}\leq\left(\frac{\beta\delta\epsilon}{n}\right)^{2}+\widetilde{O}\left(\frac{dk\sigma^{2}_{k}\|U\|^{2}_{\infty}+1}{\epsilon^{2}\Delta^{2}_{k}}\right).\label{eq:radius-oracle-inequality}
\end{equation}
\end{thm}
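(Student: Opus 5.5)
We condition on three events and then compare the selected output with the output of the oracle pair $(R^{\dagger},T^{\dagger})$. The events are: the initialization event, the event that the oracle pair is sampled at least once, and a good event for the run of that pair.

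\textbf{Step 1: the initialization.} With failure probability $\beta/8$, we have $\|\widehat{M}-A^{\top}A\|_{2}\le e_{\mathrm{w}}$ (Lemma \ref{lem:gaussian-norm}). The gap hypothesis $\Delta_{k}\ge\widetilde{\Omega}(\sqrt{d}/\epsilon)$, with $\Delta_{\mathrm{init}}=\widetilde{O}(1/\epsilon)$ from (\ref{eq:radius-explicit-calibration}), gives $\Delta_{k}\ge 8e_{\mathrm{w}}$. Lemma \ref{lem:radius-score} then yields $h\ge\Delta_{k}/2>0$ (so we do not exit early) and $\mathsf{s}_{a,b}(A,X)=\|(I-XX^{\top})V_{k}\|_{2}$ for every candidate $X$. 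On the same event, Lemmas \ref{lem:initial-error} and \ref{lem:initial-potential} give $s_{0}\le1/4$ and the event $\ginit$, evaluated with $T_{\max}$ in the union bound. This is why $R_{\star}$ is defined using $T_{\max}$, so that one initialization serves every run.

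\textbf{Step 2: the oracle trial appears and succeeds.} Each base trial picks $(R^{\dagger},T^{\dagger})$ with probability $1/m$. Since $N$ is geometric with mean $8m/\beta$, we have
\[
\Pr[N< r_{0}]\le r_{0}\theta,\qquad \Pr[\text{oracle pair absent in } r_{0} \text{ trials}]\le(1-1/m)^{r_{0}}.
\]
Choosing $r_{0}=m\log(16/\beta)$, both probabilities are $O(\beta)$, up to adjusting $\theta$ by a log factor (this is absorbed in $\widetilde{O}$). For that oracle trial, we check the hypotheses of Theorem \ref{thm:convergence-martingale} with $T=T^{\dagger}\in[T_{\star},2T_{\star}]$, $R=R^{\dagger}\in[R_{\star},2R_{\star}]$, and noise $\Delta(T^{\dagger})=\chi\sqrt{T^{\dagger}}=\widetilde{O}\bigl(\sigma_{k}/(\epsilon\sqrt{\Delta_{k}})\bigr)$. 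The verification follows Corollary \ref{cor:martingale-convergence-corollary} verbatim: $\gamma\le\sigma_{k}^{2}/4$, $8\Delta q_{\mathrm{row}}\le\Delta_{k}$, and $p_{R}\le\kappa_{k}/16$ all follow from the three gap terms. The events $\gclip$ and $\gmart$ cost $\beta/4$ each. The only new point is that the failure probabilities now use $\beta_{0}$ and $t$-dependent quantiles, which change only log factors. We get
\[
\|(I-X^{\dagger}X^{\dagger\top})V_{k}\|_{2}\le(1-\kappa_{k}/2)^{T^{\dagger}}m_{0}+2\nu_{R^{\dagger}}(T^{\dagger})\le\frac{\beta\delta\epsilon}{n}+\widetilde{O}\Bigl(\frac{\sqrt{dk}\,\sigma_{k}\|U\|_{\infty}}{\epsilon\Delta_{k}}\Bigr),
\]
using $\Delta R^{\dagger}\sqrt{d/\kappa_{k}}/\sigma_{k}=\widetilde{O}\bigl(\sqrt{d}\,\sigma_{k}R_{\star}/(\epsilon\Delta_{k})\bigr)$ and $R_{\star}=\widetilde{O}(\sqrt{k}\|U\|_{\infty})$. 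If $\nu_{R}>1/8$, the bound is trivial, as in the corollary.

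\textbf{Step 3: selection error.} The released $\widehat{X}$ minimizes $\widetilde{s}$ over the $N$ trials. A union bound over $N\le r_{1}=\theta^{-1}\log(8/\beta)$ trials (again losing probability $O(\beta)$) shows that all score noises satisfy
\[
\frac{2\zeta}{h}|Z|\le\frac{4\zeta}{\Delta_{k}}\sqrt{2\log(8r_{1}/\beta)}=\widetilde{O}\Bigl(\frac{1}{\epsilon\Delta_{k}}\Bigr),
\]
since $\zeta=\widetilde{O}(1/\epsilon)$. By Step 1, scores equal true errors, so
\[
\|(I-\widehat{X}\widehat{X}^{\top})V_{k}\|_{2}\le\|(I-X^{\dagger}X^{\dagger\top})V_{k}\|_{2}+\widetilde{O}\Bigl(\frac{1}{\epsilon\Delta_{k}}\Bigr).
\]
Squaring and using $(x+y+z)^{2}\le3(x^{2}+y^{2}+z^{2})$ gives (\ref{eq:radius-oracle-inequality}); the constant 3 in front of $(\beta\delta\epsilon/n)^{2}$ is absorbed by rescaling $T_{\star}$'s log. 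The "$+1$" in the numerator comes precisely from this selection noise.

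\textbf{Main obstacle.} The delicate step is Step 2's check that the shared initialization and the $t$-dependent noise $\Delta(t)$ still satisfy every hypothesis of Theorem \ref{thm:convergence-martingale} with only polylog losses. Everything else is a union bound.
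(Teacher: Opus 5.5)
Your architecture matches the paper's. You use one shared initialization on which $\|\widehat{M}-A^{\top}A\|_{2}\le e_{\mathrm{w}}$, $s_0\le 1/4$ and $\ginit$ hold, so Lemma~\ref{lem:radius-score} turns every score into the true error. You apply Theorem~\ref{thm:convergence-martingale} to the grid pair $(R^{\dagger},T^{\dagger})$ with $\Delta(T^{\dagger})=\widetilde{O}(\sigma_k/(\epsilon\sqrt{\Delta_k}))$. A selection loss of $\widetilde{O}(1/(\epsilon\Delta_k))$ then supplies the ``$+1$''.

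\textbf{Gap: the probability accounting in Step~2 does not give success probability $1-\beta$.} With the algorithm's $\theta=\beta/(8m)$ and your $r_0=m\log(16/\beta)$, you get $\Pr[N<r_0]\le r_0\theta=\frac{\beta}{8}\log\frac{16}{\beta}$. This alone exceeds $\beta$ once $\beta<16e^{-8}$.
\begin{itemize}
\item You cannot ``adjust $\theta$ by a log factor''. The value of $\theta$ is fixed in (\ref{eq:radius-geometric-search}) and enters the privacy calibration (\ref{eq:radius-explicit-calibration}).
\item A $\widetilde{O}(\cdot)$ in the error bound does not absorb a logarithmic loss in the failure probability.
\item The other budgets you list leave no room for this loss: Step~1, $\beta/4$ each for $\gclip$ and $\gmart$ of the oracle run, and the union bound in Step~3 already use up essentially all of $\beta$.
\end{itemize}

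\textbf{Fix: do not truncate $N$; use the geometric law exactly, as the paper does.} Condition on the good initialization. Each trial independently both draws $(R^{\dagger},T^{\dagger})$ and has $\gclip\cap\gmart$ hold, with probability
\[
p\ge\frac{1}{m}\Bigl(1-\frac{3\beta_0}{8}-\frac{\beta_0}{4}\Bigr)\ge\frac{1}{2m}.
\]
Hence
\[
\Pr[\text{no trial yields error}\le e_{\dagger}]=\E\bigl[(1-p)^{N}\bigr]=\frac{\theta(1-p)}{p+\theta(1-p)}\le\frac{\theta}{p}\le\frac{\beta}{4}.
\]
The point is that the per-run good events only need constant probability, because the geometric number of trials amplifies it. They are therefore never charged to the global $\beta$ budget. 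For the score noise, your truncation at $r_1$ works, as does conditioning on $N$ and using $\E N=1/\theta$; either costs $\beta/4$. The total is then $\beta/4+\beta/4+\beta/4\le\beta$.

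Your ``first oracle trial must succeed'' structure can also be kept. Replace the $r_0$-truncation by $\E[(1-1/m)^{N}]\le m\theta=\beta/8$ and charge the run's failure probability $3\beta_0/8+\beta_0/4=5\beta/16$. The total is then $15\beta/16$.

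A minor point: you do not need to rescale $T_\star$, which the statement fixes, to handle the constant $3$. Since $m_0\le 1/\sqrt{15}$, the squared first term is at most $\frac{3}{15}(\beta\delta\epsilon/n)^2$.
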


\begin{proof}
Let $T=T^{\dagger}$, $R=R^{\dagger}$, and $\Delta=\Delta(T)$. Let
us verify that the assumptions imply the conditions in Theorem \ref{thm:convergence-martingale},
with failure parameter $\beta_{0}=\beta/2$. The calibration in (\ref{eq:radius-explicit-calibration})
and the choice $T_{\star}\leq T\leq2T_{\star}$ give 
\[
\Delta_{\mathrm{init}}=\widetilde{O}\left(\frac{1}{\epsilon}\right),\qquad\Delta=\widetilde{O}\left(\frac{1}{\epsilon\sqrt{\kappa_{k}}}\right)=\widetilde{O}\left(\frac{\sigma_{k}}{\epsilon\sqrt{\Delta_{k}}}\right),\qquad\zeta=\widetilde{O}\left(\frac{1}{\epsilon}\right).
\]
The gap condition $\Delta_{k}\geq\widetilde{\Omega}\left(\left(\frac{\sigma_{k}\sqrt{k}}{\epsilon}\right)^{2/3}\right)$
implies $\Delta q_{\mathrm{row}}(T)=\widetilde{O}\left(\frac{\sigma_{k}\sqrt{k}}{\epsilon\sqrt{\Delta_{k}}}\right)\leq\Delta_{k}/8$
and also implies $\gamma(T)=\widetilde{O}\left(\Delta\sqrt{k}\right)\leq\sigma^{2}_{k}/4$,
when the hidden factors in the gap condition are sufficiently large.

The gap condition $\Delta_{k}\geq\widetilde{\Omega}\left(\left(\frac{R_{\star}\sigma^{2}_{k}\sqrt{k}}{\epsilon}\right)^{2/3}\right)$
and $R\leq2R_{\star}$ imply 
\[
p_{R}(T)=\frac{3\Delta Rq_{k}(T)}{2\sigma_{k}}=\widetilde{O}\left(\frac{R_{\star}\sqrt{k}}{\epsilon\sqrt{\Delta_{k}}}\right)\leq\frac{\Delta_{k}}{16\sigma^{2}_{k}}=\frac{\kappa_{k}}{16}.
\]
The gap condition $\Delta_{k}\geq\widetilde{\Omega}\left(\frac{\sqrt{d}}{\epsilon}\right)$
ensures that 
\[
\Delta_{k}\geq8\Delta_{\mathrm{init}}\left(2\sqrt{d}+\sqrt{2\log\frac{8}{\beta_{0}}}\right)\geq8e_{\mathrm{w}}.
\]
The choice $R\geq R_{\star}$ and $T\leq T_{\max}$ also ensure the
lower bound on $R$ in Theorem \ref{thm:convergence-martingale}.

By Lemmas \ref{lem:initial-potential} and \ref{lem:gaussian-norm},
the initialization event 
\[
\ginit\cap\left\{ \left\Vert \widehat{M}-A^{\top}A\right\Vert _{2}\leq e_{\mathrm{w}}\right\} 
\]
holds with probability at least $1-\beta_{0}/4-\beta/8=1-\beta/4$,
where $\ginit$ is defined with iteration count $T$ and failure parameter
$\beta_{0}$. On this event, Lemma \ref{lem:davis-kahan-error} gives
$s_{0}\leq1/4$, and Lemma \ref{lem:radius-score} gives $h\geq\Delta_{k}/2$
and identifies every comparison score with the true subspace error.
In the following, condition on any initialization in this event.

By the proofs of Lemmas \ref{lem:clip-good-event} and \ref{lem:martingale},
the events $\gclip$ and $\gmart$ for a trial with parameters $(R,T)$
hold jointly with conditional probability at least $1-3\beta_{0}/8-\beta_{0}/4\geq1/2$.
Here the floor in (\ref{eq:joint-duration-noise}) gives a failure
probability of at most $3\beta_{0}/8$ for $\gclip$, and we use the
martingale event from Lemma \ref{lem:martingale} with failure parameter
$\beta_{0}$; its threshold is no larger than $\nu_{R}(T)$.

If $\nu_{R}(T)\leq1/8$, all of the conditions of Theorem \ref{thm:convergence-martingale}
are satisfied on these events, and the theorem gives 
\[
s_{T}\leq\left(1-\frac{\kappa_{k}}{2}\right)^{T}m_{0}+2\nu_{R}(T)\leq\frac{\beta\delta\epsilon}{n\sqrt{15}}+2\nu_{R}(T),
\]
where we used $m_{0}\leq1/\sqrt{15}$ and $T\geq T_{\star}$. Otherwise,
we have $s_{T}\leq1\leq8\nu_{R}(T)$, and we have the same asymptotic
guarantee. Thus, in both cases, the trial produces a candidate with
true error at most 
\[
e_{\dagger}:=\frac{\beta\delta\epsilon}{n\sqrt{15}}+8\nu_{R}(T)
\]
with conditional probability at least $1/2$. Moreover, since $R\leq2R_{\star}$
and $R_{\star}=\widetilde{O}\left(\sqrt{k}\left\Vert U\right\Vert _{\infty}\right)$,
\[
\nu^{2}_{R}(T)=\widetilde{O}\left(\frac{\Delta^{2}R^{2}d}{\sigma^{2}_{k}\kappa_{k}}\right)=\widetilde{O}\left(\frac{dk\sigma^{2}_{k}\left\Vert U\right\Vert ^{2}_{\infty}}{\epsilon^{2}\Delta^{2}_{k}}\right).
\]

It remains to account for the hyperparameter search and the noisy
comparison scores. Each independent base trial samples $(R,T)$ with
probability $1/m$, so it produces a candidate of error at most $e_{\dagger}$
with probability $p\geq1/(2m)$. The probability that no such candidate
occurs before the geometric stopping time is 
\[
\E(1-p)^{N}=\frac{\theta(1-p)}{p+\theta(1-p)}\leq\frac{\theta}{p}\leq\frac{\beta}{4}.
\]
Let 
\[
t_{\mathrm{val}}=\frac{2\zeta}{h}\sqrt{2\log\frac{8}{\beta\theta}}.
\]
Each noisy score differs from its true score by more than $t_{\mathrm{val}}$
with probability at most $\beta\theta/4$. Conditioning on $N$ and
taking a union bound over the trials, then using $\E N=1/\theta$,
shows that all score errors are at most $t_{\mathrm{val}}$ with probability
at least $1-\beta/4$. If a candidate of error at most $e_{\dagger}$
occurs and all score errors satisfy this bound, the selected candidate
has true error at most $e_{\dagger}+2t_{\mathrm{val}}$.

Since $h\geq\Delta_{k}/2$ and $\zeta=\widetilde{O}(1/\epsilon)$,
we have $t^{2}_{\mathrm{val}}=\widetilde{O}\left(1/(\epsilon^{2}\Delta^{2}_{k})\right)$.
Consequently, 
\begin{align*}
\left\Vert (I-\widehat{X}\widehat{X}^{\top})V_{k}\right\Vert ^{2}_{2} & \leq\left(e_{\dagger}+2t_{\mathrm{val}}\right)^{2}\\
 & \leq\frac{2}{15}\left(\frac{\beta\delta\epsilon}{n}\right)^{2}+O\left(\nu^{2}_{R}(T)+t^{2}_{\mathrm{val}}\right)\\
 & \leq\left(\frac{\beta\delta\epsilon}{n}\right)^{2}+\widetilde{O}\left(\frac{dk\sigma^{2}_{k}\left\Vert U\right\Vert ^{2}_{\infty}+1}{\epsilon^{2}\Delta^{2}_{k}}\right).
\end{align*}
The initialization, search, and score-error events fail with total
probability at most $\beta/4+\beta/4+\beta/4\leq\beta$, proving (\ref{eq:radius-oracle-inequality}).

\end{proof}

\end{document}